\newtheorem{theorem}{Theorem}
\newtheorem{lemma}{Lemma}
\newtheorem{claim}{Claim}
\newtheorem{corollary}{Corollary}
\newtheorem{definition}{Definition}
\newtheorem{observation}{Observation}
\newtheorem{proposition}{Proposition}
\newtheorem{remark}{Remark}
\theoremstyle{definition}
\theoremstyle{definition}
\newtheorem{rrule}{Reduction Rule}
\newcommand{\cwB}{\widetilde{\mathcal{B}}}
\newcommand{\chS}{\widehat{\mathcal{S}}}
\newcommand{\wU}{\widetilde{U}}
\newcommand{\wB}{\widetilde{B}}
\newcommand{\I}{\cal{I}}
\newcommand{\mat}{$M=(E,{\cal I})$}
\newcommand{\whnd}[1]{\widehat{#1}}
\newcommand{\rep}[2] {$\widehat{{\cal #1}} \subseteq_{rep}^{#2} {\cal #1}$}
\newcommand{\bnoml}[2]{  $\binom{{#1}}{{#2}}$}
\newcommand{\repmat}[1]{$A_{#1}$}
\newcommand{\tgemrand}{$\cO\left({p+\ell \choose p} t p^\omega + t {p+\ell \choose p} ^{\omega-1} + \vert \vert A_M \vert \vert^{\cO(1)} \right)$}
\newcommand{\cO}{{\mathcal{O}}}
\newcommand{\nab}{\mathop{\triangledown}}
\crefname{invar}{invariant}{invariants}
\crefname{ineq}{inequality}{inequalities}
\crefname{constr}{constraint}{constraints}
\crefname{tbl}{table}{tables}
\crefname{lem}{lemma}{lemmata}
\crefname{lemma}{lemma}{lemmata}
\crefname{cond}{condition}{conditions}
\crefname{rrule}{reduction rule}{reduction rules}
\title{Fixed-Parameter Algorithms for Fair Hitting Set Problems}
\author{
	Tanmay Inamdar\thanks{
		Department of Informatics, University of Bergen, Norway.}
	\and
	Lawqueen Kanesh\thanks{Indian Institute of Technology, Jodhpur}
	\and
	Madhumita Kundu\addtocounter{footnote}{-2}\footnotemark{}
	\and
	Nidhi Purohit\addtocounter{footnote}{-1}\footnotemark{} 
	\and
	Saket Saurabh\addtocounter{footnote}{-1}\footnotemark{} \addtocounter{footnote}{1}\thanks{Institute of Mathematical Sciences, Chennai. 
	\\This research is supported by the European Research Council (ERC) under the European Union’s Horizon 2020 research and innovation programme (grant agreement No. 819416) Swarnajayanti Fellowship (No. DST/SJF/MSA01/2017-18).} 
}
\date{}
\newcommand{\lr}[1]{\left(#1\right)}
\newcommand{\LR}[1]{\left\{#1\right\}}
\newcommand{\nat}{\mathbb{N}}
\newcommand{\gray}[1]{\iffalse{\color{lightgray}#1}\fi}
\newcommand{\cU}{\mathcal{U}}
\newcommand{\cF}{\mathcal{F}}
\newcommand{\cB}{\mathcal{B}}
\newcommand{\cM}{\mathcal{M}}
\newcommand{\cI}{\mathcal{I}}
\newcommand{\cS}{\mathcal{S}}
\newcommand{\cR}{\mathcal{R}}
\newcommand{\cH}{\mathcal{H}}
\newcommand{\Oh}{\mathcal{O}}
\newcommand{\cG}{\mathcal{G}}
\newcommand{\cY}{\mathcal{Y}}
\newcommand{\np}{\textsf{NP}}
\newcommand{\tw}{\textup{\texttt{tw}}}
\newcommand{\bbF}{\mathbb{F}}
\newcommand{\cP}{\mathcal{P}}
\newcommand{\FO}{\textup{\textsf{FO}}\xspace}
\newcommand{\chs}{\textup{\textsc{Fair Hitting Set}}\xspace}
\newcommand{\hs}{\textup{\textsc{Hitting Set}}\xspace}
\newcommand{\kmci}{\textup{\textsc{$k$-Multicolored Independent Set}}\xspace}
\newcommand{\shs}{\textup{\textsc{Sparse Hitting Set}}\xspace}
\begin{document}

\maketitle

\begin{abstract}
	Selection of a group of representatives satisfying certain fairness constraints, is a commonly occurring scenario. Motivated by this, we initiate a systematic algorithmic study of a \emph{fair} version of \textsc{Hitting Set}. In the classical \textsc{Hitting Set} problem, the input is a universe $\cU$, a family $\cF$ of subsets of $\cU$, and a non-negative integer $k$. The goal is to determine whether there exists a subset $S \subseteq \cU$ of size $k$ that \emph{hits} (i.e., intersects) every set in $\cF$. Inspired by several recent works, we formulate a fair version of this problem, as follows. The input additionally contains a family $\cB$ of subsets of $\cU$, where each subset in $\cB$ can be thought of as the group of elements of the same \emph{type}. We want to find a set $S \subseteq \cU$ of size $k$ that (i) hits all sets of $\cF$, and (ii) does not contain \emph{too many} elements of each type. We call this problem \chs, and chart out its tractability boundary from both classical as well as multivariate perspective. Our results use a multitude of techniques from parameterized complexity including classical to advanced tools, such as, methods of representative sets for matroids, \FO model checking, and a generalization of best known kernels for \textsc{Hitting Set}. 
\end{abstract}

\section{Introduction} \label{sec:intro} 
Imagine a scenario of selecting a committee of size $k$ from a group of people $\cU$. We need a committee of people with some given attributes. These kinds of ``attribute hitting'' scenarios is modeled by a family $\cF$ over $\cU$, where for each attribute $\mathscr{A}$ , we have a  set $\cF$ containing people in $\cU$ who have the attribute $\mathscr{A}$. As is life, not always every set of people can work collectively. In particular, the committee cannot operate smoothly if we select more than the desired number of people from a set $B \subseteq \cU$. These conflicts are modeled by another family, $\cB$ over $\cU$, and a function $f: \cB \to \nat$, which says that $f(B)$ is the maximum number of people from a set $B \in \cB$ that can serve on the committee. Specifically, we want a committee that is a hitting set for attributes and has a set of people who are ``conflict free''. This paper aims to undertake a systematic study of a generalization of Hitting Set, which models such scenarios, and study this problem in the realm of parameterized complexity.

Indeed, \hs is one of the 21 problems proven to be \np-complete by Karp~\cite{karp1972}.
Recall, in this problem, we are given a set system $(\cU, \cF)$, and an integer $k$. 
Here, $\cU$ is a finite set of elements known as \emph{universe} and $\cF$ is a family of subsets of $\cU$. 
The objective is to determine whether there exists a subset $S \subseteq \cU$ such that $S$ \emph{hits} all sets in $\cF$, i.e., for every $F_i \in \cF$, $S \cap F_i \neq \emptyset$.
\hs is closely related to the \textsc{Set Cover} problem. These two problems, along with a particularly interesting special case thereof, namely that of \textsc{Vertex Cover}, are some of the most extensively studied problems in the field of approximation algorithms and parameterized complexity. \hs problem is of particular interest, because many combinatorial problems can be modeled as instances of \hs.

Motivated from real-life applications, there has been a growing interest on the fairness aspect of various problems and algorithms developed. This has led to the whole new field of algorithmic fairness. Depending on the specific application, there are numerous ways to define the notion of \emph{fairness}. One of the earliest definitions of fairness comes from \cite{LinS89a}, who defined fair versions of edge deletion problems. This was motivated from the following scenario. Suppose the graph models a communication network, with each edge being a link between a pair of nodes. In order to achieve acyclicity in the network, some links need to be disconnected. However, from the perspective of each node, it is desirable that fewest possible links incident to it are disconnected. Thus, we wish to disconnect links in a fair or equitable manner for the nodes.

Subsequently, this notion was extended by \cite{MasarikT20,KnopMT19} to define fair versions of vertex deletion problems. In this model, we want to delete a subset of vertices in order to achieve a certain graph property, such that each vertex has fewest possible neighbors deleted. As a concrete example, in a fair version of \textsc{Vertex Cover} in this model, we want to find a vertex cover $S$, such that each vertex outside $S$ has fewest neighbors in $S$. Recently, \cite{BlumDFGP22} studied a generalization of this, called \textsc{Sparse Hitting Set}. The input to \shs consists of $(\cU, \cF, \cB)$, where $\cU$ is the universe, and $\cF$ and $\cB$ are two families of subsets of $\cU$. The goal is to find a hitting set $S \subseteq \cU$ for $\cF$ such that $k \coloneqq \max_{B_i \in \cB}|B_i \cap S|$ is minimized. Here, $k$ is called the \emph{sparseness} of the solution. Note that \shs generalizes \textsc{Fair Vertex Cover} as defined above. Along a similar line, \cite{JainKM20} considered \emph{conflict-free} versions of various problems, including \hs. In \textsc{Conflict Free $d$-Hitting Set}, we are given an instance $(\cU, \cF, k)$ of \hs, and a conflict graph $H = (\cU, E)$, and the goal is to find a hitting set $S \subseteq \cU$ of size at most $k$, such that $S$ induces an independent set in the conflict graph $H$.

\paragraph*{Our Problem.}
Along the same line of work, we define a fair version of \hs, which captures all of the aforementioned problems, and much more. Formally, the problem is defined as follows.

\begin{tcolorbox}[colback=white!5!white,colframe=gray!75!black]
	\chs
	\\\textbf{Input.} An instance $\cI = (\cU, \cF, \cB, f: \cB \to \nat, k)$, where
	$\cU = \{u_1, u_2, \ldots, u_n\}$ is the universe; $\cB$ and $\cF$ are two families of subsets of $\cU$, where $\cF = \{F_1, F_2, \ldots, F_m\}$, and $\cB = \{B_1, B_2, \ldots, B_\ell\}$, and $k$ is a positive integer.
	\\\textbf{Task.} Determine whether there exists $S \subseteq \cU$, with the following properties.
	\begin{itemize}
		\item $|S| \le k$,
		\item $S$ is a \emph{hitting set} for $\cF$, i.e., for every $F_i \in \cF$, $S \cap F_i \neq \emptyset$, and
		\item For every $B_j \in \cB$, $|S \cap B_j| \le f(B_j)$.
	\end{itemize}
	We refer to a set $S \subseteq \cU$ satisfying the above properties as a \emph{fair hitting set} for  $\cF$, and use $|\cI|$ to denote the size of the instance $\cI$.
\end{tcolorbox}

We note that \chs generalizes \textsc{Sparse Hitting Set}.
Given an instance $(\cU, \cF, \cB)$ of \shs, we iteratively solve instances $\cI_i$ of \chs for $i = 1, 2, \ldots$. Here, an instance $\cI_i$ of \chs is given by $(\cU, \cF, \cB, f_i, |\cU|)$, where $f_i(B_j) = i$ for all $B_j \in \cB$. For the smallest $i$ such that $\cI_i$ is a yes-instance of \chs, we stop and conclude that $i$ is the optimal sparseness of the given instance of \shs. We note that \chs also generalizes the setting considered by \cite{JainKM20}. 

\subsection{Our Results, Techniques, and Relation to Hitting Set}
First, we observe that \chs is a generalization of \hs, by setting $\cB = \emptyset$. Thus, \chs inherits all lower bound results from \hs, namely, in general the problem is \np-hard as well as W$[2]$-hard parameterized by $k$, the solution size~\cite{CyganFKLMPPS15}. 
However, note that in the hard instances of \hs, the sets in $\cF$ can intersect arbitrarily. Indeed, consider an extreme case, when the sets in $\cF$ are pairwise disjoint. In this setting \hs is trivial to solve -- an optimal solution must contain exactly one element from each set of $\cF$. In contrast, we show that \chs remains \np-hard, as well as W$[1]$-hard w.r.t. $k$---and thus unlikely to be FPT---even in this simple setting. In particular, we show the following lower bound results that are proved in \Cref{sec:lb}.
\begin{theorem}
	\label{intro:themHard}
	\chs remains \np-hard when (1) the sets in $\cF$ are pairwise disjoint, (2) each element appears in at most two distinct $B_i$'s in $\cB$, and (3) each $B_i \in \cB$ has size exactly $2$. Furthermore, assuming ETH, it is not possible to solve \chs in time $2^{o(t)}$, where $t = \max\{|\cU|, |\cF|, |\cB|\}$.. 
	
	\chs is W$[1]$-hard when parameterized by $k$, even when the sets in $\cF$ are pairwise disjoint, and each $B_i \in \cB$ has size exactly $2$. 
\end{theorem}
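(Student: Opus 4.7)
\medskip

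\noindent\textbf{Proof plan.} The plan is to establish the three claims via two separate reductions: one from a restricted variant of 3-SAT, which yields in a single stroke both the structural NP-hardness and the $2^{o(t)}$ ETH lower bound, and a second from \kmci to obtain W$[1]$-hardness.

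For the first part, I would begin from an instance $\phi$ of 3-SAT with $n$ variables and $m$ clauses in which each variable has at most two positive and at most two negative occurrences. This variant is NP-hard via a linear-size cycle-splitting preprocessing: each variable $x$ with $t$ occurrences is replaced by copies $x^1, \ldots, x^t$ each taking over a single occurrence, and the cyclic implication chain $(\bar{x}^i \vee x^{(i \bmod t)+1})$ for $i = 1, \ldots, t$ is added, so each copy gains exactly one extra positive and one extra negative occurrence. Given such $\phi$, I build a \chs instance as follows. For every clause $C_j$ and every literal position $p$ in $C_j$, introduce a fresh element $a_{j,p}$, and put $F_j = \{a_{j,p} : p \in C_j\}$ into $\cF$; these sets form a partition of $\cU$, giving condition (1). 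For every variable $x_i$ and every pair (positive occurrence $a_{j,p}$, negative occurrence $a_{j',p'}$) of $x_i$, add $B = \{a_{j,p}, a_{j',p'}\}$ to $\cB$ with $f(B) = 1$. This immediately enforces conditions (2) and (3), since each $B$ has exactly two elements and each slot lies only in pairings with opposite-polarity slots, of which $x_i$ has at most two. Finally take $k = m$.

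Correctness is straightforward. From a satisfying assignment one picks, for each $C_j$, a satisfied literal and adds the corresponding slot to $S$; the resulting $S$ hits every $F_j$, has size $m$, and cannot contain both elements of any $B$, lest some $x_i$ be simultaneously true and false. Conversely, any fair hitting set $S$ with $|S| \le m$ picks exactly one slot per $F_j$ (as $\cF$ is a partition with $|\cF| = m$); declaring $x_i$ true whenever a positive-polarity slot of $x_i$ lies in $S$, and false whenever a negative-polarity slot does, yields a consistent assignment satisfying every clause. Since $|\cU| = O(m)$, $|\cF| = m$, and $|\cB| = O(n)$, we have $t = O(n+m)$, so a hypothetical $2^{o(t)}$ algorithm would imply a $2^{o(n+m)}$ algorithm for the restricted 3-SAT, contradicting ETH.

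For the W$[1]$-hardness I would reduce from \kmci, which is W$[1]$-hard parameterized by $k$. Given $(G, V_1 \uplus \cdots \uplus V_k)$, set $\cU = V(G)$, $\cF = \{V_1, \ldots, V_k\}$ (pairwise disjoint), and for each edge $\{u,v\} \in E(G)$ add $B_{uv} = \{u,v\}$ with $f(B_{uv}) = 1$, taking the parameter to be $k$. Any fair hitting set of size at most $k$ must contain exactly one vertex per color class and no two adjacent vertices, i.e., it is exactly a multicolored independent set, and conversely. The main obstacle I foresee is in the first reduction: preserving pairwise disjointness of $\cF$ is immediate via fresh slots, but simultaneously guaranteeing that each element lies in at most two $B$-sets requires bounding occurrences of each variable in \emph{both} polarities separately, which is precisely what the cycle-splitting preprocessing achieves while remaining linear in size, so that the ETH bound passes through cleanly.
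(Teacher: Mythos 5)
Your W$[1]$-hardness reduction is essentially identical to the paper's (reduce from \kmci: $\cU = V(G)$, $\cF$ the color classes, $\cB$ the edges as $2$-element sets with $f \equiv 1$). For the NP-hardness and ETH bound you take a genuinely different route: you reduce from $3$-SAT restricted so each variable has at most two positive and two negative occurrences, whereas the paper reduces from \textsc{Exact Rainbow Matching} on a properly edge-colored path, invoking a recent result of Bandyapadhyay et al.\ for both the NP-hardness and the $2^{o(n)}$ ETH lower bound of that source problem. Your reduction is correct and self-contained: from a satisfying assignment, choosing one satisfied slot per clause yields a hitting set that cannot violate any $B$-constraint (which would make a variable simultaneously true and false); conversely, disjointness of $\cF$ forces exactly one slot per clause, and the $B$-constraints make the induced assignment consistent. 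The bounded-occurrence restriction bounds each slot's $\cB$-frequency by $2$, giving property (2). One point you should make explicit in the ETH argument: to get $t = O(n)$, one must first invoke the sparsification lemma to arrange $m = O(n)$ before the cycle-splitting and your reduction; this is standard but worth stating since without it $m$ is not bounded linearly in $n$. Comparing the two routes: yours avoids the dependency on the \textsc{Exact Rainbow Matching} hardness result, while the paper's yields the stronger structural conclusion that $G_{\cU, \cB}$ is literally a path (recorded in the more detailed Theorem~\ref{thm:nph}, which is not required by the statement you were asked to prove). Both correctly establish the stated theorem.
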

The first result is obtained via a reduction from a problem of finding a ``rainbow matching'' on a path, and for the second result we give a parameter preserving reduction from \kmci. 
Given these lower bound results (Theorem~\ref{intro:themHard}), we study \chs under specific assumptions on the instance $\cI = (\cU, \cF, \cB, f, k)$.  A natural question is: {\em under which assumptions?} To answer this we look at the known fixed-parameter tractability results for \hs. 

\medskip\noindent\textbf{Hitting Set in Parameterized Complexity.}
\hs is known to be W$[2]$-complete parameterized by the solution size in general. In other words, under widely believed complexity theoretic assumptions, it does not admit an FPT algorithm parameterized by the solution size. This motivates the study of \hs in special cases.
One particularly interesting case is \textsc{Vertex Cover} when the size of each set in $\cF$ is exactly two. 
\textsc{Vertex Cover} is the most extensively studied problem in the parameterized complexity with a number of results in the FPT  algorithms and kernelization in general graphs as well as special classes of graphs. 
Many of the techniques and results developed for \textsc{Vertex Cover} also extend $d$-\hs, where each set in $\cF$ has size at most $d$, for some constant $d$. More generally, \hs is known to be FPT and admits a polynomial kernel in the case when the incidence graph $G_{\cU, \cF}$, which is the bipartite graph on the vertex set $\cU\uplus \cF$ with edges denoting the set-containment, is $K_{i,j}$-free. That is, no $i$ sets in $\cF$ contain $j$ elements in common, where $i$ and $j$ are assumed to be constants. This setting generalizes all the above settings as well as when the $G_{\cU, \cF}$ is $d$-degenerate (since such graphs are $K_{d+1,d+1}$-free).

\medskip\noindent\textbf{Our Algorithmic Results.} Notably, we are able to extend almost all of the fixed-parameter tractability results for \hs mentioned in the previous paragraph, under suitable assumptions on the set system $(\cU, \cB)$. We give a summary of our results in \Cref{figure:1}. 

More specifically, we obtain our results in the following steps. Consider a special case \chs, when the sets in $\cF$ are pairwise disjoint, and each element appears in at most $q$ sets in $\cB$. Note that the first part of \Cref{intro:themHard} implies that the problem is \np-hard even when $q = 2$. On the other hand, when $q = 1$, i.e., when both $\cF$ and $\cB$ are families of pairwise disjoint sets, then we observe \chs can be solved in polynomial time. Thus, $q = 1$ to $2$ is a sharp transition between the tractability of the problem. Although the problem is \np-hard even for constant values of $q$, the following results are interesting in this setting. In particular, we show that the problem is FPT, and admits a polynomial kernel parameterized by $k$, if $q$ is a constant.

\begin{restatable}{theorem}{repsetsimple}
	\label{thm:RepSetSimple}
	Let $(\cU, \cF, \cB, f: \cB \to \nat, k)$ be an instance of \chs. Then, 
	\chs can be solved in time $2^{\cO(qk)}n^{\cO(1)}$ time, when every element in $\cU$ appears in at most $q$ sets in $\cB$ and any pair of sets in $\cF$ are pairwise disjoint.  Further, \chs admits a kernel of size 
	$\cO(kq^2 {kq \choose q} \log k)$. 
\end{restatable}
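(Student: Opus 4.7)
The plan is to exploit the pairwise disjointness of $\cF$ to reformulate \chs as a ``disjoint-selection'' problem over a super-universe, and then attack it via the representative families machinery. The kernel follows from a short sequence of sunflower-style reductions. First, I would observe that since the $F_i$'s are pairwise disjoint, any fair hitting set $S$ contains at least one element of each $F_i$, and including more is wasteful, so we may assume $|S|=|\cF|=m$. Hence if $m>k$ we reject; otherwise we seek representatives $u_i \in F_i$ satisfying $|\{i : u_i \in B_j\}| \le f(B_j)$ for every $B_j$. Deleting $B_j$'s with $f(B_j)\ge k$ (vacuous since $|S|\le k$) and elements of $\cU$ outside $\bigcup_i F_i$, and capping $f(B_j)$ at $k$, we may assume $f(B_j)\in[k]$ and $\cU=\biguplus_i F_i$.

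To encode the capacities as a pure disjointness constraint, I would build a super-universe $E$ by creating $f(B_j)$ slots $(j,1),\ldots,(j,f(B_j))$ for each $B_j$. For each $u \in \cU$ with signature $\phi(u) := \{B_j : u\in B_j\}$ (of size $\le q$), define $\Phi(u):=\{\{(j,t_j):B_j\in\phi(u)\} : t_j\in[f(B_j)]\}$, padded by fresh dummy elements unique to $u$ so that every member has size exactly $q$. Set $\widetilde F_i:=\bigcup_{u\in F_i}\Phi(u)$. A direct argument shows that a fair hitting set exists iff one can choose $A_i \in \widetilde F_i$ for every $i$ with the $A_i$'s pairwise disjoint: the slot indices record an injective assignment of each chosen representative in $B_j$ to a distinct slot of $B_j$, which is possible precisely when $|S\cap B_j|\le f(B_j)$.

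I would then run the standard representative-families dynamic programming of Fomin--Lokshtanov--Panolan--Saurabh: initialize $\cA_0:=\{\emptyset\}$, and for $i=0,1,\ldots,m-1$ form the extension $\cA_{i+1}:=\{X\cup A : X\in\cA'_i,\,A\in\widetilde F_{i+1},\,X\cap A=\emptyset\}$ and then replace it by a $q(k-i-1)$-representative subfamily $\cA'_{i+1}$ of size at most $\binom{qk}{q(i+1)}\le 2^{qk}$. A standard chain argument, using the ``remaining'' part of a hypothetical solution as the test set $Y$ at each step, yields $\cA'_m\ne\emptyset$ iff the original instance is a yes-instance. Since $|\widetilde F_{i+1}|\le k^q|F_{i+1}|$, the total running time is bounded by $\sum_i 2^{qk}\cdot k^q|F_{i+1}|\cdot\mathrm{poly}(qk)=2^{\cO(qk)}n^{\cO(1)}$.

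For the kernel, I would apply: (i) delete elements outside $\bigcup_i F_i$ and $B_j$'s with $f(B_j)\ge k$; (ii) within each $F_i$, keep only one element per distinct signature $\phi(u)$; and (iii) apply a sunflower-style argument to $\{\phi(u):u\in\cU\}$ (sets of size $\le q$ over $\cB$) to bound the active portion of $\cB$ by $\cO(kq)$. After these reductions, $|\cB|=\cO(kq)$, each $F_i$ has at most $\binom{kq}{q}$ distinct signatures and hence elements, $|\cU|\le k\binom{kq}{q}$, each element is encoded by $\cO(q\log(kq))$ bits and each capacity by $\cO(\log k)$ bits, for a total size of $\cO(kq^2\binom{kq}{q}\log k)$. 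The main technical obstacle is (iii): a naive sunflower swap fails when a ``tight'' $B_l$ (with $|S\cap B_l|=f(B_l)$) sits in the \emph{core} of the sunflower, since swapping into any petal element then still violates the constraint at $B_l$. I would circumvent this by restricting to sunflowers with empty core, i.e., pairwise disjoint signatures, which exist in abundance once $|\cB|$ is large enough, and then exploit that only $\le qk$ $B_l$'s can be tight (each of the $\le k$ solution elements contributes to at most $q$ of them), so at most $qk$ petals are blocked and any sufficiently large such sunflower contains a free petal.
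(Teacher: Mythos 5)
Your FPT algorithm is correct and takes a genuinely different route from the paper. Where the paper makes $q(u)$ copies of each element and puts all copies of each $B_i$ into a single \emph{partition-matroid} block with capacity $f(B_i)$, you instead expand each $B_j$ into $f(B_j)$ explicit slots and replace each $u \in F_i$ by the family $\Phi(u)$ of all possible slot assignments, so that the fairness constraint becomes plain pairwise \emph{disjointness} over a slot universe. Your reformulation is sound (a fair hitting set corresponds exactly to a system of disjoint sets $A_i\in\widetilde F_i$), and running the Fomin--Lokshtanov--Panolan--Saurabh DP over the uniform matroid with budget $q(k-i-1)$ at stage $i+1$ is valid; the $k^q$ blow-up in $|\widetilde F_i|$ is absorbed into $2^{\cO(qk)}n^{\cO(1)}$. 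The paper's partition-matroid encoding is tighter (no $k^q$ factor in the candidate family, and it handles variable-size embeddings $\lvert{\sf embed}(u)\rvert=q(u)$ by bucketing on size), but both arrive at the same running time, and your slot encoding is arguably more elementary since it avoids partition-matroid representations and truncation.

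The kernel argument, however, has genuine gaps. The main one is circularity: your step (iii) claims to ``bound the active portion of $\cB$ by $\cO(kq)$'' and then uses that bound to conclude $|F_i|\le\binom{kq}{q}$, but a sunflower argument on $\{\phi(u)\}$ bounds the number of \emph{elements kept per $F_i$}, not $|\cB|$; after pruning one still only knows $|\cB|\le q|\cU|$, which is on the order of $kq\binom{kq}{q}$, not $\cO(kq)$. Second, the obstacle you flag is not an obstacle: if a tight $B_l$ lies in the \emph{core}, then both $u$ and any petal $u'$ lie in $B_l$, so $|S'\cap B_l|=|S\cap B_l|$ after the swap, and the constraint is untouched. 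The real restriction is only on tight sets inside the incoming \emph{petal} $\phi(u')\setminus C$, which is exactly the part your ``$\le qk$ blocked petals'' counting handles; so you do not need (and in general cannot get) empty-core sunflowers, whose existence you assert ``in abundance once $|\cB|$ is large enough'' --- a claim that is false (take all signatures containing a fixed $B_1$). A corrected version --- sunflower lemma on signatures with $qk+2$ petals and the swap argument --- does yield a bounded $|F_i|$, but only of order $q!\,(qk)^q$, i.e.\ an extra $(q!)^2$ factor over $\binom{kq}{q}$; this is hidden for constant $q$ but does not literally match the stated $\cO\bigl(kq^2\binom{kq}{q}\log k\bigr)$. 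The paper sidesteps all of this by reusing the partition matroid: it applies $\bigl(qk-j\bigr)$-representative-family computations to $\cF_i^{\sf emb}[j]$ directly (Reduction Rule~\ref{rr:elementReduction}), which gives $|F_i|\le q\binom{kq}{q}$ cleanly without any sunflower overhead, and then derives $|\cU|$ and $|\cB|$ from that.
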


Next we generalize Theorem~\ref{thm:RepSetSimple} to a scenario where  every element in $\cU$ appears in at most $q$ sets in $\cB$ and at most $d$ sets in $\cF$.

\begin{restatable}{theorem}{repsetsimpleplus}
	\label{thm:RepSetSimplePlus}
	Let $(\cU, \cF, \cB, f: \cB \to \nat, k)$ be an instance of \chs. Then, 
	\chs can be solved in time $k^{\cO(dk)}2^{\cO(qk)}n^{\cO(1)}$ time, when every element in $\cU$ appears in at most $q$ sets in $\cB$ and at most $d$ sets in $\cF$. 
\end{restatable}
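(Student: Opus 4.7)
The plan is to branch-and-reduce to the pairwise-disjoint case handled by Theorem~\ref{thm:RepSetSimple}. First, observe that every element of a putative fair hitting set $S$ of size at most $k$ belongs to at most $d$ sets of $\cF$, so $S$ covers at most $dk$ sets; if $|\cF| > dk$ I immediately return NO, and otherwise $|\cF| \le dk$. I then enumerate every function $\phi : \cF \to [k]$ with $|\phi^{-1}(i)| \le d$ for all $i$; there are at most $k^{|\cF|} \le k^{dk}$ such functions. The intended meaning is that $\phi$ assigns each $F \in \cF$ to the index of the element of $S$ that covers it, so for the true solution $S = \{s_1, \ldots, s_t\}$ (with $t \le k$), the function $\phi(F) := \min\{i : s_i \in F\}$ is among those enumerated. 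Setting $\cF_i := \phi^{-1}(i)$ and $C_i := \bigcap_{F \in \cF_i} F$, the task reduces to finding distinct elements $s_i \in C_i$ (one for each non-empty $\cF_i$) that together satisfy the fairness bounds.

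The key step is to encode this sub-problem as a pairwise-disjoint \chs instance amenable to Theorem~\ref{thm:RepSetSimple}. I construct a blown-up universe $\cU^* := \{u^{(i)} : u \in \cU,\ i \in [k]\}$ of $k$ copies per element; a new family $\cF^* := \{C_i^* : \cF_i \neq \emptyset\}$ with $C_i^* := \{u^{(i)} : u \in C_i\}$; and a new bucket family $\cB^* := \{B_j^* : B_j \in \cB\} \cup \{U_u : u \in \cU\}$ with $B_j^* := \{u^{(i)} : u \in B_j,\ i \in [k]\}$, $f^*(B_j^*) := f(B_j)$, $U_u := \{u^{(i)} : i \in [k]\}$, and $f^*(U_u) := 1$. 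The sets $C_i^*$ are pairwise disjoint because copies carrying different superscripts are distinct elements of $\cU^*$. The singleton cap $f^*(U_u) = 1$ enforces that at most one copy of each original element is used; this both supplies the ``distinct representatives'' constraint and ensures that $|S \cap B_j| = |S^* \cap B_j^*|$ under the projection $S := \{u : u^{(i)} \in S^*\ \text{for some}\ i\}$. Moreover, every copy $u^{(i)}$ lies in at most $q + 1$ sets of $\cB^*$, namely $U_u$ together with the at most $q$ duplicated $B_j^*$'s.

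Invoking Theorem~\ref{thm:RepSetSimple} on this pairwise-disjoint instance with budget $k$ thus runs in $2^{\cO((q+1)k)}(nk)^{\cO(1)} = 2^{\cO(qk)} n^{\cO(1)}$, and combining with the $k^{\cO(dk)}$ outer enumeration over $\phi$ yields the claimed $k^{\cO(dk)} 2^{\cO(qk)} n^{\cO(1)}$ running time. Correctness in one direction is immediate from the disjointness of $\cF^*$ and the bookkeeping above; in the other, the $\phi$ extracted from a genuine solution is among those enumerated, and $S^* := \{s_i^{(i)} : i \in [t]\}$ is a valid fair hitting set for the blown-up instance. The principal obstacle I anticipate is exactly the mismatch between the fixed partition of $\cF$ and the freedom to choose possibly overlapping representatives $s_i \in C_i$: the intersections $C_i$ need not be pairwise disjoint, so a direct appeal to Theorem~\ref{thm:RepSetSimple} is invalid. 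The copy-and-cap gadget above pushes the ``use at most once per original element'' constraint into $\cB^*$ at the cost of only an additive $1$ in the bucket-multiplicity parameter, so the gain from reducing to the disjoint case is preserved.
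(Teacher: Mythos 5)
Your proposal is correct, and at the high level it matches the paper's strategy: bound $|\cF|\le dk$ (else return NO), enumerate how a hypothetical solution partitions $\cF$ into parts $\cF_i$ hit by the $i$-th solution element, and then solve each guess using the machinery of \Cref{thm:RepSetSimple}. The paper's proof is only a sketch that says one should ``design a dynamic programming algorithm similar to the one employed in \Cref{thm:RepSetSimple}, where in each iteration we grow our representative family by elements that only hit sets in $\cF_i$ and not in $\cF_j$, $j>i$''---i.e., it re-opens the DP and threads the guessed partition through the transitions. You instead close the box: for each guessed $\phi$ you build a genuinely new \chs instance $(\cU^*,\cF^*,\cB^*,f^*,k)$ on $k$ superscripted copies of each element, where disjointness of $\cF^*$ is automatic and the ``one copy per element'' requirement becomes just another partition constraint $f^*(U_u)=1$, raising the $\cB$-frequency from $q$ to $q+1$. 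This gives a clean black-box invocation of \Cref{thm:RepSetSimple}, at the same asymptotic cost $k^{\cO(dk)}2^{\cO(qk)}n^{\cO(1)}$. Your write-up is, if anything, more complete than the paper's: you explicitly identify the obstacle (the $C_i$ need not be disjoint, so a direct appeal to \Cref{thm:RepSetSimple} would be wrong) and the copy-and-cap gadget that resolves it, whereas the paper leaves the correctness of the modified DP to the reader. One small point worth making explicit: if some non-empty $\cF_i$ has $C_i=\emptyset$ the blown-up instance contains an empty set in $\cF^*$ and the sub-call trivially returns NO, which is the correct behavior since that $\phi$ cannot correspond to any solution.
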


These results, Theorems~\ref{thm:RepSetSimple} and \ref{thm:RepSetSimplePlus}, are obtained by the key observation that the problem can be modeled as finding a hitting set for $\cF$ that is also an independent set in a suitably defined partition matroid that encodes the constraints imposed by $(\cU, \cB, f)$. This enables us to use the representative sets toolkit developed for matroids. This result is discussed in \Cref{sec:repsets}.

Next we  consider a generalization of the above setting, where (1) each element appears in at most $q$ sets in $\cB$, and (2)  the $G_{\cU, \cF}$ is $K_{d, d}$-free. In this case, we combine the techniques developed in \hs literature in the $K_{d, d}$-free setting, as well as, the representative sets based techniques developed in \Cref{sec:repsets}, to obtain the following result.

\begin{restatable}{theorem}{thmKdd}
	\label{thm:thmKdd}
	Given an instance $\cI = (\cU, \cF, \cB, f, k)$ of \chs, such that $G_{\cU, \cF}$ is $K_{d, d}$-free, and the frequency of each element in $\cB$ is bounded by $q$, one can find an equivalent instance $\cI' = (\cU', \cF', \cB', f', k')$ of \chs in polynomial time, such that $|\cU'| = \Oh(k^{d^2+q}d^{d}q^q)$, $|\cF'| \le dk^d$, and $|\cB'| = \Oh(k^{d^2 + q} \cdot d^{d} q^{q+1})$, where $d$ and $q$ are assumed to be constants.
\end{restatable}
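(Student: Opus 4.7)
The plan is to adapt the standard polynomial kernelization for \hs on $K_{d,d}$-free incidence graphs to the fair setting. The kernelization will proceed in two phases: first I bound $|\cF'|$ by $dk^d$ via a sunflower-type reduction, and then I mark a bounded number of representative elements inside each set of $\cF'$, exploiting both the $K_{d,d}$-free structure and the frequency bound on $\cB$.

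In Phase 1, I would apply a sunflower-style reduction rule. The $K_{d,d}$-free property implies that any $d$ sets of $\cF'$ share at most $d-1$ elements; in particular, any sunflower in $\cF'$ has core of size at most $d-1$. Whenever $|\cF'|$ exceeds the threshold $dk^d$, a standard application of the Sunflower Lemma adapted to the $K_{d,d}$-free regime yields a sunflower with $k+1$ petals. Any fair hitting set of size at most $k$ must contain an element of the core $Y$ of this sunflower, since otherwise at least one of the $k+1$ pairwise-disjoint petals would remain unhit. The reduction is then to replace the sunflower by $Y$ as a single set in $\cF'$; since this rule leaves $\cU$, $\cB$, and $f$ untouched, the $\cB$-constraints are unaffected and correctness for \chs follows from the classical argument. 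Exhaustive application reduces $|\cF'|$ to at most $dk^d$. I also delete any element of $\cU$ that lies in no set of $\cF'$: such an element can never usefully appear in a solution, and removing it only shrinks the $B_j$'s, relaxing their constraints.

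In Phase 2, I mark a subset $M_F \subseteq F$ of size $\Oh(k^{d^2-d+q} d^{d-1} q^q)$ for each $F \in \cF'$ and delete all elements of $F$ outside $M_F$; this gives $|\cU'| \le |\cF'| \cdot \Oh(k^{d^2-d+q} d^{d-1} q^q) = \Oh(k^{d^2+q} d^d q^q)$, as required. The marking is guided by two observations: (i) a fair hitting set $S$ with $|S| \le k$ touches at most $kq$ sets of $\cB$, so for any particular $F \in \cF'$, the relevant $\cB$-profile of the element chosen from $F$ lies in a subfamily of size at most $kq$; (ii) by the $K_{d,d}$-free property, the number of distinct $\cF'$-local profiles an element of $F$ can have, namely the collection of sets in $\cF' \setminus \{F\}$ containing it, is polynomially bounded in $k$ and $d$. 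Combining these, I mark, for each combinatorial type defined by a relevant $\cF'$-local profile and a $\cB$-signature of size at most $q$, a small number of elements of $F$, and argue via an exchange argument that any fair hitting set can be rerouted to use only marked elements. Finally, after removing all unmarked elements from $\cU'$ and from every $B_j$ that contains them, I delete now-empty $B_j$'s; since each remaining element occurs in at most $q$ of them, $|\cB'| \le q|\cU'| = \Oh(k^{d^2+q} d^d q^{q+1})$.

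The main obstacle is designing the marking rule in Phase 2 so that both the claimed size bound and the exchange argument simultaneously go through. Unlike in the classical \hs kernel, elements of a set $F$ are not interchangeable here: two elements with different $\cB$-signatures play different roles in a fair hitting set, and a careless marking could delete an element necessary to avoid overloading some $B_j$. I expect to navigate this by combining the representative-sets machinery for partition matroids developed in \Cref{sec:repsets}, which captures the $\cB$-side constraints, with the $K_{d,d}$-free structural bounds on how elements of $F$ are distributed across the remaining sets of $\cF'$, thereby bounding the number of element types that must be preserved and keeping the marking polynomial-sized.
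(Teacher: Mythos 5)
Your two-phase structure (bound $|\cF'|$, then prune elements per combinatorial type using the partition matroid) matches the paper's high-level plan, and your Phase~2 is essentially the paper's Reduction Rule~\ref{rr:elementRedUsingB}: use representative sets w.r.t.\ the partition matroid of \Cref{subsec:matroid}, applied to the family of ``copies'' of elements with a fixed $\cF'$-neighborhood (the paper calls this ${\sf ExactNbr}(\cY)$), and the arithmetic $|\cF'| \le dk^d$, per-type bound $\binom{kq}{q}$, $|\cB'| \le q|\cU'|$ indeed yields the claimed sizes. You correctly identify that the delicate point is the exchange argument, and the matroid machinery is exactly what discharges it.

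The genuine gap is in Phase~1. You claim that when $|\cF'| > dk^d$, ``a standard application of the Sunflower Lemma adapted to the $K_{d,d}$-free regime yields a sunflower with $k+1$ petals.'' The Erd\H{o}s--Rado Sunflower Lemma fundamentally requires \emph{bounded set sizes}: from $> p!\, a^p$ sets each of size at most $a$ one extracts a sunflower with $p+1$ petals. In a $K_{d,d}$-free set system the sets in $\cF$ can be arbitrarily large, so the lemma does not apply, and ``adapted to the $K_{d,d}$-free regime'' is not a known substitute. Also note that the remark ``any sunflower in $\cF'$ has core of size at most $d-1$'' is only true for sunflowers with at least $d$ petals; two sets can share arbitrarily many elements without creating a $K_{d,d}$. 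The paper avoids this entirely by a degree argument (Claim~\ref{claim:highdegreeintersectssolution} / Reduction Rule~\ref{rr:degreereduction}): if some $u$ lies in $\ge dk^{d-1}$ sets, one greedily builds a set $X$ of size $\le d-1$ with $|\bigcap_{v\in X}\cF(v)| \ge dk^{d-|X|}$ such that every fair hitting set of size $\le k$ meets $X$; one then either forces $u$ into the solution ($|X|=1$, decrementing $k$) or replaces all supersets of $X$ in $\cF$ by $X$ itself ($|X|>1$). Once no such $u$ exists, every element has $\cF$-degree $< dk^{d-1}$ and $|\cF| > dk^d$ forces a \textsc{No} answer, since $k$ elements hit $< k\cdot dk^{d-1}$ sets. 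If you want to salvage the sunflower phrasing, you would need to prove a replacement structural lemma for $K_{d,d}$-free families; as written, the step is unsupported.

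A secondary, smaller issue: your marking is performed per set $F \in \cF'$, but the soundness argument must account for elements that occur in several sets of $\cF'$ (the degree-$\ge d$ elements of $\cU$). The paper handles these separately via Observation~\ref{obs:largedegree} ($|A_{\ge d}| \le d\binom{|\cF|}{d}$ by $K_{d,d}$-freeness) and restricts the representative-set pruning to elements with $\cF$-degree $\le d-1$, grouped by their exact $\cF$-neighborhood $\cY$. Your ``$\cF'$-local profile'' gesture is heading there, but without fixing the grouping by exact $\cF'$-neighborhood the exchange argument (swapping $u$ for some $u'$ must preserve \emph{which} sets of $\cF'$ are hit) does not go through cleanly.
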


Finally, we reach our most general case, where suppose (1) the $(\cU, \cB)$ incidence graph is ``nowhere dense'' (defined formally in \Cref{subsec:nowhere-prelims}; this class includes planar, excluded minor, bounded degree, and bounded expansion graphs), and (2) the $(\cU, \cF)$ incidence graph is $K_{d, d}$-free. In this case, we obtain an FPT algorithm, parameterized by $k$ and $d$. This result is in two steps. First, we proceed as prior to the case when each element appears in $f(k, d)$ sets of $\cF$ (cf. \Cref{thm:thmKdd}). Next, since $(\cU, \cB)$ incidence graph is nowhere dense, we reduce the problem of finding a \chs to \FO model checking procedure on nowhere dense graphs, which is known to be FPT in the size of the formula. In particular, we show that the problem can be encoded by a variant of {\sc Induced Subgraph Isomorphism} on nowhere dense graphs, where the size of the host graph we are searching for can be bounded by a function of $k$, $d$ and the graph class.

\begin{restatable}{theorem}{nowhere} \label{thm:nowhere-dense}
	Let $\cG$ be a nowhere dense graph class. Let $\cI = (\cU, \cF, \cB, f, k)$ be an instance of \chs such that the incidence graph $G \coloneqq G_{\cU, \cB} \in \cG$, and $G_{\cU, \cF}$ is $K_{d, d}$-free for some $d \ge 1$. Then, one can solve \chs on $\cI$ in time $h(k, d) \cdot |\cI|^{\Oh(1)}$, for some function $h(\cdot, \cdot)$. 
\end{restatable}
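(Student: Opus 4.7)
The plan is to combine the $K_{d,d}$-free kernelization ideas behind \Cref{thm:thmKdd} with the theorem of Grohe, Kreutzer, and Siebertz that \FO model checking on any nowhere dense graph class is fixed-parameter tractable parameterized by the length of the sentence. The strategy has two steps. First, I preprocess $\cI$ so that $|\cF|$ is bounded by a function of $k$ and $d$. Second, I encode the existence of a fair hitting set of size $k$ as an \FO sentence of length depending only on $k$ and $d$ over a vertex-coloured enrichment of the nowhere dense graph $G_{\cU, \cB}$, and invoke the Grohe--Kreutzer--Siebertz theorem on this enrichment.

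For the first step, I would reuse the $\cF$-side reduction rule from the proof of \Cref{thm:thmKdd}, which exploits only the $K_{d, d}$-freeness of $G_{\cU, \cF}$ and reduces to an equivalent instance in which every element of $\cU$ belongs to at most $g(k, d)$ sets of $\cF$. Crucially, this rule does not inspect $\cB$, so it applies here even without a frequency bound on the $\cB$-side. Since any fair hitting set of size at most $k$ can then hit at most $k \cdot g(k, d)$ sets of $\cF$, we may assume $|\cF| \le k \cdot g(k, d) =: h_1(k, d)$, and otherwise answer NO immediately.

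For the second step, I enrich $G_{\cU, \cB}$ with constantly many additional unary predicates: a colour to distinguish $\cU$- from $\cB$-vertices; for each $F \in \cF$ a unary predicate $C_F$ true exactly on the elements of $F$; and for each $j \in \{0, 1, \ldots, k\}$ a unary predicate $D_j$ true on those $B \in \cB$ with $\min\{f(B), k\} = j$ (capacities exceeding $k$ are vacuous). Since $|\cF| \le h_1(k, d)$, the total number of new predicates depends only on $k$ and $d$, and augmenting a nowhere dense class with boundedly many unary predicates preserves nowhere denseness, as unary predicates do not alter the underlying graph structure. The sentence $\varphi_{k, d}$ asserts the existence of $x_1, \ldots, x_k$ in the $\cU$-colour class such that (i) for every $F \in \cF$ some $x_i$ satisfies $C_F$, and (ii) for every $B$-vertex $b$ and every $j$ with $D_j(b)$, no $(j+1)$-subset of $\{x_1, \ldots, x_k\}$ is entirely contained in the neighbourhood of $b$. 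Its length depends only on $h_1(k, d)$ and $k$, hence only on $k$ and $d$.

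Applying the Grohe--Kreutzer--Siebertz theorem to $\varphi_{k, d}$ on the enriched graph then solves \chs in time $h(k, d) \cdot |\cI|^{\Oh(1)}$, as claimed. The principal obstacle is the first step: one must extract from the proof of \Cref{thm:thmKdd} a reduction rule whose correctness uses only the $K_{d, d}$-freeness of $G_{\cU, \cF}$ and does not rely on the bounded-frequency assumption on $\cB$ made there. Once this is done, the \FO encoding is essentially routine, modulo verifying that nowhere-denseness is preserved by the additional unary predicates and that the capacity constraints of $\cB$ can indeed be captured by the finite colour palette $\{0, 1, \ldots, k\}$ together with a bounded disjunction over $(j+1)$-subsets of $\{x_1, \ldots, x_k\}$.
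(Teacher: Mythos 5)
Your proposal follows essentially the same route as the paper: preprocess using the $K_{d,d}$-freeness of $G_{\cU,\cF}$ to bound $|\cF|$ by a function of $k$ and $d$, enrich $G_{\cU,\cB}$ with a bounded number of unary predicates recording $\cF$-membership and (capped) $f$-values, write an \FO sentence of length bounded in $k$ and $d$, and invoke the Grohe--Kreutzer--Siebertz \FO model-checking theorem. Two remarks.

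First, the step you flag as ``the principal obstacle'' is already resolved in the paper: \Cref{rem:preproc} observes explicitly that \cref{rr:emptyB}--\ref{rrule:f-bounded} use no property of $(\cU,\cB)$ beyond the ability to delete elements and sets, so they preserve any hereditary property on $G_{\cU,\cB}$, and in particular membership in a nowhere dense class (which is subgraph-closed). So you can quote the paper's reduction verbatim; there is nothing to re-extract.

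Second, your \FO sentence as written has a correctness bug. You existentially quantify a fixed tuple $x_1,\ldots,x_k$ and then forbid, for each $b$ with $D_j(b)$, every $(j+1)$-element index set $T\subseteq[k]$ from having all $x_i$, $i\in T$, adjacent to $b$. \FO does not force the $x_i$ to be distinct, and you need to allow repeats to encode solutions of size $<k$; but then a repeated witness inflates the \emph{index} count of neighbours of $b$ above the true multiplicity $|S\cap B|$, and your conjunct becomes unsatisfiable on a genuine yes-instance. The fix is routine: either enumerate the intended solution size $k'\le k$ and require pairwise distinctness among $x_1,\ldots,x_{k'}$ (which is what the paper's iteration over valid pairs $(k',\bbF)$ accomplishes), or strengthen each forbidden clause to additionally assert that the $j+1$ witnesses indexed by $T$ are pairwise distinct. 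Either repair keeps the sentence length bounded in $k$ and $d$ and restores the equivalence with \chs.
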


\begin{figure}
	\centering
		\begin{tabular} { |p{0.3cm}|p{4cm}|p{4cm}|p{4cm}|  }
			\hline
			No. & $\cG_{\cU, \cB}$ &	$\cG_{\cU, \cF}$ & Results    \\
			\hline
			$1$	& 	$q = 1$  & $d=1$  & Polynomial time \\
			& 	($B_i's$\text{\ are disjoint}) & ($F_i's$ \text{\ are disjoint}) &   \\
			\hline
			$2$	& 	$q =2$    &  $d=1$ & NP-Hard  \\	
			&    &       & No sub-exp algo(ETH)  \\
			
			\hline
			$3$		& 	$q$ & $d=1$ & $2^{\Oh(qk)} \cdot |\cI|^{\Oh(1)}$(\Cref{thm:RepSetSimple})\\
			& 	 &       &   \\
			&  & &	 $\Oh(kq^{2} \binom{kq}{q}\log k)$(kernel) \\
			& 	  &       & (Theorem~\ref{thm:kernel})  \\
			
			\hline
			$4$	& 	$q$  & $d$   &      $k^{\Oh(dk)}2^{qk}|\cI|^{\Oh(1)}$\\
			& 	     &       & (Theorem~\ref{thm:RepSetSimplePlus})  \\ 
			\hline
			$5$	& 	$q$  &  $K_{d,d}$-free   & $k^{\Oh(d^{2}+q)} d^{d}  q^{q+1}$(kernel)  \\
			& 	  &       & (Theorem~\ref{thm:thmKdd}) \\ 
			\hline
			$6$	& 	apex-minor free  &  $K_{d,d}$-free   & FPT$/(k+d)$  (\Cref{thm:tw-apex-minorfree}) \\
			\hline
			$7$	& 	nowhere dense   &  $K_{d, d}$-free &  FPT$/(k+d)$ (Theorem~\ref{thm:nowhere-dense}) \\
			\hline
			$8$		&    $K_{2,2}$-free    & $d=1$& $W[1]$-Hard$/k$ (Theorem~\ref{thm:w1hard})\\
			\hline
		\end{tabular}
	\caption{An overview of different results obtained in this paper. In the second (resp.\ third) column, we state the assumption on the set system $(\cU, \cB)$ (resp.\ $(\cU, \cF)$). In rows 1-5 (resp.\ rows 1-4) $q$ (resp.\ $d$) denotes the maximum frequency of an element in $\cB$ (in $\cF$). In the last column, we mention our results in the respective settings, and give corresponding references. }
	\label{figure:1}
\end{figure}

\begin{figure}
	\centering
		\includegraphics[scale=0.4]{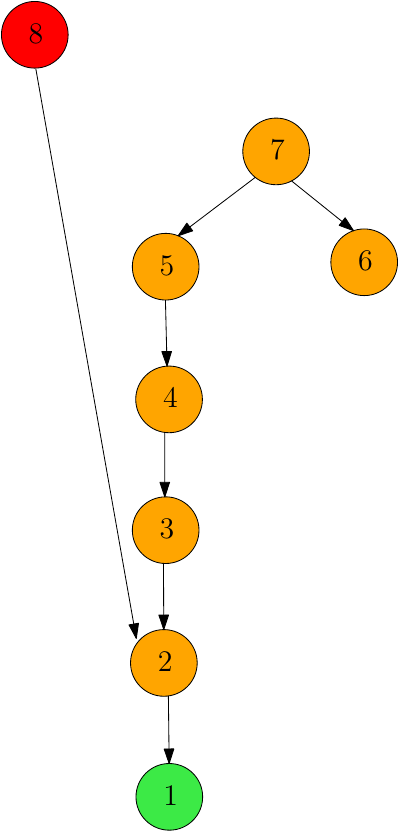}
		\caption{A Hasse diagram of the settings considered in \Cref{figure:1}, where the number in each node corresponds to the row in the table. An arrow from node $i$ to node $j$ indicates that the setting in row $i$ generalizes the setting in row $j$. Nodes colored in green, orange (resp. red) color indicate that the setting is solvable in polynomial, FPT time (resp. is $W[1]$-hard).} \label{figure:2}
\end{figure}
\section{Preliminaries} \label{sec:prelims}
For an integer $\ell \ge 1$, we use the notation $[\ell] \coloneqq \{1, 2, \ldots, \ell\}$. Let $\cR=(\cU, \cS)$ be a \emph{set system}, where $\cU$ is a finite set of elements (also called the \emph{ground set} or the \emph{universe}), and $\cS$ is a family of subsets of $\cU$. For an element $u \in \cU$, and any $\cS' \subseteq \cS$, we use the notation $\cS'(u) \coloneqq \{S \in \cS': u \in S\}$, i.e., $\cS'(u)$ is the sub-family of sets from $\cS'$ that contain $u$. For a subset $R \subseteq \cU$, we denote $\cS - R \coloneqq \{ S \setminus R : S \in \cS \}$. We use $G_{\cU, \cS}$ to denote the incidence graph corresponding to the set system $(\cU, \cS)$, i.e., $G_{\cU, \cS}$ is a bipartite graph with bipartition $\cU \uplus \cS$, such that there is an edge between an element $e \in \cU$ and a set $S \in \cS$ iff $e \in S$. 

In this paper, we work with finite, simple, undirected graphs. We use the standard graph theoretic notation and terminology, as defined in \cite{DiestelGT}. In the following subsections, we give additional preliminaries.

\subsection{Preliminaries on Tree Decompositions} \label{subsec:tw-prelims}
\noindent\textbf{Tree Decompositions and Treewidth}
A \emph{tree decomposition} of a graph $G$ is a pair $\mathcal{T} = (T, \beta)$, where $T$ is a tree, and $\beta: V(T) \to 2^{V(G)}$ satisfying the following properties. 
\begin{itemize}
	\item $\bigcup_{t \in V(T)} \beta(t) = V(G)$.
	\item For every edge $\{u, v\} \in V(G)$, there exists a $t \in V(T)$, such that $\{u, v\} \subseteq \beta(t)$.
	\item For every $u \in V(G)$, the set $T_u = \{t \in V(T): u \in \beta(t) \}$ induces a connected subtree of $T$.
\end{itemize}
We refer to vertices of $T$ as \emph{nodes} in order to distinguish them vertices of $G$, and the set $\beta(t)$ is called the \emph{bag} of $t$. The \emph{width} of a tree decomposition $(T, \beta)$ is defined to be $\max_{t \in V(T)} |\beta(t)| - 1$. The \emph{treewidth} of a graph $G$ is the minimum possible width of a tree decomposition of $G$, which we denote by $\tw(G)$. 

\noindent\textbf{Nice Tree Decompositions.} It is often convenient to deal with a more structured tree decomposition, called \emph{nice tree decomposition}. We think of tree decomposition $(T, \beta)$ as a rooted tree, with a distinguished node $r \in V(T)$ called the \emph{root}. 
Furthermore, the tree decomposition satisfies the following properties.
\begin{itemize}
	\item $\beta(r) = \emptyset$.
	\item Every non-leaf node of $T$ is of one of following three types:
	\begin{itemize}
		\item \textbf{Introduce node.} A node $t$ with exactly one child $t'$ such that $\beta(t) = \beta(t') \cup \{v\}$ for some $v \not\in \beta(t')$. We say that $v$ is \emph{introduced at $t$}.
		\item \textbf{Forget node.} A node $t$ with exactly one child $t'$ such that $\beta(t) = \beta(t') \setminus \{w\}$ for some $w \in \beta(t')$. We say that $w$ \emph{is forgotten} at $t$.
		\item \textbf{Join node.} A node $t$ with exactly two children $t_1, t_2$ such that $\beta(t) = \beta(t_1) = \beta(t_2)$. 
	\end{itemize} 
\end{itemize}
It is known that any tree decomposition of $G$ can be converted into a nice tree decomposition of $G$ of the same width in polynomial time.

For a node $t \in V(T)$, all nodes $t'$ in the subtree rooted at $t$ (including $t$ itself) are called \emph{descendants} of $t$. We note that in a nice tree decomposition $(T, \beta)$, if $v \in V(G)$ is forgotten at $t \in V(T)$, then it must be introduced at a descendant $t'$ of $t$, such that $t' \neq t$.

\subsection{Preliminaries for Nowhere Dense Graphs} \label{subsec:nowhere-prelims}
The following definitions pertaining to the class of nowhere dense graphs are taken verbatim from \cite{LokshtanovPSSZ20}. 

\begin{definition}[Shallow minor]
	A graph~$M$ is an \emph{$r$-shallow minor} of~$G$, where~$r$ is an
	integer, if there exists a set of disjoint subsets $V_1, \ldots,
	V_{|M|}$ of~$V(G)$ such that
	\begin{enumerate}
		\item Each graph $G[V_i]$ is connected and has radius at most~$r$,
		and
		\item There is a bijection $\psi \colon V(M) \to \{V_1, \ldots,
		V_{|M|}\}$ such that for every edge $uv \in E(M)$ there is an edge
		in~$G$ with one endpoint in $\psi(u)$ and second in $\psi(v)$.
	\end{enumerate}
	The set of all $r$-shallow minors of a graph~$G$ is denoted by~$G
	\nab r$.  Similarly, the set of all $r$-shallow minors of all the
	members of a graph class $\cal G$ is denoted by 
	${\cal G} \nab r =\bigcup_{G \in {\cal G}} (G \nab r)$.
\end{definition}
We first introduce the definition of a nowhere dense graph class; let $\omega(G)$ denote the size of the largest clique in $G$ and $\omega({\cal G})=\sup_{G\in \cal G} \omega(G)$.

\begin{definition}[Nowhere dense]\label{def:nd}
	A graph class $\cal G$ is \emph{nowhere dense} if there exists a
	function $f_\omega \colon \mathbb{N} \to \mathbb{N}$ such that for
	all~$r$ we have that $\omega({\cal G} \nab r) \leq f_\omega(r)$.
\end{definition}

\subsection{Background on First-Order Logic} \label{subsec:FO-prelims}

In this section, we discuss preliminaries and background on first-order logic (\FO). Almost all of this exposition is taken verbatim from \cite{GroheKS17}. For additional background, we point the reader to \cite{GroheKS17} and references therein.

A \emph{(relational) vocabulary} is a finite set of relation symbols, each with a prescribed arity. Let $\sigma$ be a vocabulary. Then, a $\sigma$-structure $A$ consists of (i) a finite set $V(A)$, called the \emph{universe} or \emph{vertex set} of $A$, and (ii) for each $k$-ary relation symbol $R \in \sigma$, a $k$-ary relation $R(A) \subseteq V(A)^k$. 

Let $A$ be a $\sigma$-structure. For a subset $X \subseteq V(A)$, the \emph{induced substructure} of $A$ with universe $X$ is the $\sigma$-structure $A[X]$ with $V(A[X]) = X$, and $R(A[X]) = R(A) \cap X^k$ for every $k$-ary $R \in \sigma$. For $\sigma' \subseteq \sigma$, the $\sigma'$-restriction of $A$ is the $\sigma'$-structure $A'$ with $V(A') = V(A)$ and $R(A') = R(A)$ for all $R \in \sigma'$. 

For example, a graph $G$ can be thought of as $\{E\}$-structure. This notion is generalized as follows. A \emph{colored graph vocabulary} consists of the binary relation symbol $E$, and possibly finitely many unary relation symbols. A $\sigma$-colored graph is a $\sigma$-structure, whose $\{E\}$-restriction is a simple undirected graph. We call the $\{E\}$-restriction of a $\sigma$-colored graph the \emph{underlying graph} of $G$.

\emph{First-order formulas} of vocabulary $\sigma$ are formed from atomic formulas $x = y$, and $R(x_1, x_2, \ldots, x_k)$, where $R \in \sigma$ is a $k$-ary relation symbol, and $x, y, x_1, \ldots, x_k$ are variables (assuming an infinite supply of variables). Atomic formulas can be combined by Boolean connectives $\neg$ (negation), $\land$ (conjunction), $\lor$ (disjunction), and existential and universal quantification $\exists x, \forall x$, respectively. The set of all first-order formulas of vocabulary $\sigma$ is denoted by $\FO[\sigma]$, and the set of all first order formulas by \FO. The free variables of a formula are the variables that are not in the scope of a quantifier, and we write $\varphi(x_1, \ldots, x_k)$ to indicate that the free variables of the formula $\varphi$ are among $x_1, \ldots, x_k$. A sentence is a formula without free variables.

To define semantics, a satisfaction relation $\models$ is defined inductively as follows. For a $\sigma$-structure $A$, a formula $\varphi(x_1, \ldots, x_k)$, and elements $a_1, \ldots, a_k \in V(A)$, $A \models \varphi(a_1, \ldots, a_k)$ means that $A$ satisfies $\varphi$ if the free variables $x_1, \ldots, x_k$ are interpreted by $a_1, \ldots, a_k$, respectively. If $\varphi(x_1, \ldots, x_k) = R(x_1, \ldots, x_k)$ is an atomic formula, then $A \models \varphi(a_1, \ldots, a_k)$ iff $(a_1, \ldots, a_k) \in R(A)$. The meanings of equality symbol, the Boolean connectives, and the quantifiers are the usual ones.

Whenever a $\sigma$-structure occurs as the input of an algorithm, we assume that it is finite, and is encoded in a suitable way. Similarly, we assume that the formulas $\varphi$ appearing as input are encoded suitably. We denote the length of the encoding of $\varphi$ by $|\varphi|$.

Finally, we will use the following theorem from \cite{GroheKS17}.
\begin{theorem} \label{thm:fo-nowhere}
	For every nowhere dense class $\cal G$, every $\epsilon > 0$, every colored-graph vocabulary $\sigma$, and every first-order formula $\varphi(x) \in \FO[\sigma]$, there is an algorithm that, given a $\sigma$-colored graph $G$ whose underlying graph $G \in \cal G$, computes the set of all $v \in V(G)$ such that $G \models \varphi(v)$ in time $\Oh(n^{1+\epsilon})$.
\end{theorem}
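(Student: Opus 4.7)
The plan is to shrink $|\cF|$ by $K_{d,d}$-free kernelization, encode the resulting \chs question as a single \FO sentence over an enriched version of $G \coloneqq G_{\cU, \cB}$, and then invoke \Cref{thm:fo-nowhere}.

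First, I apply the part of the kernelization underlying \Cref{thm:thmKdd} that bounds $|\cF| \le dk^d$. This portion relies solely on the $K_{d,d}$-freeness of $G_{\cU, \cF}$, not on any frequency bound in $\cB$, and the rules it uses only delete elements or sets (or return NO). Hence the incidence graph $G_{\cU, \cB}$ is only shrunk to a subgraph and, since nowhere dense classes are closed under subgraphs, remains in $\cG$. Let $N := dk^d$ denote the resulting bound on $|\cF|$.

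Next, I build a colored enrichment $G'$ whose underlying graph is the reduced $G_{\cU, \cB} \in \cG$. The colored-graph vocabulary $\sigma$ contains the edge relation $E$ together with unary predicates (i) $U$ marking $\cU$-vertices, (ii) $P_1, \ldots, P_N$ with $P_i(u) \equiv u \in F_i$, and (iii) $C_0, \ldots, C_{k-1}$ with $C_j(b) \equiv (b \in \cB \text{ and } f(b) = j)$. The key reason only $k$ color classes suffice is that when $f(B) \ge k$, the constraint $|S \cap B| \le f(B)$ is automatic because $|S| \le k$, so such $B$'s need not be tracked. Then write the \FO sentence $\varphi := \exists s_1, \ldots, s_k\, \bigl(\bigwedge_{i=1}^{k} U(s_i) \land \Phi_{\mathrm{hit}} \land \Phi_{\mathrm{fair}}\bigr)$, where
\begin{align*}
\Phi_{\mathrm{hit}} &:= \bigwedge_{i=1}^{N} \bigvee_{j=1}^{k} P_i(s_j), \qquad \Phi_{\mathrm{fair}} := \forall b\, \bigwedge_{j=0}^{k-1} \bigl(C_j(b) \to \psi_{\le j}(b; \vec s)\bigr),\\
\psi_{\le j}(b; \vec s) &:= \neg \bigvee_{I \in \binom{[k]}{j+1}} \Bigl[\bigwedge_{\substack{\alpha, \beta \in I \\ \alpha \ne \beta}} s_\alpha \ne s_\beta \;\land\; \bigwedge_{\alpha \in I} E(s_\alpha, b)\Bigr].
\end{align*}
The witness slots $s_1, \ldots, s_k$ are allowed to coincide, so the candidate solution $S := \{s_1, \ldots, s_k\}$ satisfies $|S| \le k$; $\Phi_{\mathrm{hit}}$ forces $S$ to hit each $F_i$, while $\psi_{\le j}$ forbids $j+1$ pairwise-distinct witnesses from being simultaneously adjacent to $b$, which is precisely $|S \cap B| \le j = f(B)$ for $B$ with $C_j(B)$.

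Finally, since \Cref{thm:fo-nowhere} is stated for formulas with one free variable, I pad to $\varphi'(x) := \varphi \land (x = x)$ and run the algorithm on $G'$; the returned set equals $V(G')$ iff $\varphi$ is satisfied. Because $|\varphi|$ depends only on $N$ and $k$, hence only on $k$ and $d$, the overall running time is $h(k, d) \cdot |\cI|^{1+\epsilon}$ for any $\epsilon > 0$, as required.

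\textbf{Main obstacle.} The principal difficulty is forcing $|\varphi|$ to depend only on $k$ and $d$ even though $|\cB|$, the range of $f$, and the $\cB$-frequencies of individual elements are all unbounded in the input. The two ingredients that make this work are the truncation $f(b) \mapsto \min(f(b), k)$ (so only $k+1$ color classes on $\cB$ matter, and the saturated one is dropped) and the collapse of the per-$B$ case analysis into the single universal quantifier $\forall b$ inside $\Phi_{\mathrm{fair}}$. A subsidiary check is that Stage~1 truly only deletes vertices and edges from $G_{\cU, \cB}$ so it remains in $\cG$; this is immediate from the way the $K_{d,d}$-free kernelization used in the proof of \Cref{thm:thmKdd} proceeds.
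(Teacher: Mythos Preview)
There is a fundamental mismatch here. \Cref{thm:fo-nowhere} is not proved in the paper at all: it is the first-order model checking theorem of Grohe, Kreutzer, and Siebertz, quoted verbatim from \cite{GroheKS17} and used as a black box. The paper offers no proof of it, nor could it---the result is a deep structural theorem about nowhere dense classes whose proof spans an entire paper.

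Your proposal is not a proof of \Cref{thm:fo-nowhere}; it \emph{invokes} \Cref{thm:fo-nowhere} in its final step, so as a proof of that statement it is circular. What you have actually written is a proof sketch of \Cref{thm:nowhere-dense} (the paper's FPT result for \chs when $G_{\cU,\cB}$ is nowhere dense and $G_{\cU,\cF}$ is $K_{d,d}$-free). If that was the intended target, then your approach is essentially the same as the paper's: first bound $|\cF|$ using the $K_{d,d}$-free reduction (the paper does this via \Cref{thm:kdd-free} or equivalently the rules preceding \Cref{rrule:f-bounded}), then encode the bounded-$|\cF|$ instance as an \FO sentence over a colored version of $G_{\cU,\cB}$, and finally call \Cref{thm:fo-nowhere}. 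Your encoding differs cosmetically---you use a single sentence with possibly repeated witnesses $s_1,\ldots,s_k$ and a universal quantifier over $b$, whereas the paper enumerates ``valid pairs'' $(k',\bbF)$ and builds one formula per pair---but both encodings have size bounded by a function of $k$ and $|\cF|$, hence of $k$ and $d$, and both correctly capture the hitting and fairness constraints.

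In short: as a proof of the stated \Cref{thm:fo-nowhere}, the proposal is vacuous (circular); as a proof of \Cref{thm:nowhere-dense}, it is correct and follows the paper's route with only minor differences in the \FO encoding.
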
 
\section{FPT Algorithm and Kernel Based on Representative Sets} \label{sec:repsets}

In this section we design an algorithm and a kernel for a special case of \chs, using methods based on representative sets~\cite{FominLPS16,Marx09}.  Let $(\cU, \cF, \cB, f: \cB \to \nat, k)$ be an instance of \chs.
The first special case we consider is the following: every element in $\cU$ appears in at most $q$ sets in $\cB$ and any pair of sets in $\cF$ are pairwise disjoint. 

Before this, however, we consider the special case of $q = 1$, i.e., when any pair of sets in $\cB$, as well as that in $\cF$ are disjoint. In this case, we can solve the problem in polynomial time, by reducing it to the problem of finding maximum flow in an auxiliary directed graph, defined as follows. The vertices of the graph are $\cB \uplus \cU \uplus \cF \uplus \{s, t\}$. First, we add arcs (i.e., directed edges) from source $s$ to each $B_j \in \cB$, with capacity $f(B_j)$. Next, for every $u \in B_j$, we add an arc $(B_j, u)$ of capacity $1$. Similarly, for each $u \in F_i$, we add an arc $(u, F_i)$, of capacity $1$. Finally, we add arcs $(F_i, t)$ of capacity $\infty$. It is straightforward to show that there exists a flow of value $k$ in the graph iff there exists a fair hitting set of size $k$. We omit the details.

Note that $q \ge 2$, but the sets in $\cF$ are pairwise disjoint, the problem is \np-hard. In this case, To design both our algorithm and the kernel we first embed the fairness constraints imposed by $\cB$ in a combinatorial object called a \emph{partition matroid}. A partition matroid is a set system $\cM = (E, \cI)$, defined as follows. The ground set $E$ is partitioned into $\ell$ subsets $E_1 \uplus E_2 \uplus \ldots \uplus E_\ell$, such that a set $S \subseteq E$ belongs to the family $\cI$ iff for each $1 \le j \le \ell$, it holds that $|E_j \cap S| \le k_j$, where $k_1, k_2, \ldots, k_\ell$ are non-negative integers. 

It might be observed that the definition of a partition matroid closely resembles the fairness constraints, i.e., for each $B_j $, the hitting set $H$ must satisfy $|H \cap B_i| \le f(B_i)$. However, this idea does not quite work, since the sets $B_j \in \cB$ are not disjoint -- indeed, otherwise we could solve the problem in polynomial time, as discussed earlier. Nevertheless, we can salvage the situation by making $q$ distinct copies of every element $u \in \cU$, and replacing each of the occurrences of $u$ in $q$ distinct $B_j$'s with a unique copy. The resulting set system is a partition matroid that exactly captures the fairness constraints. Correspondingly, in each set of $\cF$, we replace an original element with all of its $q$ copies. Recall that we want to find a hitting set for $\cF$; however, in the new formulation, we must now ensure that if we pick \emph{at least} one copy of element in the solution, we pick \emph{all} of its copies in the solution. 

Thus, our solution is an independent set of $\cM$ that (1) is a hitting set for $\cF$, and (2) picks either $0$ or $q$ copies of every element. To find such a solution in time FPT in $k$ and $q$ (resp.\ to reduce the size of the instance), we use a sophisticated tool developed in parameterized complexity, called \emph{representative sets}. Later, we generalize this idea to the case where very element in $\cU$ appears in at most $q$ sets in $\cB$ \emph{and} at most $d$ sets in $\cF$. In the next section, we formally define the partition matroid, and in the subsequent sections, we apply the toolkit of representative sets to design our FPT algorithm and the kernel.

\subsection{Partition Matroid and Our Solution}\label{subsec:matroid}  In a partition matroid we have a universe $\wU$, partitioned into $\widetilde{U}_1, \cdots ,\widetilde{U}_\ell$, together with positive integers $k_1,\cdots, k_\ell$, and a family of independent sets $\cal I$, such that $X\subseteq \widetilde{U}$ is in $\cI$ if and only if  $|X\cap  \widetilde{U}_i|\leq k_i$, $i\in [\ell]$. 

Let $(\cU,\cB)$ be the given set system such that each element $u \in \cU$ appears in {\em at most}  $q$ sets of $\cB$. For an element   $u \in \cU$, let $q(u)\leq q$ denote the number of sets in $\cB$, that $u$ appears in. Further, for an element $u \in \cU$, let ${\sf copies}(u) = \{ u^1, u^2, \ldots, u^{q(u)} \}$.  We define 
$$ \widetilde{U} =\bigcup_{u\in \cU} {\sf copies}(u) .$$
Next, we need to define a partition of $ \widetilde{U}$. Towards this we use the information about the sets in $\cal B$. We know that each element $u\in B$ appears in $q(u)$ sets and we have made $q(u)$ copies of $u$, thus we use {\em distinct} and {\em unique} copy of $u$ in each sets in $\cB$ in which $u$ appears. This results in $\cwB = \{\wB_i : B_i \in \cB \}$, where $\wB_i$ is the set corresponding to an original set $B_i$, after replacing elements with their copies.  Observe that for every pair of indices $i\neq j$ we have that  $\wB_i \cap \wB_j=\emptyset$ and $\cup_{i} \wB_i =\widetilde{U}$. This immediately gives a partition of  $\widetilde{U}$. Finally, we define $k_i= f(B_i)$. This completes the description of the partition matroid we will be using. We will call this matroid as ${\cal M}=(\widetilde{U}, \cI)$

Given a subset $X \subseteq \widetilde{U}$, we define a set associated with $X$, called  ${\sf projection}(X)$ as follows. The set 
${\sf projection}(X) \subseteq U$, contains an element $u\in U$ if and only if ${\sf copies}(u)\cap X\neq \emptyset$.  Similarly, we define a notion of  embedding. For a set $A \subseteq \cU$, let ${\sf embed}(A) =  \cup_{ u \in A}{\sf copies}(u)$. 
This brings us to the following lemma which relates our problem and finding an independent set in the matroid. 

\begin{lemma}
	\label{lem:mainIset}
	An input 
	$(\cU, \cF, \cB, f: \cB \to \nat, k)$ is a yes-instance if and only if there exists an independent set $X\in \cI$ of the matroid ${\cal M}=(\widetilde{U}, \cI)$ such that (1) $|{\sf projection}(X)|\leq k$, (2) $X = {\sf embed}({\sf projection}(X))$, and (3) ${\sf projection}(X)$ is a hitting set for $\cF$. 
\end{lemma}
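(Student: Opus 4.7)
The plan is to prove the two directions of the equivalence by constructing an explicit correspondence between fair hitting sets in the original instance and independent sets of $\cM$ satisfying the three listed properties. The key observation driving both directions is that, by construction of $\widetilde{U}$ and of the partition $\{\wB_i\}_{i}$, every copy $u^j$ of an element $u \in \cU$ belongs to exactly one block $\wB_i$ of the partition, and that block corresponds to a set $B_i \in \cB$ with $u \in B_i$. Consequently, for any $A \subseteq \cU$ and any $B_i \in \cB$, we have $|{\sf embed}(A) \cap \wB_i| = |A \cap B_i|$, since each element $u \in A \cap B_i$ contributes exactly one copy to $\wB_i$.

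For the forward direction, assume $S \subseteq \cU$ is a fair hitting set with $|S| \le k$. I would set $X \coloneqq {\sf embed}(S)$ and verify the three conditions in turn. First, ${\sf projection}(X) = S$ directly from the definitions of ${\sf embed}$ and ${\sf projection}$, so $|{\sf projection}(X)| = |S| \le k$, giving (1), and (2) follows immediately as $X = {\sf embed}(S) = {\sf embed}({\sf projection}(X))$. Condition (3) is just the fact that $S$ is a hitting set for $\cF$. Independence in $\cM$ is where the observation above is applied: for each $i$, $|X \cap \wB_i| = |S \cap B_i| \le f(B_i) = k_i$, so $X \in \cI$.

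For the reverse direction, assume $X \in \cI$ satisfies (1)--(3), and set $S \coloneqq {\sf projection}(X)$. Then $|S| \le k$ by (1), and $S$ is a hitting set for $\cF$ by (3). For the fairness constraint, property (2) says $X = {\sf embed}(S)$, so applying the observation again, $|S \cap B_i| = |{\sf embed}(S) \cap \wB_i| = |X \cap \wB_i| \le k_i = f(B_i)$ for every $B_i \in \cB$, where the inequality uses $X \in \cI$. Hence $S$ is a fair hitting set.

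I do not anticipate a substantial obstacle here: the lemma is essentially a bookkeeping statement translating the fairness constraints into matroid independence, and condition (2) is precisely the clause that rules out ``partial'' selections of copies, ensuring the two sides of the bijection line up. The only point that warrants a careful word is the bijective identification $|{\sf embed}(A) \cap \wB_i| = |A \cap B_i|$, which rests on the fact that the copies of $u \in \cU$ were assigned injectively to the distinct sets of $\cB$ containing $u$, so I would state and justify this identity once at the beginning and then invoke it in both directions.
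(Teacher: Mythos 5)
Your proof is correct and takes essentially the same approach as the paper: both directions proceed by the ${\sf embed}$/${\sf projection}$ correspondence, with the independence/fairness equivalence resting on the per-block count identity. The only cosmetic difference is that you state the identity $|{\sf embed}(A)\cap \wB_i|=|A\cap B_i|$ up front and invoke it directly, whereas the paper reaches the same conclusion in each direction by a short contradiction argument.
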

\begin{proof}
	In the forward direction, let $S \subseteq \cU$ be a solution to the original problem. Suppose $S = \{s_1, s_2, \ldots, s_k\}$. 
	Consider $X= {\sf embed}(S)$.
	Observe that by definition, $S = {\sf projection}(X)$.  So $|{\sf projection}(X)| = k$ and ${\sf projection}(X)$ is a hitting set for $\cF$. We claim that $X$ is an independent set because if not, there exists a part $\wB_i$ which satisfies that $|X \cap \wB_i| > k_i = f(B_i)$. As $S = {\sf projection}(X)$ and for every $u \in \wB_i$, $B_i$ contains a $v$ such that $u \in {\sf copies}(v)$, this results in $|S \cap B_i| > k_i$ implying $|S \cap B_i| > f(B_i)$ which contradicts that $S$ is a valid solution to the original problem.
	
	In the reverse direction, let $X \in \cI$ be an independent set satisfying both the conditions. Let $S = {\sf projection}(X)$. We claim that, $S$ is a solution for the original instance because if not, there exist a part $B_i$ which satisfies that $|S \cap B_i| > f(B_i) = k_i$. As $X= {\sf embed}(S)$ and for every $u \in B_i$, an unique copy from ${\sf copies}(u)$ is contained in $\wB_i$, which results in $|X \cap \wB_i| > k_i$ which contradicts that $X$ is an independent set. 
\end{proof}

\subsection{Computation of the Desired Independent Set}
In this section we give an algorithm to compute an independent set $X\in \cI$ of the matroid 
${\cal M}=(\widetilde{U}, \cI)$ such that $|{\sf projection}(X)|\leq k$ and ${\sf projection}(X)$ is a hitting set for $\cF$ 
(as given by Lemma~\ref{lem:mainIset}). We will design a dynamic programming algorithm based on representative families to compute the desired independent set.  Towards this, we first give the required definitions. We start with the definition of an {\em $\ell$-representative family}.  

\begin{definition}[{\bf $\ell$-Representative Family}]
	Given a matroid  \mat{} and a family $\cal S$ of subsets of $E$, we say that a subfamily $\widehat{\cal{S}}\subseteq \cal S$ 
	is {\em $\ell$-representative} for $\cal S$ 
	if the following holds: for every set $Y\subseteq  E$ of size at most $\ell$, if there is a set $X \in \cal S$ disjoint from $Y$ with $X\cup Y \in \I$, then there is a set $\whnd{X} \in \whnd{\cal S}$ disjoint from $Y$ with $\whnd{X} \cup  Y \in \I$.  If $\hat{\cal S} \subseteq {\cal S}$ is $\ell$-representative for ${\cal S}$ we write \rep{S}{\ell}. 
\end{definition}

In other words if some independent set in $\cal S$ can be extended to a larger independent set by adding $\ell$ new elements, then there is a set in 
$\widehat{\cal S}$ that can be extended by the same $\ell$ elements. We say that a family  $ \cS = \{S_1,\ldots, S_t\}$ of 
sets is a {\em $p$-family} if each set in $\cal S $ is of size $p$.

\begin{proposition}{\rm \cite[Theorem~$3.8$,~Theorem $1.3$]{FominLPS16,LokshtanovMPS18}}
	\label{thm:repsetlovaszrandomized}
	Let \mat{}   be a partition matroid, $ \cS = \{S_1,\ldots, S_t\}$ be a $p$-family of independent sets. Then there exists \rep{S}{\ell}  of size \bnoml{p+\ell}{p}.  
	Furthermore, given a representation \repmat{M}  of $M$ over a field $ \mathbb{F}$, there is a deterministic algorithm  computing  \rep{S}{\ell}  of size at most   \bnoml{p+\ell}{p} in   \tgemrand \, operations over $ \mathbb{F}$, 
	where $\vert \vert A_M \vert \vert$ denotes the length of $A_M$ in the input. 
\end{proposition}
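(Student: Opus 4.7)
The plan is to follow the standard representative-sets framework of Marx and Fomin--Lokshtanov--Panolan--Saurabh, which reduces the problem to a question about the $p$-th exterior power of a vector space. First I would fix, or compute, a matrix $A_M$ over $\bbF$ representing $M$, and then truncate it to have only $r = p+\ell$ rows while preserving independence of all subsets of size at most $p+\ell$. For a partition matroid this truncation can be produced explicitly and deterministically in polynomial time: within each part $E_j$ one can place, e.g., a Vandermonde-style block of $k_j$ vectors in general position, so that any union of transversals of sizes at most $k_1,\dots,k_\ell$ is linearly independent. The target bound $\binom{p+\ell}{p}$ is precisely $\dim \bigwedge^p \bbF^r$.

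Next, for each independent set $S_i \in \cS$ of size $p$, I would form the wedge vector $v_i \coloneqq \bigwedge_{e \in S_i} A_M[e] \in \bigwedge^p \bbF^r$, which is nonzero exactly because $S_i$ is independent. The key algebraic fact behind all representative-set constructions is that for an independent $p$-set $X$ and a $Y$ of size $\ell$, $X \cup Y \in \cI$ if and only if $v_X \wedge v_Y \neq 0$ in $\bigwedge^{p+\ell}\bbF^r$. From this, any subfamily $\widehat{\cS} \subseteq \cS$ whose associated wedge vectors span the same subspace as those of $\cS$ is automatically $\ell$-representative: expanding $v_X$ in the basis coming from $\widehat{\cS}$ yields some $\widehat{X}$ with $v_{\widehat{X}} \wedge v_Y \neq 0$. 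Since the ambient space has dimension $\binom{p+\ell}{p}$, such a subfamily of that size always exists, which is the existence statement.

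For the algorithmic part, I would process the sets $S_1, \dots, S_t$ one at a time, maintaining a row-echelon basis $B$ of the span of the vectors $v_{S_i}$ already selected into $\widehat{\cS}$. Computing one $v_{S_i}$ amounts to evaluating all maximal minors of a $p \times r$ matrix, which by fast matrix multiplication takes $\Oh(p^\omega)$ operations, giving $\Oh(t\,p^\omega)$ in total over all sets. Testing and updating $B$ can be batched so that $t$ Gaussian-elimination updates into a basis of ambient dimension $\binom{p+\ell}{p}$ take $\Oh(t \binom{p+\ell}{p}^{\omega-1})$ operations; the deterministic truncation preprocessing contributes the $\lVert A_M \rVert^{\Oh(1)}$ term. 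Combining these gives the running time stated in the proposition.

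The main obstacle I expect is the deterministic truncated representation. Over an arbitrary field, truncating a general matroid representation is only known in randomized polynomial time, which would force the running time to become Monte Carlo. The reason Proposition~\ref{thm:repsetlovaszrandomized} can be stated deterministically here is that for a partition matroid the truncated representation admits an explicit construction, as sketched above, so the randomized truncation of general matroids can be bypassed. The rest of the argument — the exterior-algebra span characterization, the counting dimension $\binom{p+\ell}{p}$, and the incremental Gaussian elimination — is essentially bookkeeping, and would simply need to be checked to line up with the definition of an $\ell$-representative family in the paper.
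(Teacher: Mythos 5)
This proposition is not proved in the paper; it is imported as a black box by citing Theorem~3.8 of \cite{FominLPS16} and Theorem~1.3 of \cite{LokshtanovMPS18}, and the remark that follows it says explicitly that all the paper needs is representability of partition matroids (Proposition~3.5 of \cite{Marx09}) plus deterministic truncation, without giving further definitions. So there is no in-paper proof to compare against; the right question is whether your reconstruction matches the cited sources, and in substance it does. The exterior-algebra/Pl\"ucker view, the criterion that $X \cup Y$ is independent in the $(p+\ell)$-truncated representation iff $v_X \wedge v_Y \neq 0$, the reduction of ``$\ell$-representative'' to ``the wedge vectors span,'' the dimension count $\binom{p+\ell}{p}$, the incremental Gaussian elimination, and the identification of deterministic truncation as the only obstacle to determinism are exactly the ingredients used in those references.

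Two points worth tightening. First, computing one wedge vector $v_{S_i}$ means computing all $\binom{p+\ell}{p}$ Pl\"ucker coordinates, each a $p\times p$ minor at cost $\Oh(p^\omega)$; so this step over $t$ sets costs $\Oh\bigl(\binom{p+\ell}{p}\,t\,p^\omega\bigr)$, not $\Oh(t p^\omega)$ --- which is in fact the first term of the stated bound, so this is a slip in your bookkeeping rather than in the cited result. Second, the definition of $\ell$-representative also demands $\widehat{X} \cap Y = \emptyset$, which your span argument does not mention; this is automatic from the wedge criterion, since any shared element immediately forces $v_{\widehat{X}} \wedge v_Y = 0$, and it is worth saying so. Your sketch of a direct Vandermonde-style truncated representation for partition matroids is looser than the actual construction in \cite{LokshtanovMPS18}, but the structural point you extract from it --- that truncation is deterministic for this class of matroids, so the algorithm need not be Monte Carlo --- is precisely what the paper's appeal to that theorem relies on.
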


For the purpose of this article, it is enough to know that partition matroids are 
``representable''~\cite[Proposition $3.5$]{Marx09} and a ``truncation'' of partition matroids are computable in deterministic polynomial time~\cite[Theorem $1.3$]{LokshtanovMPS18}. This results in Proposition~\ref{thm:repsetlovaszrandomized},  which we will use for our algorithm without giving further definitions of representation and truncation~\cite{FominLPS16,LokshtanovMPS18,Marx09}.

Let  $\cF = \{F_1, F_2, \ldots, F_m\}$ be the subsets of $\cU$, $k$ be a positive integer. Since, the sets in $\cF$ are pairwise disjoint, the number of sets in $\cF$ is upper bounded by $k$. 
We call a set $S\subseteq U$, a {\em potential solution}, if for all $j\in [\ell]$, $|S\cap B_j|\leq f(B_j)$.  Let 
\begin{align*}
	{\cS}_i &\coloneqq \{S : S \text{ is a potential solution }, |S|=i \text{ and for all } j\in[i] |S\cap F_j|=1 \}.
\end{align*}
Given ${\cS}_i$, we define ${\cS}_i^{{\sf emb}}$  as $\{{\sf embed}(S)~|~S\in {\cS}_i\}$. Observe that 
${\cS}_i^{{\sf emb}} \subseteq \cI$ and each set has size at most $qi$. Notice that, each set in $\cS$ has size {\em exactly} $i$, but the same can not be said about the sets in ${\cS}_i^{{\sf emb}}$. However, since each element occurs in at most $q$ sets of  $\cB$, we have that each set in ${\cS}_i^{{\sf emb}} $ has size at most $qi$.

Our algorithm checks whether ${\cS}_k$ is non-empty or not. Towards that first observe that  ${\cS}_k$ is non-empty if and only if ${\cS}_k^{{\sf emb}}$ is non-empty. So the testing of non-emptiness of ${\cS}_k$ boils down to checking whether 
${\cS}_k^{{\sf emb}}$ is non-empty or not.  We  test whether ${\cS}_k^{{\sf emb}}$  is non-empty by computing 
$\whnd{\cS}_k^{{\sf emb}} \subseteq_{rep}^0 {\cS}_k^{{\sf emb}}$ and checking whether ${\chS}_k^{{\sf emb}}$ is non-empty. To argue the correctness of the algorithm, first we have the following observation.
\begin{observation} \label{obs:nonempty}
	${\cS}_k^{{\sf emb}} \neq \emptyset$ iff $\whnd{\cS}_k^{{\sf emb}} \neq \emptyset$.
\end{observation}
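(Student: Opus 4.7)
The plan is to observe that \Cref{obs:nonempty} is essentially immediate from the definition of a $0$-representative family, once one unpacks both directions separately. I will state the two implications explicitly, handle the easy direction first, and then use the defining property of $\subseteq_{rep}^{0}$ with $Y = \emptyset$ for the nontrivial direction.

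First I would handle the backward implication. By construction, $\whnd{\cS}_k^{{\sf emb}} \subseteq {\cS}_k^{{\sf emb}}$, so if $\whnd{\cS}_k^{{\sf emb}}$ contains some set, then so does ${\cS}_k^{{\sf emb}}$. This direction requires nothing beyond the containment built into the representative-family definition.

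For the forward implication, suppose ${\cS}_k^{{\sf emb}} \neq \emptyset$, and pick any $X \in {\cS}_k^{{\sf emb}}$. By definition, $X = {\sf embed}(S)$ for some potential solution $S \in \cS_k$, and as noted immediately after the definition of $\cS_i^{{\sf emb}}$, every such $X$ lies in $\cI$. Taking $Y = \emptyset$, we have $|Y| \le \ell = 0$, $X \cap Y = \emptyset$, and $X \cup Y = X \in \cI$. The defining property of $\whnd{\cS}_k^{{\sf emb}} \subseteq_{rep}^{0} {\cS}_k^{{\sf emb}}$ then guarantees the existence of some $\widehat{X} \in \whnd{\cS}_k^{{\sf emb}}$ with $\widehat{X} \cap Y = \emptyset$ and $\widehat{X} \cup Y = \widehat{X} \in \cI$, in particular $\whnd{\cS}_k^{{\sf emb}} \neq \emptyset$.

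There is no real obstacle here, only one subtle point worth making explicit: one must verify that an arbitrary element $X \in {\cS}_k^{{\sf emb}}$ is itself independent in $\cM$, since the representative-family definition is only about extensions of already-independent sets. This follows because each $X$ is of the form ${\sf embed}(S)$ for a potential solution $S$, and potential solutions are precisely those $S$ for which $|S \cap B_j| \le f(B_j)$ for every $j$; by the construction of $\cM$ in \Cref{subsec:matroid}, this is exactly the independence condition for ${\sf embed}(S)$. Thus ${\cS}_k^{{\sf emb}} \subseteq \cI$, and the application of the $0$-representative property is legitimate.
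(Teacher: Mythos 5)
Your proposal is correct and takes essentially the same approach as the paper: the reverse direction follows from the containment $\whnd{\cS}_k^{{\sf emb}} \subseteq {\cS}_k^{{\sf emb}}$, and the forward direction instantiates the $0$-representative property with $Y = \emptyset$. Your extra remark that every $X \in {\cS}_k^{{\sf emb}}$ is independent in $\cM$ (needed for the definition to apply) is a fine bit of additional care that the paper leaves implicit, but the argument itself is the same.
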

\begin{proof}
	Since $\whnd{\cS}_k^{{\sf emb}} \subseteq {\cS}_k^{{\sf emb}}$, the reverse direction is immediate. Now we argue the forward direction. Suppose ${\cS}_k^{{\sf emb}}$, then it contains some set $A$. Note that $A$ trivially satisfies $A \cap \emptyset = \emptyset$. Therefore, since $\whnd{\cS}_k^{{\sf emb}} \subseteq_{rep}^0 {\cS}_k^{{\sf emb}}$, there must exist a set $\hat{A} \in \whnd{\cS}_k^{{\sf emb}}$ such that $\hat{A} \cap \emptyset = \emptyset$, i.e., $\whnd{\cS}_k^{{\sf emb}} \neq \emptyset$. 
\end{proof}
Thus, having computed the representative family $\whnd{\cS}_k^{{\sf emb}}$ all we need to do is to check whether it is non-empty. All that remains is an algorithm that computes the representative family  $\whnd{\cS}_k^{{\sf emb}}$.

Let ${\cal Z}$ be a family of sets and $\ell$ be an integer, then ${\cal Z}[\ell]$ is a subset of $\cal Z$ that contains {\em all the sets of} $\cal Z$ of size {\em exactly} $\ell$. We describe a dynamic programming based algorithm. Let 
${\cal D}$ be an array indexed from integers in $\{0,1,\ldots, k\}$. The entry ${\cal D}[i]$ stores the following for all 
$j\in \{i,\ldots, qi\}$, $\whnd{\cS}_i^{{\sf emb}}[j] \subseteq_{rep}^{qk-j} {\cS}_i^{{\sf emb}}[j]$.

We  fill the entries in the matrix $\cal D$ in the increasing order of indexes. 
For $i=0$, ${\cal D}[0]=\emptyset$. Suppose, we have filled all the entries until the index $i$. Then consider the set 
\begin{align*}
	{\cal N}^{i+1} &= \{X'=X\cup {\sf embed}(\{u\}) : X \in {\cal D}[i],~u\in F_{i+1}, {\sf projection}(X') \text{ is a potential solution} \}
\end{align*}
We partition sets in ${\cal N}^{i+1}$  based on sizes. Let ${\cal N}^{i+1}[j]$ denote all the sets in ${\cal N}^{i+1}$ of size $j$. 
\begin{restatable}{claim}{kpathrepset} 
	\label{lem:kpathauxrepset}
	For all $j\in \{i+1,\ldots, q(i+1)\}$, ${\cal N}^{i+1}[j] \subseteq_{rep}^{qk-j} {\cS}_{i+1}^{{\sf emb}}[j]$.
\end{restatable}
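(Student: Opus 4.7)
The plan is to verify this $(qk-j)$-representation property by a direct unfolding of definitions, leaning on the inductive hypothesis that ${\cal D}[i]$ stores $\whnd{\cS}_i^{{\sf emb}}[j'] \subseteq_{rep}^{qk-j'} {\cS}_i^{{\sf emb}}[j']$ for every $j' \in \{i, \ldots, qi\}$. I would fix $j \in \{i+1, \ldots, q(i+1)\}$, a set $Y \subseteq \widetilde{U}$ with $|Y| \le qk-j$, and a target $X^* \in {\cS}_{i+1}^{{\sf emb}}[j]$ that is disjoint from $Y$ with $X^* \cup Y \in \cI$, and aim to produce $\widehat{X} \in {\cal N}^{i+1}[j]$ sharing these two properties.

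First I would decompose $X^*$ so that the inductive hypothesis applies at level $i$. Writing $X^* = {\sf embed}(S^*)$ for some $S^* \in {\cS}_{i+1}$: since the sets of $\cF$ are pairwise disjoint and $|S^* \cap F_h|=1$ for every $h \in [i+1]$, there is a unique $u^* \in S^* \cap F_{i+1}$, and $S^* \setminus \{u^*\} \in {\cS}_i$ (it inherits the potential-solution property from $S^*$ and hits each $F_h$, $h \in [i]$, exactly once). Setting $X' \coloneqq X^* \setminus {\sf copies}(u^*) = {\sf embed}(S^* \setminus \{u^*\})$ and $j' \coloneqq j - q(u^*)$, one checks that $j' = \sum_{u \in S^* \setminus \{u^*\}} q(u) \in \{i, \ldots, qi\}$, so $X' \in {\cS}_i^{{\sf emb}}[j']$ lies in the scope of the hypothesis.

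Next I would invoke that representation property on $X'$ with the augmented obstacle $Y' \coloneqq Y \cup {\sf copies}(u^*)$. The three premises check out: $|Y'| \le (qk - j) + q(u^*) = qk - j'$; $X'$ is disjoint from $Y'$, since $X' \subseteq X^*$ avoids $Y$ and by construction contains no copy of $u^*$; and $X' \cup Y' = X^* \cup Y \in \cI$. This yields some $\widehat{X}' \in \whnd{\cS}_i^{{\sf emb}}[j'] \subseteq {\cal D}[i]$ that is disjoint from $Y'$ and satisfies $\widehat{X}' \cup Y' \in \cI$. Define the candidate $\widehat{X} \coloneqq \widehat{X}' \cup {\sf copies}(u^*)$.

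The remaining and mildly delicate step is showing $\widehat{X} \in {\cal N}^{i+1}[j]$, which reduces to confirming that ${\sf projection}(\widehat{X})$ is a potential solution. Writing $\widehat{X}' = {\sf embed}(\widehat{S}')$, one has $\widehat{X} = {\sf embed}(\widehat{S}' \cup \{u^*\})$ and ${\sf projection}(\widehat{X}) = \widehat{S}' \cup \{u^*\}$; the bijection between $B_\alpha$ and $\wB_\alpha$ then gives $|(\widehat{S}' \cup \{u^*\}) \cap B_\alpha| = |\widehat{X} \cap \wB_\alpha| \le f(B_\alpha)$, the last inequality coming from $\widehat{X} \subseteq \widehat{X}' \cup Y' \in \cI$. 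Combined with $|\widehat{X}| = j' + q(u^*) = j$ and $u^* \in F_{i+1}$, this places $\widehat{X}$ in ${\cal N}^{i+1}[j]$. Finally, $\widehat{X}$ is disjoint from $Y$ (since $\widehat{X}' \cap Y = \emptyset$ as $Y \subseteq Y'$, and ${\sf copies}(u^*) \subseteq X^*$ is disjoint from $Y$), while $\widehat{X} \cup Y = \widehat{X}' \cup Y' \in \cI$, which closes out the verification.
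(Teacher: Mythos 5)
Your proof is correct and follows essentially the same route as the paper's: peel off the contribution of the unique $u^* \in S^* \cap F_{i+1}$, apply the inductive representation property at level $i$ with the augmented obstacle $Y \cup {\sf copies}(u^*)$, and reattach ${\sf copies}(u^*)$ to the returned set. The only difference is one of polish rather than substance — you make explicit the checks that the paper leaves implicit, namely that the projected candidate is a potential solution (via ${\widehat X} \subseteq {\widehat X}' \cup Y' \in \cI$ and the hereditary property), that the size arithmetic $j' + q(u^*) = j$ goes through because ${\widehat X}'$ and ${\sf copies}(u^*)$ are disjoint, and that the final set avoids $Y$.
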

\begin{proof}
	Let $S \in {\cS}_{i+1}^{{\sf emb}}[j]$ and  $Y$ be a set of size at most $qk-j$ (which is essentially an independent set of the matroid 
	${\cal M}=(\widetilde{U}, \cI) $) such that  $S\cap Y=\emptyset$ and $S\cup Y \in \cI$. 
	We will show that there exists a set $S' \in {\cal N}^{i+1}[j]$ such that $S' \cap Y=\emptyset$ and $S\cup Y \in \cI$. This will imply  the desired result. Since $S \in {\cS}_{i+1}^{{\sf emb}}[j]$  there exists an element $u\in F_{i+1}$ such that 
	\[S= (S\setminus {\sf embed}(\{u\})) \cup  {\sf embed}(\{u\}).\]
	Let $S_i = (S\setminus {\sf embed}(\{u\}))$. Since, $S$ is an independent set of the matroid ${\cal M}=(\widetilde{U}, \cI) $, we have that $S_i$ is an independent set  of the matroid ${\cal M}=(\widetilde{U}, \cI) $ (hereditary property). Further, 
	$|{\sf projection}(S_i)|= i$ and ${\sf projection}(S_i)$ is a hitting set for $F_1,\ldots,F_i$. This implies that 
	$S_i \in  {\cS}_{i}^{{\sf emb}}$.  Let $Y_i= Y \cup {\sf embed}(\{u\})$. Notice that since $S\cap Y=\emptyset$ and $S\cup Y \in \cI$, we have that $Y_i$ is an independent set and $S_i \cup Y_i = S\cup Y \in \cI$.  Let $|S_i|=j'$. Then, we know that  
	${\cal D}[i]$ contains  $\whnd{\cS}_i^{{\sf emb}}[j'] \subseteq_{rep}^{qk-j'} {\cS}_i^{{\sf emb}}[j']$. This implies that there exists a set $S_i' \in  \whnd{\cS}_i^{{\sf emb}}[j'] $ such that  $S_i' \cup Y_i \in \cI$. This implies that 
	$S_i'\cup {\sf embed}(\{u\})$ is in ${\cal N}^{i+1}$. Further, since $|S|=\sum_{x\in {\sf projection}(S)}  |{\sf embed}(\{x\})|$, we have that $|S_i'\cup {\sf embed}(\{u\})|=|S_i|=j$. This implies that $S_i'\cup {\sf embed}(\{u\})$ is in ${\cal N}^{i+1}[j]$. 
	Thus, we can take $S'=S_i'\cup {\sf embed}(\{u\})$. This completes the proof. 
\end{proof}

We fill the entry for ${\cal D}[i+1]$ as follows. We first compute ${\cal N}^{i+1}$. Observe that 
the sets in ${\cal N}^{i+1}$ have sizes ranging from $i+1$ to $q(i+1)$. Now we apply Proposition~\ref{thm:repsetlovaszrandomized} on each of ${\cal N}^{i+1}[j]$, $j\in\{i+1,\ldots, q(i+1)\}$,  and compute $qk-j$ representative. That is,  we compute 
$\widehat{{\cal N}}^{i+1}[j] \subseteq_{rep}^{qk-j} {\cal N}^{i+1}[j]$. We set 
$${\cal D}[i+1]=\bigcup_{j=i+1}^{q(i+1)}\widehat{{\cal N}}^{i+1}[j].$$

\noindent
Observe that the number of sets in ${\cal D}[i]$ of size $j$ is upper bounded by ${q(k-i)+j \choose j} \leq {qk \choose di}\leq 2^{\cO(qk)}$. Hence, the time taken to compute ${\cal D}[i]$ is upper bounded by $2^{\cO(qk)}n^{\cO(1)}$. Thus, the time taken to compute ${\cal D}[i+1]$ requires at most $qk$ invocations of Proposition~\ref{thm:repsetlovaszrandomized}. 
This itself takes $2^{\cO(qk)}n^{\cO(1)}$ time. This completes the proof, resulting in the following result. 

\repsetsimple*

Theorem~\ref{thm:RepSetSimple} can be generalized to the scenario where every element in $\cU$ appears in at most $q$ sets in $\cB$ and at most $d$ sets in $\cF$.  Observe that if each element appear in at most $d$ sets of $\cF$, then the total number of sets that a subset of size $k$ of $\cU$ can hit is upper bounded by $dk$, else we immediately return that given instance is a NO-instance. Let $S=\{u_1, \ldots, u_k\}$ be a hypothetical solution to our problem. Now, with the help of  $S$, we partition $\cF$ as follows. Let ${\cal F}_i$ denote all sets in $\cF$ that contain $u_i$ and none of 
$\{u_1,\ldots,u_{i-1}\}$. Clearly, ${\cal F}_i$, $i\in[k]$, partitions $\cF$. Now we can design a dynamic programming algorithm similar to the one employed in Theorem~\ref{thm:RepSetSimple}, where in each iteration we grow our representative family by elements that only hit sets in ${\cal F}_i$ and not in ${\cal F}_j$, $j>i$. This will result in the following theorem.

\repsetsimpleplus*

\subsection{A Kernel for a Special Case of \chs using Matroids}
In this section we design a polynomial kernel for the same special case of \chs, that we considered in the last section. 
Let $(\cU, \cF, \cB, f: \cB \to \nat, k)$ be an instance of \chs and assume that every element in $\cU$ appears in at most $q$ sets in $\cB$ and any pair of sets in $\cF$ are pairwise disjoint. To design our kernel we will again use  Lemma~\ref{lem:mainIset} that says that an input 
$(\cU, \cF, \cB, f: \cB \to \nat, k)$ is a yes-instance if and only if there exists an independent set $X\in \cI$ of the matroid ${\cal M}=(\widetilde{U}, \cI)$ such that $|{\sf projection}(X)|\leq k$ and ${\sf projection}(X)$ is a hitting set for $\cF$.

Let  $\cF = \{F_1, F_2, \ldots, F_m\}$ be the subsets of $\cU$, and $k$ be a positive integer. Since, the sets in $\cF$ are pairwise disjoint, the number of sets in $\cF$ is upper bounded by $k$. In particular, we assume that $m=k$. We define 
${\cal F}_i^{{\sf emb}} =\{{\sf embed}(\{u\})~|~u\in F_i\}$. For our kernel we apply the following reduction rules. We start with some simple reduction rules. 

\begin{rrule}
	\label{rr:simple}
	Let $(\cU, \cF, \cB, f: \cB \to \nat, k)$ be an instance of \chs. 
	\begin{itemize}
		\item If there exists an element $u\in \cU$, such that $u$ does not appear in any sets in $\cF$ then delete it from $\cU$ and  all the sets in $\cB$  that it appears in. 
		\item If there exists a set $B\in \cB$ such that $B=\emptyset$, then delete $B$, and take $f$ as the restriction of $f$ on 
		$\cB\setminus \{B\}$. 
		\item If there exists a set $B\in \cB$ such that $f(B)=0$, then we do as follows: 
		$\cU := \cU \setminus \{ B\}$; delete all the elements of $B$ from all the sets in $\cB$ and $\cF$ that it appears in. If some set in $\cal F$ becomes empty then return a trivial \textsc{No}-instance. Else, take $f$ as the restriction of $f$ on 
		$\cB\setminus \{B\}$ and keep the integer $k$ unchanged. 
	\end{itemize}
\end{rrule} 

Soundness of Reduction Rule~\ref{rr:simple} is obvious and hence omitted. The next reduction rule is the main engine of our kernel. 
\begin{rrule}
	\label{rr:elementReduction}
	Let $(\cU, \cF, \cB, f: \cB \to \nat, k)$ be an instance of \chs. If there exists a pair of integers $i\in[k]$ and $j\in[q]$ such that 
	$|{\cal F}_i^{{\sf emb}}[j]| > {kq \choose j}$, then do as follows. Compute $\widehat{\cal F}_i^{{\sf emb}}[j] \subseteq_{rep}^{qk-j}  {\cal F}_i^{{\sf emb}}[j]$. Let $F\in {\cal F}_i^{{\sf emb}}[j]$ that do not appear in $\widehat{\cal F}_i^{{\sf emb}}[j] $. Then obtain a reduced instance as follows. 
	\begin{itemize}
		\item $\cU := \cU \setminus  {\sf projection}(F)$
		\item Delete ${\sf projection}(F)$ from all the sets in $\cB$ and $\cF$ that it appears in. 
		\item The function $f$ and $k$ remains the same. 
	\end{itemize}
	
\end{rrule}

\begin{restatable}{lemma}{elementreduction} \label{lem:element-reduction}
	Reduction Rule~\ref{rr:elementReduction} is sound. 
\end{restatable}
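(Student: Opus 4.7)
The plan is to prove both directions of equivalence via \Cref{lem:mainIset}, which lets us reformulate the existence of a fair hitting set as the existence of an independent set $X \in \cI$ of the partition matroid ${\cal M}=(\widetilde{U}, \cI)$ with $|{\sf projection}(X)|\le k$, $X = {\sf embed}({\sf projection}(X))$, and ${\sf projection}(X)$ a hitting set for $\cF$. Let $u \in F_i$ be the unique element with ${\sf embed}(\{u\}) = F$, so that the reduction rule removes exactly $u$ from $\cU$ and from every set of $\cB$ and $\cF$.

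For the easy direction, suppose the reduced instance admits a solution $S'$. Since $u \notin \cU$ in the reduced instance, $u \notin S'$. Every set $F_t \in \cF$ in the reduced instance is $F_t \setminus \{u\}$, so a hit of the reduced $F_t$ by $S'$ is automatically a hit of the original $F_t$; similarly, the reduced $B_j$ is $B_j \setminus \{u\}$, so $|S' \cap B_j|$ is the same in both instances, and the fairness constraints still hold. Thus $S'$ is a fair hitting set for the original instance.

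For the forward direction, let $X$ be the witness from \Cref{lem:mainIset}. If $u \notin {\sf projection}(X)$, then $X$ already lives in the reduced matroid and we are done. Otherwise $F \subseteq X$; set $Y \coloneqq X \setminus F$, which is independent by the hereditary property and satisfies $|Y| \le qk - j$, $F \cap Y = \emptyset$, and $F \cup Y = X \in \cI$. Applying the defining property of $\widehat{\cF}_i^{\sf emb}[j] \subseteq_{rep}^{qk - j} \cF_i^{\sf emb}[j]$ to $Y$, there is $F' \in \widehat{\cF}_i^{\sf emb}[j]$ with $F' \cap Y = \emptyset$ and $F' \cup Y \in \cI$. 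Write $F' = {\sf embed}(\{u'\})$ for some $u' \in F_i$, and take $X' \coloneqq Y \cup F'$. By construction $X' \in \cI$, $X' = {\sf embed}({\sf projection}(X'))$, and ${\sf projection}(X') = ({\sf projection}(X) \setminus \{u\}) \cup \{u'\}$, which has size at most $k$; moreover $u'$ is still present in the reduced universe because $F' \in \widehat{\cF}_i^{\sf emb}[j]$ while the rule only deletes projections of sets \emph{outside} $\widehat{\cF}_i^{\sf emb}[j]$.

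The last thing to verify, and the step I expect to require the most care, is that ${\sf projection}(X')$ remains a hitting set for $\cF$. Here the hypothesis that the sets of $\cF$ are pairwise disjoint is essential: for $t \ne i$ we have $u, u' \in F_i$ and hence $u, u' \notin F_t$, so ${\sf projection}(X)$ and ${\sf projection}(X')$ hit $F_t$ through the very same element, while $F_i$ itself is hit by $u'$. One small sanity check is that $u' \ne u$ and $u' \notin {\sf projection}(X) \setminus \{u\}$: the first holds because $F' \in \widehat{\cF}_i^{\sf emb}[j]$ but $F \notin \widehat{\cF}_i^{\sf emb}[j]$, and the second holds because $X = {\sf embed}({\sf projection}(X))$ would otherwise force ${\sf embed}(\{u'\}) \subseteq Y$, contradicting $F' \cap Y = \emptyset$. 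Invoking \Cref{lem:mainIset} in the reverse direction on $X'$ then yields a fair hitting set for the reduced instance, completing the soundness proof.
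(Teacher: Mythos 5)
Your argument is correct and follows essentially the same route as the paper's proof: in the forward direction you pass to $Y = X \setminus F$, invoke the $(qk-j)$-representative property of $\widehat{\cF}_i^{\sf emb}[j]$ to obtain $F' = {\sf embed}(\{u'\})$ with $Y \cup F' \in \cI$, and then show that swapping $u$ for $u'$ yields a fair hitting set for the reduced instance, using disjointness of $\cF$ to see that only $F_i$ is affected. You are somewhat more explicit than the paper about the small sanity checks (that $u' \ne u$, that $u'$ survives into the reduced universe, that ${\sf embed}(\{u'\})$ is not already hidden inside $Y$), which the paper leaves implicit, but these are cosmetic differences rather than a different proof strategy.
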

\begin{proof}
	Let  $(\cU, \cF, \cB, f: \cB \to \nat, k)$ be an instance of \chs, and let $(\cU', \cF', \cB', f: \cB \to \nat, k)$ 
	be the reduced instance, after an application of Reduction Rule~\ref{rr:elementReduction}. It is easy to see that a solution to the reduced instance can directly be lifted to the input instance. Thus, we focus on forward direction. 
	
	In the forward direction, let $S$ be a solution to $(\cU, \cF, \cB, f: \cB \to \nat, k)$. Then, by Lemma~\ref{lem:mainIset}, it implies that  ${\sf embed}(S) \in \cI$ (of the matroid ${\cal M}=(\widetilde{U}, \cI)$) and  $|S|\leq k$ and $S={\sf projection}({\sf embed}(S))$ is a hitting set for $\cF$. Let $u = {\sf projection}(F)$. Then, we have that $\cU := \cU \setminus \{ u\}$. 
	If $u \notin S$, then $S$ is also the solution to $(\cU', \cF', \cB', f: \cB' \to \nat, k)$. So we assume that $u\in S$.  
	
	
	Observe that $F={\sf embed}(\{u\})$, $|F|=j$, and $u$ belongs to $F_i$. Further, since every set in $\cal F$ are pairwise disjoint we have that the only job of $u$ is to hit the set $F_i$. Consider, $Y= {\sf embed}(S) \setminus {\sf embed}(\{u\})$. Since, ${\sf embed}(S) \in \cI$, we have that $Y\in \cI$ (hereditary property of the matroid), and the size of 
	$Y$ is upper bounded by $qk-j$. The last assertion follows from the fact that for any element $v\in \cU$,  the size of ${\sf embed}(\{v\})$ is upper bounded by $q$ and $|{\sf embed}(S)|=\sum_{x\in S}  |{\sf embed}(\{x\})| \leq qk$. This implies that there exists $F' \in \widehat{\cal F}_i^{{\sf emb}}[j] \subseteq_{rep}^{qk-j}  {\cal F}_i^{{\sf emb}}[j]$ such that $Y\cup \{F' \} \in \cI$.  
	Since, $|{\sf projection}({\sf embed}(S))|\leq k$, we have that $|{\sf projection}(Y)|\leq k-1$. Thus, $|{\sf projection}(Y\cup \{F' \} )|\leq k$. Now we need to show that ${\sf projection}(Y\cup \{F' \}) $ is a hitting set for $\cF$. This follows from the fact that 
	$u' = {\sf projection}(F') \in F_i$. In other words, we have shown that  $S' =  S \setminus \{u\} \cup \{u'\}$ is a desired hitting set for $\cF$. This concludes the proof.
\end{proof}

Finally, we get the following kernel.

\begin{restatable}{theorem}{kernelthm}
	\label{thm:kernel}
	Let $(\cU, \cF, \cB, f: \cB \to \nat, k)$ be an instance of \chs such that every element in $\cU$ appears in at most $q$ sets in $\cB$ and any pair of sets in $\cF$ are pairwise disjoint. Then, \chs admits a kernel of size 
	$\cO(kq^2 {kq \choose q} \log k)$. 
\end{restatable}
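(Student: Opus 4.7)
The plan is to obtain the kernel by exhaustively applying Reduction Rule~\ref{rr:simple} and Reduction Rule~\ref{rr:elementReduction}, and then bound the size of the resulting instance. Soundness of both reduction rules has already been established (Reduction Rule~\ref{rr:simple} is immediate and Reduction Rule~\ref{rr:elementReduction} is Lemma~\ref{lem:element-reduction}), so the remaining task is a size analysis of a reduced instance.

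First, since the sets of $\cF$ are pairwise disjoint and any solution has at most $k$ elements, whenever $|\cF| > k$ we safely return a No-instance; hence $|\cF| \le k$. To bound the universe, observe that after exhaustive application of Reduction Rule~\ref{rr:simple}, every element of $\cU$ belongs to some set of $\cF$, and by disjointness of $\cF$ it belongs to exactly one. Thus $\cU = \bigsqcup_{i=1}^{|\cF|} F_i$. For each $i$, I would partition $F_i$ according to the value $j = q(u) = |{\sf copies}(u)|$; elements with a given value of $j$ are in bijection with the sets of ${\cal F}_i^{{\sf emb}}[j]$. Because Reduction Rule~\ref{rr:elementReduction} is no longer applicable, $|{\cal F}_i^{{\sf emb}}[j]| \le \binom{kq}{j} \le \binom{kq}{q}$ for every $j \in [q]$ (using monotonicity of $\binom{kq}{\cdot}$ on $[0, kq/2]$, which covers $[q]$ in the interesting regime $k \ge 2$). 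Summing yields $|F_i| \le q \binom{kq}{q}$, and therefore $|\cU| \le kq \binom{kq}{q}$.

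Next, I would bound $|\cB|$ by double counting: by Reduction Rule~\ref{rr:simple} each $B \in \cB$ is nonempty, and by assumption each element of $\cU$ lies in at most $q$ sets of $\cB$, so $|\cB| \le \sum_{B \in \cB} |B| \le q|\cU| \le kq^2 \binom{kq}{q}$. Finally, encoding each element of the reduced universe takes $\cO(\log|\cU|) = \cO(q \log k)$ bits, and each capacity $f(B) \in [k]$ takes $\cO(\log k)$ bits; combining, the total size of the reduced instance is $\cO(kq^2 \binom{kq}{q} \log k)$, as required.

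The main obstacle is ensuring Reduction Rule~\ref{rr:elementReduction} can be carried out in polynomial time. This reduces to computing representative families $\widehat{\cal F}_i^{{\sf emb}}[j] \subseteq_{rep}^{qk-j} {\cal F}_i^{{\sf emb}}[j]$ via Proposition~\ref{thm:repsetlovaszrandomized} on the partition matroid ${\cal M} = (\widetilde{U}, \cI)$. Since this matroid admits a standard representation of size polynomial in $|\widetilde{U}| \le q|\cU|$, each application of the rule runs in time polynomial in the current instance size, and since each application strictly decreases $|\cU|$, the overall kernelization procedure runs in polynomial time.
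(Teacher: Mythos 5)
Your proof is correct and follows essentially the same route as the paper: apply Reduction Rules~\ref{rr:simple} and~\ref{rr:elementReduction} exhaustively, then bound $|F_i|$ via $|{\cal F}_i^{{\sf emb}}[j]| \le \binom{kq}{j}$ once Rule~\ref{rr:elementReduction} is inapplicable, sum over $j \in [q]$ and over the at most $k$ disjoint sets of $\cF$ to bound $|\cU|$, double-count to bound $|\cB|$ by $q|\cU|$, and account for the encoding of $f$. Your added remarks (the explicit bijection between elements with $q(u)=j$ and ${\cal F}_i^{{\sf emb}}[j]$, the monotonicity caveat for $\binom{kq}{j}\le\binom{kq}{q}$ when $k\ge 2$, and the polynomial-time justification via Proposition~\ref{thm:repsetlovaszrandomized}) are just more detail than the paper spells out, not a different argument.
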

\begin{proof}
	For our algorithm we apply Reduction Rules~\ref{rr:simple} and ~\ref{rr:elementReduction} exhaustively. If any application of these rules return that the input is a \textsc{No}-instance, we return the same. The correctness of the algorithm follows from the correctness of Reduction Rules~\ref{rr:simple} and~\ref{rr:elementReduction}. Further it is clear that the algorithm runs in polynomial time.  What remains to show is that the reduced instance is upper bounded by the claimed function.
	
	For convenience we assume that the reduced instance is also denoted by  $(\cU, \cF, \cB, f: \cB \to \nat, k)$. Since, 
	Reduction Rule~\ref{rr:elementReduction} is not applicable we have that each set $F_i$, $i\in[k]$, is upper bounded by 
	$\sum_{j=1}^q  {kq \choose j} \leq q  {kq \choose q}$. This implies that $|\cU| \leq kq {kq \choose q}$. Further, since 
	every element in $\cU$ appears in at most $q$ sets in $\cB$, we have that the number of non-empty sets in $\cB$ is upper bounded by $q |\cU| \leq kq^2 {kq \choose q}$. Since, Reduction Rule~\ref{rr:simple} is not applicable we have that there are no empty-sets and hence $|\cB| \leq kq^2 {kq \choose q} $. Further, to represent the function $f$ we need at most 
	$\cO(|\cB| \log k)  \leq \cO(kq^2 {kq \choose q} \log k)$ bits. This completes the proof. 
\end{proof}

\section{Reduction from $K_{d, d}$-free $G_{\cU, \cF}$ to Bounded Frequency in $\cF$} \label{sec:kddfree-freq}

We give an alternate, self-contained proof of this theorem in \Cref{sec:kddfree-freq}. This reduction is used as the first step in some of our results, such as \Cref{sec:nowhere}. We note that a part of the kernelization procedure in \Cref{sec:kdd-kernel} can also be used to prove the following theorem. 

\begin{restatable}{theorem}{kdd}\label{thm:kdd-free}
	Let $\cI = (\cU, \cF, \cB, f, k)$ be an instance of \chs, where $G_{\cU, \cF}$ is $K_{d, d}$ free, and $G_{\cU, \cB}$ satisfies a hereditary property $\Pi$.
	
	Suppose there exists an algorithm $\mathcal{A}$ that, given an instance $\cI'=(\cU',\cF',\cB',f',k')$ such that (1) every element of $\cU'$ appears in at most $\Delta$ sets of $\cF'$, and (2) $G_{\cU', \cB'}$ satisfies $\Pi$, decides in time $h(k',\Delta)\cdot |\cI'|^{\Oh(1)}$, whether $\cI'$ is a yes-instance of \chs. Then, there exists an algorithm that runs in time $g(k,d)\cdot |\cI|^{\Oh(1)}$ to decide whether $\cI$ is a yes-instance of \chs. 
\end{restatable}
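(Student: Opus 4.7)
The plan is to produce, in polynomial time, an equivalent reduced instance $\cI' = (\cU', \cF', \cB', f', k')$ of \chs in which every element of $\cU'$ has frequency at most $\Delta(k,d)$ in $\cF'$, and then invoke the algorithm $\mathcal{A}$ on $\cI'$. The key observation is that the $K_{d,d}$-free assumption concerns only the incidence graph $G_{\cU,\cF}$, so the required reduction needs to act only on the pair $(\cU,\cF)$; we never modify $\cB$ except as a side-effect of deleting an element of $\cU$, which removes that element from every set of $\cB$ containing it and possibly removes emptied sets of $\cB$. Since $\Pi$ is hereditary and the resulting $G_{\cU',\cB'}$ is obtained from $G_{\cU,\cB}$ by vertex deletions on the $\cU$ side (plus removal of now-isolated vertices on the $\cB$ side), the property $\Pi$ is automatically preserved.

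I would therefore execute the ``$\cF$-side'' portion of the kernelization procedure developed for \Cref{thm:thmKdd}, which itself adapts standard \hs kernelizations on $K_{d,d}$-free incidence graphs. Concretely, apply exhaustively: (i) a domination rule that discards any $F \in \cF$ properly containing another set of $\cF$; and (ii) a bounded-shadow / sunflower-style rule that exploits $K_{d,d}$-freeness (any $d$ elements of $\cU$ lie in at most $d-1$ common sets of $\cF$), applied level-by-level to subsets of $\cU$ of sizes $d-1, d-2, \ldots, 1$, to cap the number of sets of $\cF$ containing each fixed small subset by a polynomial in $k$ and $d$. Together these yield $|\cF'| \le d k^d$, so in particular every element of $\cU'$ appears in at most $\Delta(k,d) \coloneqq d k^d$ sets of $\cF'$. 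Crucially, each rule only deletes elements of $\cU$, sets of $\cF$, or incidences between them; none introduces new elements or alters $\cB$ beyond the induced vertex deletions described above.

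Having obtained $\cI'$, invoke $\mathcal{A}(\cI')$; its running time $h(k, dk^d) \cdot |\cI'|^{\Oh(1)}$ absorbs into the claimed bound $g(k,d)\cdot |\cI|^{\Oh(1)}$. The main obstacle I anticipate is verifying that the ``$\cF$-side'' reductions, when isolated from the $q$-dependent $\cB$-side rules used in the kernel of \Cref{thm:thmKdd}, still drive $|\cF|$ down to $\mathrm{poly}(k,d)$; this is essentially what the existing \hs kernelizations for $K_{d,d}$-free set systems already guarantee, so the adaptation reduces to checking that each reduction rule can be phrased purely in terms of $(\cU, \cF, k)$, without reference to $\cB$, $f$, or any frequency parameter on $\cB$.
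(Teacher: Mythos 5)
Your proposal is correct but follows a genuinely different route from the paper's primary proof. The paper proves \Cref{thm:kdd-free} with a \emph{branching} algorithm: using \Cref{lem:highdegreeintersectssolution} to extract a set $X$ of size at most $d-1$ hitting every solution, it branches on which $u \in X$ to place in the solution, decrementing $k$ and producing at most $(d-1)^k$ leaf instances, each of which either has $k \le 0$ or has every element of frequency below $dk^{d-1}$ in $\cF$ and is then passed to $\mathcal{A}$. You instead propose a single polynomial-time \emph{preprocessing} pass that shrinks $|\cF|$ to $\le dk^d$ before one invocation of $\mathcal{A}$ --- precisely the alternative the paper itself flags in \Cref{rem:preproc}, built from \Cref{rr:emptyB} through \Cref{rrule:f-bounded}. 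Both strategies rest on the same engine (the bounded-intersection consequence of $K_{d,d}$-freeness); yours trades the $(d-1)^k$-way search tree for one kernelized call, at the cost of a somewhat more delicate soundness argument for each reduction rule. Two small inaccuracies to flag. First, your description of the rules is loose: the paper does not use a superset-domination rule but rather \Cref{rr:degreereduction}, which (when $|X|>1$) \emph{adds} the new set $X$ to $\cF$ while deleting its supersets, and (when $|X|=1$) commits $u$ to the solution. Second, your closing claim that the rules can be phrased ``purely in terms of $(\cU,\cF,k)$, without reference to $\cB$, $f$'' is an overstatement: the $|X|=1$ case necessarily decrements $f(B)$ for every $B \ni u$, so $f$ is touched. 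What actually matters --- and what \Cref{rem:preproc} relies on --- is the weaker and correct observation you make earlier, namely that the rules modify $G_{\cU,\cB}$ only by deleting vertices (elements of $\cU$ or emptied members of $\cB$), so any hereditary $\Pi$ is preserved regardless of what happens to $f$ or to $\cF$.
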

\begin{proof}
	Let $G=G_{\cU, \cF}$ be an incidence graph corresponding to the set system $(\cU, \cF)$ in instance $\cI$. The proof of the theorem relies on the following modification of a classical lemma from parameterized algorithms. The lemma guarantees that, as long as there exists an element that is contained in a large number of sets in $\cF$, we can branch on a set of at most $d-1$ elements. 
	\begin{restatable}{lemma}{lemkdd} \label{lem:highdegreeintersectssolution}
		If there exists an element $u \in \cU$ such that $deg_G(u)\geq dk^{d-1}$, then in polynomial time we can find a set $X \subseteq \cU$ of size at most $d-1$ that intersects every fair hitting set for $\cF$ of size at most $k$.
	\end{restatable}
	\begin{proof}
		Suppose there exists $u \in \cU$ such that $deg_{G}(u)\geq dk^{d-1}$.
		Let $X =\{u_1,\ldots,u_\ell\}\subseteq \cU$ be a maximal set such that for all $p \leq \ell$ we have that $\bigcap_{i=1}^{p}N_G(u_i) \geq dk^{d-p}$.
		Such a set $X$ can be found in polynomial time by greedily selecting the vertices.
		Observe that $\ell \leq d-1$, otherwise	$\bigcap_{i=1}^{\ell}N_G(u_i) \geq dk^{d-\ell} \geq 
		d$, , it would imply an existence of $K_{d,d}$ in $(\cU, \cF)$.
		
		Let $\cF_{X}=\bigcap_{i=1}^{\ell}N_G(u_i)$, and note that $|\cF_{X}| \geq dk^{d-\ell}$.
		For every vertex $u' \in \cU \setminus X$, we have
		$|N(u') \cap \cF_{X}| < dk^{d-\ell-1}$. 
		We next claim that every fair hitting set $H \subseteq \cU$ for $\cF$ of size at most $k$ satisfies $H \cap X \neq \emptyset$.
		Fix, a hitting set $H'$ for $\cF$ of size $k$ and suppose $H' \cap X =\emptyset$.
		Then the number of sets of $\cF_X$ that are hit by $k$ elements of $H' \subseteq \cU \setminus X$ is less than $k \cdot dk^{d-\ell-1}= dk^{d-\ell} \le |\cF_X|$. Thus, there exists a set in $\cF_X \subseteq \cF$ not hit by $H'$, which contradicts the assumption that $H'$ hits $\cF$.
		Therefore, $|H' \cap X|\neq \emptyset$.
	\end{proof}

	Suppose there exists an element with at least $d k^{d-1}$ neighbors in $\cF$. Then, we use the algorithm from \Cref{lem:highdegreeintersectssolution} to obtain a set  $X=\{u_1,\ldots u_\ell\}$, where $\ell \le d-1$. Now, we branch on each element $u \in X$. More specifically, we make a recursive corresponding to a new instance defined w.r.t. $u$, as follows. $\tilde{\cI}=(\tilde{\cU}, \tilde{\cF}, \tilde{\cB}, \tilde{k},\tilde{f})$, 
	where $\widetilde{\cU} = \cU \setminus \{u\}$, $\widetilde{\cF} = \cF - \{u\}$, $\widetilde{\cB} = \cB - \{u\}$, $\widetilde{k} = k-1$, and for every $B_j \in \cB$, we set
	$\tilde{f}(B_j) = \begin{cases}
		f(B_j) & \text{ if $u \not\in B_j$}
		\\f(B_j) - 1 & \text{ if $u \in B_j$}
	\end{cases}$.\\
	
	Else, there is no element with at least $dk^{d-1}$ neighbors in $\cF$.
	Thus, every element in $\cU$ appears in  at most $dk^{d-1}$ sets in $\cF$.
	We set $\Delta=dk^{d-1}$, then  use algorithm $\mathcal{A}$ to solve the instance $\cI$ in $h(k,\Delta)\cdot|\cI|^{\mathcal{O}(1)}$ time.

	Now we analyze the running time. At every recursive call we decrease $k$ by 1, and thus the height of the search tree does not exceed $k$.
	At every step we branch on at most $d-1$ subproblems.
	Hence, the number of leaves in the search tree is at most $(d-1)^{k}$. 
	At each leaf node either $k \leq 0$, in which case we solve instance $\cI$ in polynomial time.
	Otherwise, a leaf node corresponds to the case when we cannot branch, since \Cref{lem:highdegreeintersectssolution} cannot be applied. In this case, each element of $\cU$ is contained in at most $\Delta \coloneqq dk^{d-1}$ sets of $\cF$. In this case, we use algorithm $\mathcal{A}$, which solves instance $\cI'$ in $h({k'},\Delta)\cdot|{\cI'}|^{\mathcal{O}(1)}=h'(k,d)\cdot|\cI|^{\mathcal{O}(1)}$ time, for some function $g$.
	Thus, the total time taken by the algorithm is $(d-1)^k \cdot h'(k,d)\cdot|\cI|^{\mathcal{O}(1)}= g(k,d)\cdot|\cI|^{\mathcal{O}(1)}$, as desired.
\end{proof}

\section{Polynomial Kernel for $K_{d,d}$-free $G_{\cU, \cF}$ and Bounded Frequency in $\cB$} \label{sec:kdd-kernel}
Consider an input $(\cU, \cF, \cB, f: \cB \to \nat, k)$ of \chs. In this section we design a polynomial kernel for \chs problem when $G_{\cU, \cF}$ is $K_{d,d}$-free and frequency of each element in $\cB$ is at most $q$. We fix $d$ and $q$ for the rest of the section. Without loss of generality we assume that $d\geq 2$, $k\geq 2$. We also assume that we do not have multisets in $\cF$ and $\cB$.  The kernelization algorithm consists of two phases: In phase one we apply some reduction rules to bound the size of $|\cF|$. In phase two we use partition matroid ${\cal M}=(\widetilde{U}, \cI)$ defined in Section~\ref{subsec:matroid} using $\cB$ to design a reduction rule to bound the number of elements. Next, we state our reduction rules that the kernelization algorithm applies in the order in which they are stated. For a subset $\cB'\subseteq \cB$, by $f'=f_{|\cB'}$, we denote the function $f$ restricted to $\cB'$. We begin with the following simple reduction rule that removes empty sets from $\cB$.

\begin{rrule}\label{rr:emptyB}
	If there exists a set $B\in \cB$ such that $B=\emptyset$, then delete $B$, and take $f$ as the restriction of $f$ on $\cB\setminus \{B\}$. Return the instance $\cI'=(\cU,\cF,\cB',f',k)$, where $\cB'=\cB\setminus \{B\}$ and $f'=f_{|\cB'}$.
\end{rrule}

Next reduction rule removes sets in $\cB$ that do not interest with any set in $\cF$.

\begin{rrule}\label{rr:mindegreeF}
	If there exists an element $u\in \cU$ such that $u$ is not contained in any set in $\cF$, that is for every $F\in \cF$, $u\notin F$, then remove $u$ from every set in $B$. Return the instance $\cI'=(\cU',\cF,\cB',f,k)$, where $\cU'=\cU\setminus \{u\}$, $\cB'=\cB-\{u\}$.
\end{rrule}

The next reduction rule removes sets in $\cB$ that have $f$ value less than 1.
\begin{rrule}\label{rr:fatleast1}
	If there exists a set $B\in \cB$ such that $f(B)\leq 0$, if there exists a set $F\in \cF$ such that $F\subseteq B$, then return that $(\cU, \cF, \cB, f, k)$ is a no-instance of \chs, otherwise from every set $F\in \cF$ delete elements that are contained in $B$. Return the instance $\cI'=(\cU',\cF',\cB',f',k)$, where $\cU'=\cU\setminus B$, $\cF'=\cF-B$, $\cB'=\cB\setminus \{B\}$, $f'=f_{|\cB'}$.
\end{rrule}

We will use the following variant of Lemma~\ref{lem:highdegreeintersectssolution} to design our next reduction rule which will bound the degree of a vertex in $U$ in the graph $G_{U,\cF}$. 

\begin{claim}\label{claim:highdegreeintersectssolution}
	If there exists an element $u \in \cU$ such that $|\cF(u)|\geq dk^{d-1}$, then in polynomial time we can find a inclusion-maximal set $X \subseteq \cU$ of size at most $d-1$ such that $|\bigcap_{v\in X} \cF(v)| \geq dk^{d-|X|}$ and for every fair hitting set $S$ for $\cF$ of size at most $k$, $S\cap X\neq \emptyset$.
\end{claim}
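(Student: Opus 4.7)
\medskip

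\noindent\textbf{Proof Proposal.} The plan is to mimic the greedy construction used in the proof of \Cref{lem:highdegreeintersectssolution}, but to make the greedy process \emph{itself} produce the required inclusion-maximal witness set with the explicit lower bound on the common neighborhood in~$\cF$. Concretely, I would initialize $X_1 \coloneqq \{u\}$; by hypothesis $|\cF(u)| \ge dk^{d-1} = dk^{d-|X_1|}$, so $X_1$ already satisfies the required invariant. Then, iteratively, given $X_i$ with $|X_i| = i \le d-2$ and $\bigl|\bigcap_{v \in X_i} \cF(v)\bigr| \ge dk^{d-i}$, I would scan all elements $u' \in \cU \setminus X_i$ and test whether
\[
\Bigl|\cF(u') \cap \bigcap_{v \in X_i} \cF(v)\Bigr| \;\ge\; dk^{d-(i+1)};
\]
if such a $u'$ exists, set $X_{i+1} \coloneqq X_i \cup \{u'\}$ and continue; otherwise, stop and return $X \coloneqq X_i$. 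Each iteration is polynomial time and there are at most $d-1$ iterations, so the procedure is polynomial.

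The first thing to verify is that the procedure never produces a set of size $d$. Suppose for contradiction that the greedy step succeeds at $i = d-1$, producing $X_d$ of size $d$ with $\bigl|\bigcap_{v \in X_d} \cF(v)\bigr| \ge dk^{0} = d$. Then the $d$ elements of $X_d$ together with any $d$ sets of this common intersection would form a $K_{d,d}$ in $G_{\cU, \cF}$, contradicting the $K_{d,d}$-freeness assumption. Hence the returned $X$ has $|X| \le d-1$, and by construction it is inclusion-maximal (within $\cU$) among subsets of size at most $d-1$ satisfying the intersection-size invariant.

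For the hitting-set property, set $\ell \coloneqq |X|$ and $\cF_X \coloneqq \bigcap_{v \in X} \cF(v)$, so $|\cF_X| \ge dk^{d-\ell}$. By the stopping condition of the greedy procedure, every $u' \in \cU \setminus X$ satisfies $|\cF(u') \cap \cF_X| < dk^{d-\ell-1}$. Now suppose $S$ is a fair hitting set for $\cF$ of size at most $k$ with $S \cap X = \emptyset$. Then every element of $S$ lies in $\cU \setminus X$, and so the total number of sets in $\cF_X$ hit by $S$ is strictly less than
\[
k \cdot dk^{d-\ell-1} \;=\; dk^{d-\ell} \;\le\; |\cF_X|,
\]
leaving some set of $\cF_X \subseteq \cF$ un-hit by $S$, a contradiction. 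Hence $S \cap X \neq \emptyset$ for every fair hitting set $S$ of size at most $k$.

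The argument is essentially a refinement of \Cref{lem:highdegreeintersectssolution}, so the only subtle point (and the closest thing to an obstacle) is bookkeeping: making sure the greedy procedure preserves the invariant $|\bigcap_{v \in X_i} \cF(v)| \ge dk^{d-i}$ at \emph{every} step (not merely at the end), because both the $K_{d,d}$-freeness argument for the size bound and the pigeonhole argument for the hitting-set property rely on the quantitative strength of this invariant rather than on a mere non-emptiness condition. Once the invariant is cleanly maintained, the two conclusions follow from the two arguments sketched above.
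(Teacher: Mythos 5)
Your proof is correct and follows essentially the same approach as the paper, which proves the claim simply by observing that the greedy argument already given for \Cref{lem:highdegreeintersectssolution} goes through verbatim. Your write-up is a more explicit rendering of that same greedy construction, making the step-by-step invariant $\bigl|\bigcap_{v\in X_i}\cF(v)\bigr|\ge dk^{d-i}$ visible and spelling out the $K_{d,d}$-freeness and pigeonhole arguments that the paper leaves implicit.
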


Observe that in the above claim when $|X|>1$, $|\bigcap_{v\in X} \cF(v)| \geq 2$. The proof of Lemma~\ref{lem:highdegreeintersectssolution} also holds for the above claim.

\begin{rrule}\label{rr:degreereduction}
	If there exists an element $u \in \cU$ such that $\cF(u)\geq dk^{d-1}$, then let $X$ be the output of the algorithm in Claim~\ref{claim:highdegreeintersectssolution}. Consider the following cases:
	\begin{enumerate}
		\item If $|X|=1$, and let $X=\{u\}$, then proceed as follows; Delete all the sets containing $u$ from $\cF$. Let $\cF'=\{F|u\notin F,F\in \cF\}$.  Next,delete $u$ from every set $B\in \cB$ and reduce $f(B)$ by 1, let $\cB'=\cB-\{u\}$. Finally reduce $k$ by 1. Return the instance $\cI'=(\cU',\cF',\cB',f,k')$, where $\cU'=\cU\setminus \{u\}$, $f'=f(B)-1$, if $u\in B$, otherwise $f'=f(B)$ and $k'=k-1$.
		\item If $|X|>1$, then we delete all the sets from $\cF$ that contain the set $X$ and add the set $X$ to $\cF$. Return the instance $\cI'=(\cU',\cF',\cB,f,k)$, where $\cF'=\{F|X\nsubseteq F,F\in \cF\}\cup \{X\}$, $\cU'=\bigcup_{F\in \cF'}F$.
	\end{enumerate}
\end{rrule}
\begin{lemma}
	Reduction Rule \ref{rr:degreereduction} is correct, and either reduces the size of $\cU$ or that of $\cF$.
\end{lemma}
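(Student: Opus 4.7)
The plan is to verify each case of \Cref{rr:degreereduction} separately, showing (i) the yes/no status of the instance is preserved, and (ii) at least one of $|\cU|$ or $|\cF|$ strictly decreases. The workhorse in both directions is \Cref{claim:highdegreeintersectssolution}: the output $X$ satisfies $|\bigcap_{v\in X}\cF(v)| \ge dk^{d-|X|}$, and every fair hitting set $S$ for $\cF$ with $|S|\le k$ satisfies $S\cap X\neq\emptyset$.

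For Case 1, where $X=\{u\}$, the element $u$ is forced into every solution, and I would simply run the standard ``include-and-decrement'' argument. In the forward direction, a solution $S\ni u$ for $\cI$ yields $S'=S\setminus\{u\}$, which has size at most $k-1$, hits every $F\in\cF'$ (all the $F\not\ni u$), and respects $f'$ because for $B\ni u$ we have $|S'\cap B| = |S\cap B|-1 \le f(B)-1 = f'(B)$, while for $B\not\ni u$ the budget is unchanged. In the reverse direction, $S=S'\cup\{u\}$ is a fair hitting set for $\cF$ of size at most $k$ by the symmetric calculation. Clearly $|\cU'| = |\cU| - 1$, so the size bound holds.

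For Case 2, where $|X|>1$, I would argue that $X$ acts as a new, smaller hitting obligation that subsumes every $F \supseteq X$. Forward: a solution $S$ for $\cI$ intersects $X$ by \Cref{claim:highdegreeintersectssolution}, so it hits the newly added set $X$, and it still hits every remaining set of $\cF'$ (which are exactly those $F\in\cF$ with $X\not\subseteq F$); the fairness budgets and $k$ are unchanged. Reverse: any solution $S$ for $\cI'$ hits $X\in\cF'$, so for every deleted $F \supseteq X$ we have $S\cap F\supseteq S\cap X\neq\emptyset$, and the remaining $F\in\cF$ coincide with sets of $\cF'$ that are hit directly. For the size drop, \Cref{claim:highdegreeintersectssolution} guarantees at least $dk^{d-|X|} \ge dk \ge 4$ sets of $\cF$ contain $X$ (using $|X|\le d-1$ and $d,k\ge 2$), and we replace all of them by the single set $X$, so $|\cF'|<|\cF|$.

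The only mild obstacle I anticipate is the bookkeeping around the implicit redefinition $\cU' = \bigcup_{F\in\cF'} F$ in Case 2: some $v\in\cU$ may disappear because its only occurrences in $\cF$ were inside deleted supersets of $X$. My plan is to observe that such a $v$ is never useful for the reduced instance — it cannot hit anything in $\cF'$, and including it can only tighten fairness constraints in $\cB$ — so restricting $\cU$ (and the relevant sets of $\cB$) to $\cU'$ preserves the set of fair hitting sets of size at most $k$. Aside from this, the argument is just the two bi-implications and the counting above.
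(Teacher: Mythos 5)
Your proof is correct and follows the same two‑case structure as the paper's own proof, with both cases resting on \Cref{claim:highdegreeintersectssolution}: include‑and‑decrement when $X=\{u\}$, and replace all $F\supseteq X$ by the single set $X$ when $|X|>1$. You are in fact a bit more careful than the paper — you explicitly handle the fact that in Case~2 the redefinition $\cU'=\bigcup_{F\in\cF'}F$ may drop elements of $\cU$ (passing from $S$ to $S\cap\cU'$), a point the paper leaves implicit, and you also argue the size decrease for both cases, which the paper only discusses after the lemma.
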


\begin{proof}
	Note that $G_{\cU',cF'}$ remains $K_{d,d}$-free even after adding the set $X$ to $\cF'$. The proof of case when $|X|=1$ follows from the Claim \ref{claim:highdegreeintersectssolution}. Next we prove lemma for case when $|X|>1$. To show that the lemma holds we will show that $(\cU, \cF, \cB, f, k)$ is a yes-instance of \chs if and only if $(\cU',\cF',\cB,f,k)$ is a yes instance of \chs. In the forward direction suppose that $(\cU, \cF, \cB, f, k)$ is a yes-instance of \chs and let $S$ be its solution. By Claim \ref{claim:highdegreeintersectssolution}, $S\cap X\neq \emptyset$, therefore $S$ is also a hitting set of $\cF'$ and hence a fair hitting set of $(\cU',\cF',\cB,f,k)$. In the backward direction suppose that $(\cU',\cF',\cB,f,k)$ is a yes instance of \chs. and let $S'$ be its solution. We claim that $S'$ is also a hitting set of $\cF$. Suppose that for a contradiction there exists $F\in \cF$, that is not hit by $S'$. Then $F\notin \cF'$, however by our construction of $\cF'$, $X\subseteq F$. As $X\in \cF'$, $X\cap S'\neq \emptyset$ and $S'$ must hit $F$, a contradiction. Hence $S'$ is a fair hitting set of $(\cU, \cF, \cB, f, k)$. This concludes the proof.
\end{proof}

Observe that the Reduction Rule \ref{rr:degreereduction} is not applicable in polynomial time and only polynomial many times as in the first case when $|X|=1$ since Reduction Rule \ref{rr:mindegreeF} is not applicable the sets in $\cF$ containing $X$ must be at least 1, and in second case when $|X|>1$, by Claim \ref{claim:highdegreeintersectssolution}, the sets in $\cF$ containing $X$ must be at least $2$.

Observe that when Reduction Rule \ref{rr:degreereduction} is no longer applicable then each element is contained in at most $dk^{d-1}$ sets in $\cF$.

\begin{rrule} \label{rrule:f-bounded}
	If Reduction Rules \ref{rr:mindegreeF}-\ref{rr:degreereduction} are no longer applicable and $|\cF|>dk^d$, then return that $(\cU, \cF, \cB, f, k)$ is a no-instance of \chs.
\end{rrule}
Note that if \Cref{rrule:f-bounded} is not applicable, then the size of $|\cF|$ is bounded by $dk^{d}$.
\begin{remark} \label{rem:preproc}
	Note that in \Cref{rr:emptyB} to \Cref{rrule:f-bounded}, we did not use any property of the $(\cU, \cB)$ set system, except that we may delete an element of $\cU$ or a set from $\cB$. Thus, in the resulting instance, the $G_{\cU, \cB}$ satisfies the property $\Pi$. Therefore, we can use \Cref{rr:emptyB} to \Cref{rrule:f-bounded} as an alternate way of obtaining a proof of \Cref{thm:kdd-free}.
\end{remark}

\begin{remark}
	As observed before, at this point the size of $|\cF|$ is upper bounded by $dk^d$. Now, we use an approach similar to \Cref{thm:RepSetSimplePlus} to obtain an FPT algorithm for \chs that runs in time $d^{k}k^{kd} \cdot 2^{\Oh(kq)} \cdot |\cI|^{\Oh(1)}$.
\end{remark}

\noindent\textbf{Back to kernelization.}
Let $A_{\geq d}$ be the set of all the elements with degree at least $d$ in $G_{\cU,\cF}$.  As $G_{\cU,\cF}$ is $K_{d,d}$-free we obtain the following observation. 

\begin{observation}\label{obs:largedegree}
	$|A_{\geq d}|\leq {{|\cF|}\choose{d}}\cdot d$.
\end{observation}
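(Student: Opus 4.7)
The plan is to prove \Cref{obs:largedegree} by a double-counting / mapping argument that exploits the $K_{d,d}$-freeness of the incidence graph $G_{\cU,\cF}$. The intuitive idea is that each element $u \in A_{\geq d}$ is contained in at least $d$ sets of $\cF$, and the $K_{d,d}$-free condition prevents too many elements from sharing the same collection of $d$ sets.

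More concretely, I would first define, for each $u \in A_{\geq d}$, a set $T_u \subseteq \cF(u)$ of size exactly $d$ (such a set exists since $|\cF(u)| \geq d$ by definition of $A_{\geq d}$). This gives a map $u \mapsto T_u$ from $A_{\geq d}$ into the collection of $d$-element subsets of $\cF$, of which there are $\binom{|\cF|}{d}$.

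The key step is to bound the multiplicity of this map: for any fixed $d$-subset $T = \{F_{i_1}, \ldots, F_{i_d}\} \subseteq \cF$, at most $d-1$ distinct elements $u \in \cU$ can satisfy $T \subseteq \cF(u)$. Indeed, if $d$ distinct elements $u_1, \ldots, u_d$ all satisfied this, then the vertex sets $\{u_1, \ldots, u_d\}$ and $\{F_{i_1}, \ldots, F_{i_d}\}$ would induce a $K_{d,d}$ in $G_{\cU,\cF}$, contradicting the hypothesis. In particular, at most $d-1$ (hence at most $d$) elements $u$ can satisfy $T_u = T$.

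Summing over all $d$-subsets $T$ of $\cF$ yields
\[
|A_{\geq d}| \;=\; \sum_{T \in \binom{\cF}{d}} |\{u \in A_{\geq d} : T_u = T\}| \;\leq\; \binom{|\cF|}{d} \cdot d,
\]
which is the claimed bound. There is no real obstacle here: the only subtlety is recognizing that the $K_{d,d}$-free condition translates directly into a bound on the number of elements covering any fixed $d$ sets in common, and the rest is a pigeonhole-style sum. The slight slack (using $d$ instead of the tight $d-1$) matches the statement exactly.
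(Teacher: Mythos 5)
Your proof is correct. The paper states this observation without an explicit proof, treating it as an immediate consequence of the $K_{d,d}$-free assumption, and your argument is precisely the standard one: assign to each $u \in A_{\geq d}$ an arbitrary $d$-subset $T_u \subseteq \cF(u)$, and observe that the fiber over any fixed $T \in \binom{\cF}{d}$ has size at most $d-1$ (else $T$ together with $d$ elements of its fiber would induce a $K_{d,d}$), which sums to the claimed bound with a little slack.
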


For a set $\cY\subseteq \cF$ of size at most $d-1$, we denote the set ${\sf ExactNbr}(\cY)$ to be the set of all the elements $u\in \cU$ such that $\cF(u)=\cY$, that is $u$ is only contained in all the sets in $\cY$.  
Next, we state a reduction rule that bounds elements in ${\sf ExactNbr}(\cY)$. For this purpose we use partition matroid ${\cal M}=(\widetilde{U}, \cI)$ defined in Section~\ref{subsec:matroid}. Recall that for constructing $\cM$, we add ${\sf copies}(u) = \{ u^1, u^2, \ldots, u^{deg_{G_{\cU,\cB}}(u)} \}$ for each element $u\in \cU$ to $\widetilde{U}$. For ease of proof we add dummy copies of every element to make the frequency of every element exactly $q$ in $\cB$, that is we let ${\sf copies}(u) = \{ u^1, u^2, \ldots, u^q \}$ for every element $u\in \cU$. We add a new set $B^*$ to $\cB$ and for every element $u$ with frequency $q'<q$ in $\cB$, we add  $\{ u^{q'+1}, \ldots, u^q \}$ to $B^*$. We let $f(B^*)=kq$. Now we define the partition matroid ${\cal M}=(\widetilde{U}, \cI)$ as defined in Section~\ref{subsec:matroid}. Next, we define a family $\cP_{\cY}=\{{\sf copies}(u)|u\in {\sf ExactNbr}(\cY)\}$ and apply the for reduction rule for every set $\cY\subseteq \cF$ of size at most $d-1$.

\begin{rrule}
	\label{rr:elementRedUsingB}
	Let $(\cU, \cF, \cB, f: \cB \to \nat, k)$ be an instance of \chs. If there exists a set $\cY\subseteq \cF$ such that $|\cY|\leq d-1$ and
	${\sf ExacrNbr}(\cY) > {kq \choose j}$, then do as follows. Compute $\widehat{\cP}_{\cY} \subseteq_{rep}^{kq-q}  {\cP}_{\cY}$. Let $u\in {\sf ExacrNbr}(\cY)$ such that  ${\sf copies}(u)\in {\cP}_{\cY}$ and ${\sf copies}(u)\notin \widehat{\cP}_{\cY}$. Then delete $u$ from every set in $\cF$ and every set in $\cB$. Let $\cU'=\cU\setminus \{u\}$, $\cF'=\{F\setminus \{u\}|F\in \cF\}$, $\cB'=\{B\setminus \{u\}|B\in \cB\}$, and $f'(B\setminus \{u\})=f(B)$ for each $B\in \cB$. Return the instance $\cI'=(\cU',\cF',\cB',f,k)$. 
\end{rrule}

\begin{lemma}
	Reduction Rule \ref{rr:elementRedUsingB} is correct.
\end{lemma}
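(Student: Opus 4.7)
The plan is to mirror the argument for \Cref{lem:element-reduction}, now using the partition matroid $\cM=(\widetilde{U},\cI)$ constructed from $\cB$ (with the dummy extension by $B^*$) and applying the representative families machinery to the $q$-family $\cP_{\cY}$. First I would verify the setup: after the dummy copies are added, every ${\sf copies}(u)$ has exactly $q$ elements, distributed with at most one copy per part $\wB_i$ plus possibly several copies in $B^*$; since \Cref{rr:fatleast1} guarantees $f(B)\geq 1$ for every $B\in\cB$ and $f(B^*)=kq$, each ${\sf copies}(u)$ is independent in $\cM$. Hence $\cP_{\cY}$ is genuinely a $q$-family of independent sets, and \Cref{thm:repsetlovaszrandomized} delivers $\widehat{\cP}_{\cY}\subseteq_{rep}^{kq-q}\cP_{\cY}$ of size at most $\binom{kq}{q}$, which justifies the size threshold used in the rule.

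The backward direction should be essentially trivial: since $\cU',\cF',\cB'$ are obtained by deleting $u$ only, any fair hitting set $S'\subseteq\cU'$ for $\cI'$ of size at most $k$ is also a subset of $\cU$; it hits every $F\in\cF$ because $F\setminus\{u\}\in\cF'$ is hit by $S'$ and $u\notin S'$, so the same element of $S'$ hits $F$; and the fairness constraint $|S'\cap B|\le f(B)$ for $B\in\cB$ follows from the corresponding constraint for $B\setminus\{u\}\in\cB'$.

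For the forward direction, I would take a fair hitting set $S$ for $\cI$ with $|S|\le k$. If $u\notin S$, then $S$ is already a solution for $\cI'$. Otherwise $u\in S$, and I would apply \Cref{lem:mainIset} to get that ${\sf embed}(S)\in\cI$. Set $Y={\sf embed}(S)\setminus{\sf copies}(u)$; by the hereditary property $Y\in\cI$, and $|Y|\le qk-q$. Since ${\sf copies}(u)\in\cP_{\cY}$ is disjoint from $Y$ with ${\sf copies}(u)\cup Y={\sf embed}(S)\in\cI$, the representative property yields some $u'\in{\sf ExactNbr}(\cY)$ with ${\sf copies}(u')\in\widehat{\cP}_{\cY}$, ${\sf copies}(u')\cap Y=\emptyset$, and ${\sf copies}(u')\cup Y\in\cI$. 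Setting $S'=(S\setminus\{u\})\cup\{u'\}$, the disjointness of ${\sf copies}(u')$ and $Y$ guarantees $u'\notin{\sf projection}(Y)={\sf projection}(S)\setminus\{u\}$, so ${\sf embed}(S')=Y\cup{\sf copies}(u')\in\cI$; by \Cref{lem:mainIset} this gives the fairness constraints, and $|S'|\le|S|\le k$. To check that $S'$ remains a hitting set for $\cF$, I would use the crucial property that both $u$ and $u'$ lie in ${\sf ExactNbr}(\cY)$: they belong to exactly the same collection $\cY$ of sets in $\cF$, so swapping $u$ for $u'$ preserves coverage of every $F\in\cF$.

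The main obstacle I anticipate is keeping the matroid-theoretic bookkeeping correct, in particular: (i) confirming that the dummy-padding construction really makes $\cP_{\cY}$ a $q$-family in $\cI$ (and that $B^*$'s capacity of $kq$ never becomes binding within the representative-set argument, so the bound $qk-q$ on $|Y|$ does apply cleanly); and (ii) handling the edge case where $u'$ might coincide with some element already in $S$, which is ruled out precisely because ${\sf copies}(u')\cap Y=\emptyset$ forces $u'\notin{\sf projection}(S)\setminus\{u\}$. Everything else is analogous to the proof of \Cref{lem:element-reduction}, with ``a single set $F_i$'' replaced by ``the common neighborhood $\cY$ of up to $d-1$ sets of $\cF$,'' which is harmless because the defining property of ${\sf ExactNbr}(\cY)$ is exactly what makes the swap preserve the hitting set.
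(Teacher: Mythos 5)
Your proposal is correct and takes essentially the same route as the paper's proof: invoke \Cref{lem:mainIset} to map a solution $S$ into an independent set of the partition matroid, strip off ${\sf copies}(u)$ to get a set $Y$ of size at most $kq-q$, apply the $(kq-q)$-representative property of $\widehat{\cP}_{\cY}$ to produce a replacement $u'\in{\sf ExactNbr}(\cY)$ with ${\sf copies}(u')$ disjoint from $Y$ and $Y\cup{\sf copies}(u')\in\cI$, and conclude that the swap $S'=(S\setminus\{u\})\cup\{u'\}$ preserves both fairness (via \Cref{lem:mainIset} in reverse) and hitting of $\cF$ (since $\cF(u)=\cF(u')=\cY$). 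You are in fact somewhat more careful than the paper in spelling out why each ${\sf copies}(u)$ is a genuine independent set of size $q$ after the dummy-padding (so that $\cP_{\cY}$ is a $q$-family and \Cref{thm:repsetlovaszrandomized} applies), and in explicitly noting that the disjointness ${\sf copies}(u')\cap Y=\emptyset$ rules out $u'\in S\setminus\{u\}$ and hence that ${\sf embed}(S')=Y\cup{\sf copies}(u')$; the paper leaves these checks implicit (and in fact contains a small typo, writing $S'={\sf projection}(\cS')$ where it clearly means ${\sf projection}(\cS'\cup\{{\sf copies}(v)\})$). No gaps.
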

\begin{proof}
	To show that the lemma holds we will show that $(\cU, \cF, \cB, f, k)$ is a yes-instance of \chs if and only if $(\cU',\cF',\cB',f,k)$ is a yes instance of \chs. Observe that the backward direction holds trivially. To prove the forward direction suppose that $(\cU, \cF, \cB, f, k)$ is a yes-instance of \chs and let $S\subseteq \cU$ be its solution of size at most $k$. Observe that if $u\notin S$, then $S$ is also a solution to $(\cU',\cF',\cB',f,k)$. Suppose that $u\in S$. Let $\cS=\{{\sf copies}(v)|v\in S\}$. By Lemma~\ref{lem:mainIset} $\cS$ is an independent set of the matroid ${\cal M}=(\widetilde{U}, \cI)$. Let $\cS'=\cS\setminus \{{\sf copies}(u)\}$. Notice that $|\cS'|\leq kq-q$, as we have frequency exactly $q$ for each element of $\cU$ in $\cB$. Therefore as  $\widehat{\cP}_{\cY}$, is a ${kq-q}$ representative  family of  ${\cP}_{\cY}$ and $ {\sf copies}(u)\notin \widehat{\cP}_{\cY}$ there must exists an element $v\in {\sf ExacrNbr}(\cY)$ such that $ {\sf copies}(v)\in \widehat{\cP}_{\cY}$ and $\cS'\cup \{{\sf copies}(v)\}$ must be an independent set of the matroid ${\cal M}=(\widetilde{U}, \cI)$. Let $S'={\sf projection}(\cS')$, by our construction of matroid ${\cal M}$, $S'$ satisfies $f$. Also observe that $S'$ is a hitting set of $\cF$ since $S\setminus \{u\}=S'\setminus \{v\}$ and $\cF(u)=\cF(v)=\cY$.
\end{proof}

\begin{observation}\label{obs:smalldegree}
	When Reduction Rule \ref{rr:elementRedUsingB} is no longer applicable, then for every $\cY \subseteq \cF$ size of the set ${\sf ExactNbr}(\cY)$ is bounded by ${kq \choose q}$. 
\end{observation}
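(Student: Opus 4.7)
The plan is to split the argument based on $|\cY|$, since Reduction Rule~\ref{rr:elementRedUsingB} is only triggered for $\cY$ with $|\cY|\le d-1$; for such $\cY$ the non-applicability of the rule will directly yield the bound, while for larger $\cY$ the $K_{d,d}$-free property of $G_{\cU,\cF}$ suffices.

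For the case $|\cY|\le d-1$, I would argue as follows. Recall that $\cP_{\cY}=\{{\sf copies}(u) : u\in {\sf ExactNbr}(\cY)\}$ is a $q$-family, because the padding step described just before the rule ensures every element of $\cU$ has exactly $q$ copies in $\widetilde{U}$. Invoking Proposition~\ref{thm:repsetlovaszrandomized} with $p=q$ and $\ell=kq-q$ yields a $(kq-q)$-representative subfamily $\widehat{\cP}_{\cY}$ of size at most $\binom{p+\ell}{p}=\binom{kq}{q}$. Hence, if $|{\sf ExactNbr}(\cY)|=|\cP_{\cY}|>\binom{kq}{q}$, there must be some ${\sf copies}(u)\in \cP_{\cY}\setminus \widehat{\cP}_{\cY}$, making the rule still applicable---contradicting the hypothesis. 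Therefore $|{\sf ExactNbr}(\cY)|\le \binom{kq}{q}$. For the case $|\cY|\ge d$, note that every $u\in {\sf ExactNbr}(\cY)$ is adjacent in $G_{\cU,\cF}$ to every set of $\cY$, so $d$ elements of ${\sf ExactNbr}(\cY)$ together with any $d$ sets of $\cY$ would witness a $K_{d,d}$-subgraph. The $K_{d,d}$-freeness of $G_{\cU,\cF}$ therefore forces $|{\sf ExactNbr}(\cY)|\le d-1$, which is in turn at most $\binom{kq}{q}$ in the meaningful parameter regime $kq\ge d$.

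I do not foresee any genuine obstacle here: the whole observation is essentially a bookkeeping consequence of the definition of Reduction Rule~\ref{rr:elementRedUsingB} combined with the representative-set size bound from Proposition~\ref{thm:repsetlovaszrandomized}. The only care needed is to instantiate the parameters $p$ and $\ell$ of that proposition correctly so that the produced bound $\binom{p+\ell}{p}$ matches the threshold $\binom{kq}{q}$ used inside the rule; after that the two cases combine immediately to cover every $\cY\subseteq \cF$.
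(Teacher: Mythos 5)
Your core argument for $|\cY| \le d-1$ is exactly the paper's (implicit) reasoning: $\cP_{\cY}$ is a $q$-family, Proposition~\ref{thm:repsetlovaszrandomized} with $p=q$, $\ell=kq-q$ gives a representative family of size at most $\binom{kq}{q}$, and if $|{\sf ExactNbr}(\cY)|$ exceeded this there would be a set in $\cP_{\cY}\setminus\widehat{\cP}_{\cY}$, so the rule would still fire. That is the whole content of the observation as it is actually used --- the very next sentence in the paper invokes it only for elements of degree at most $d-1$ in $G_{\cU,\cF}$, with the high-degree elements handled separately via Observation~\ref{obs:largedegree}.

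Your second case ($|\cY|\ge d$) is an attempt to honor the literal ``for every $\cY\subseteq\cF$'' in the statement, and your $K_{d,d}$-freeness argument correctly shows $|{\sf ExactNbr}(\cY)|\le d-1$ there. But note the caveat you flag is real: $d-1\le\binom{kq}{q}$ does not hold for all parameter values (e.g.\ $q=1$, $k=2$, $d=5$), so the observation as literally worded is slightly overreaching. Since Reduction Rule~\ref{rr:elementRedUsingB} only ever applies when $|\cY|\le d-1$, and since that is the only range in which the observation is invoked downstream, this does not affect the correctness of the kernel; it simply means the observation's quantifier ``for every $\cY\subseteq\cF$'' should be read as ``for every $\cY$ with $|\cY|\le d-1$''. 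Your proof is otherwise correct and takes essentially the same approach as the paper.
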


By the above observation the number of elements with degree at most $d-1$ in $G_{\cU,\cF}$ is bounded by $d\cdot {{|\cF|}\choose{d}}\cdot {kq \choose q} $. Therefore by observation \ref{obs:largedegree} the number of elements in $\cU$ is bounded by $\Oh(k^{\Oh(d^2+q)}d^{d}q^q)$. Observe that each of our reduction rules can be applied in polynomial time and only polynomial times. Thus we obtain the following theorem. 

\thmKdd*
\section{Parameterization by $k+d$ when $G_{\cU, \cB}$ is Nowhere Dense and $G_{\cU, \cF}$ is $K_{d,d}$-free} \label{sec:nowhere}
In this section, we give a formal proof of \Cref{thm:nowhere-dense}. For convenience, we restate the theorem.
\nowhere*

We prove \Cref{thm:nowhere-dense} by reducing the problem to \FO model checking on a colored graph, whose $\{E\}$-restriction is equal to $G$. 

We define the following notation. For a non-empty set $A$, and an integer $0 \le b \le |A|$, we use $\binom{A}{\le b}$ to denote the family of all subsets of $A$ of size at most $b$. For each subset $\cF' \subseteq \cF$, we define $\cU_{\cF'} \coloneqq \{ u \in \cU: \cF(u) = \cF' \}$, that is, $\cU_{\cF'}$ is the subset of elements $u$ such that $\cF'$ is exactly the subset of $\cF$ containing $u$ (alternatively, $\cF'$ is the subset hit by $u$). Note that $\cP_U \coloneqq \{\cU_{\cF'} : \cF' \subseteq \cF \text{ and } \cU_{\cF'} \neq \emptyset \}$ is a partition of the universe $\cU$ into at most $2^m$ parts, where $m = |\cF|$. Similarly, for $1 \le t \le k$, we define $\cB_{t} \coloneqq \{B_j \in \cB: f(B_j) = t\}$. Again, observe that $ \cP_B \coloneqq \{\cB_t: 1 \le t \le k \text{ and } \cB_t \neq \emptyset \}$ is a partition of $\cB$ into at most $k$ parts

Let $\sigma' = \{E(G)\}$, and recall that recall that the graph $G$ can be thought of as an $\sigma'$-structure. Now, we define $\sigma = \sigma' \cup \cP_U \cup \cP_B$, where we think of each $\cU_{\cF'}$, and $\cB_{t}$ as a unary relation on $V(G)$. Note that the number of relations in $\sigma$ is at most $1 + k + 2^m$. Let $G'$ be the resulting $\sigma$-structure, and note that its underlying graph $G$ belongs to a nowhere dense class $\cG$. 

We need the following definition.
\begin{definition}[Valid Pair] \label{def:valid-pair}
	We say that a pair $(k', \bbF)$ is a \emph{valid pair} if it satisfies the following properties. 
	\begin{enumerate}
		\item $k' \in [k]$, and
		\item $\bbF = (\cF_1, \cF_2, \ldots, \cF_{k'})$, where each $\cF_i$ is a non-empty subset of $\cF$, and $\bigcup_{i = 1}^{k'} \cF_i = \cF$.
	\end{enumerate}
\end{definition} 
We observe that the number of valid pairs is upper bounded by $k \cdot 2^{\Oh(km)}$.

Next, corresponding to each valid pair $(k', \bbF)$, we define a formula in $\FO[\sigma]$, as follows:
\begin{definition}[Formula corresponding to a valid pair] \label{def:formula-pair}
	Fix a valid pair $(k', \bbF)$, where $\bbF = (\cF_1, \cF_2, \ldots, \cF_{k'})$. We define $\varphi_{k', \bbF} \in \FO[\sigma]$ as follows:
	\begin{align*}
		\varphi_{k', \bbF} \coloneqq \exists u_1, u_2, \ldots, u_{k'}
		\\ \lr{ \textup{\texttt{enforce-}}\bbF(U) \land \lr{ \bigwedge_{t \in [k]: \cB_t \neq \emptyset } \psi_{t}(U) }}
	\end{align*}
	where, we use the shorthand $U = (u_1, u_2, \ldots, u_{k'})$, and
	\begin{itemize}
		\item $\textup{\texttt{enforce-}}\bbF(U) \coloneqq \bigwedge_{i = 1}^{k'} (u_i \in \cU_{\cF_i})$, 
		\item For $1 \le t \le k$, 
		\begin{align*}
			&\psi_t(U) \coloneqq \forall B_t,
			\\&\qquad(B_t \in \cB_t)\ \implies\ \bigvee_{Q \in \binom{[k']}{\le t}} \textup{\texttt{exact-nbr}}_{Q}(B_t, U) 
		\end{align*}
		\item For $1 \le t \le k$, and $Q \in \binom{[k']}{\le t}$,
		\begin{align*}
			\textup{\texttt{exact-nbr}}_{Q}(B_t, U) \coloneqq &\lr{ \bigwedge_{j \in Q} \{B_t, u_j\} \in E(G) } \land
			\\& \lr{ \bigwedge_{j' \not\in Q}  \{B_t, u_{j'} \} \not\in E(G) } 
		\end{align*}
	\end{itemize}
\end{definition}
\begin{observation}\label{obs:formula-size}
	For any valid pair $(k', \bbF)$, it holds that $|\varphi_{k', \bbF}| \le h'(k, m)$ for some function $h(\cdot, \cdot)$, assuming a suitable encoding.
\end{observation}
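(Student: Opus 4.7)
The plan is to walk through the recursive definition of $\varphi_{k', \bbF}$ in Definition \ref{def:formula-pair} one layer at a time, count the number of logical symbols and atomic subformulas it contains, and then multiply by the worst-case per-symbol encoding cost. Since nothing in Definition \ref{def:formula-pair} depends on the universe size $n$ or the instance size $|\cI|$ — only on $k'$, $k$, and on relation symbols drawn from $\sigma$, whose description length is controlled by $k$ and $m$ alone — this bookkeeping will yield a bound of the required form $h'(k,m)$.

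First I would tally the atomic subformulas. The block $\textup{\texttt{enforce-}}\bbF(U)$ contributes $k' \le k$ atoms of the form $u_i \in \cU_{\cF_i}$. The main conjunction $\bigwedge_{t:\cB_t \neq \emptyset} \psi_t(U)$ has at most $k$ conjuncts. Each $\psi_t(U)$ contains the single guard atom $B_t \in \cB_t$ together with a disjunction over $Q \in \binom{[k']}{\le t}$, and the number of such $Q$ is at most $\sum_{i=0}^{k}\binom{k}{i} = 2^k$. Each disjunct $\textup{\texttt{exact-nbr}}_Q(B_t, U)$ is a conjunction of $|Q|$ positive edge atoms and $k'-|Q|$ negated edge atoms, hence at most $2k$ atomic pieces. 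Multiplying these contributions gives at most $k + k \cdot 2^k \cdot 2k = \Oh(k^2 2^k)$ atomic subformulas in total.

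Next I would account for quantifiers and Boolean connectives. There are $k' + k \le 2k$ quantifiers (the outer $k'$ existentials plus one universal per $\psi_t$), and the number of occurrences of $\land, \lor, \neg, \implies$ is proportional to the number of atomic subformulas, so also $\Oh(k^2 2^k)$. Altogether the formula contains $\Oh(k^2 2^k)$ symbols drawn from the logical alphabet, each of constant-bit description length.

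Finally I would bound the encoding cost of the names appearing inside the atoms. Every bound or free variable is one of at most $k+1$ distinct variables ($u_1,\ldots,u_{k'}$ and $B_t$), encodable in $\Oh(\log k)$ bits. Every relation symbol is drawn from $\sigma$, which by construction has $1 + k + 2^m$ elements, so each such name takes $\Oh(m + \log k)$ bits under any standard encoding. Combining the symbol count with the per-symbol encoding cost yields
\[
|\varphi_{k', \bbF}| \;=\; \Oh\!\bigl(k^2 \cdot 2^k \cdot (m + \log k)\bigr),
\]
which is indeed a function only of $k$ and $m$. I do not anticipate a genuine obstacle: the only point that requires care is confirming that neither the disjunction over $\binom{[k']}{\le t}$ nor the enumeration of relation symbols in $\sigma$ secretly reintroduces a dependence on $n$ or $|\cI|$, and both are manifestly controlled by $2^k$ and $2^m$ respectively.
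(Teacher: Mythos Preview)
Your proposal is correct and follows essentially the same layer-by-layer counting approach as the paper's own proof. If anything, your accounting is more careful: the paper's proof states that each $\textup{\texttt{exact-nbr}}_{Q}(B_t,U)$ is a conjunction of $2^m$ atomic formulas and arrives at an overall $2^{\Oh(km)}$ bound, whereas you correctly observe that $\textup{\texttt{exact-nbr}}_{Q}$ has only $k'$ edge atoms and obtain the tighter estimate $\Oh(k^2 2^k (m+\log k))$; both are of course functions of $k$ and $m$ alone, which is all the observation requires.
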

\begin{proof}
	For every $1 \le t \le k$, and $Q \in \binom{[k]}{\le t}$, the formula $\textup{\texttt{exact-nbr}}_t, Q(B_t, U)$ is a conjunction of $2^m$ atomic formulas of constant size. Then, for $1 \le t \le k$, the formula $\psi_t(U)$ is a conjunction of an atomic formula of a constant size, and the disjunction of $\sum_{i = 0}^t \binom{k}{t} \le 2^{k}$ formulas of the type $\textup{\texttt{exact-nbr}}_t, Q(B_t, U)$, and the size of each such formula is bounded by $2^{\Oh(m)}$. Therefore, the size of each $\psi_t(U)$ is upper bounded by $2^{\Oh(km)}$. The formula $\textup{\texttt{enforce-}}\bbF(U)$ is of size $\Oh(k')$. Finally, the formula $\varphi_{k', \bbF}$ is obtained by taking conjunction of $\textup{\texttt{enforce-}}\bbF(U)$, along with $\psi_{1}(U), \psi_{2}(U), \ldots, \psi_{k}(U)$. 
	Therefore, the size of $\varphi_{k', \bbF}$ is upper bounded by $2^{\Oh(km)}$.
\end{proof}

Now, we are ready to prove the main technical lemma of this section.
\begin{lemma} \label{lem:formula-equiv}
	$\cI$ is a yes-instance of \chs iff for some valid pair $(k', \bbF')$, it holds that $V(G) \models \varphi_{k', \bbF}$.
\end{lemma}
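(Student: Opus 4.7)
The plan is to prove the biconditional by direct verification in both directions, translating between fair hitting sets and satisfying assignments to the existentially quantified variables in $\varphi_{k',\bbF}$. The key conceptual point is that the existential block $\exists u_1,\dots,u_{k'}$ plays the role of ``guessing'' the solution, $\textup{\texttt{enforce-}}\bbF$ certifies that collectively these elements hit every set of $\cF$ (since $\bigcup_i \cF_i = \cF$ is required by validity), and the $\psi_t$ subformulas enforce the fairness constraint for each capacity level $t$.

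For the forward direction, I would start with a fair hitting set $S \subseteq \cU$ of size $k' \le k$, say $S = \{s_1,\dots,s_{k'}\}$. I set $\cF_i \coloneqq \cF(s_i)$ for each $i \in [k']$, and $\bbF \coloneqq (\cF_1,\dots,\cF_{k'})$. Since $S$ hits every $F \in \cF$, some $s_i \in F$, and thus $F \in \cF_i$, so $\bigcup_i \cF_i = \cF$ and $(k',\bbF)$ is a valid pair per \Cref{def:valid-pair}. Interpreting each $u_i$ as $s_i$, I have $u_i \in \cU_{\cF_i}$ by construction, so $\textup{\texttt{enforce-}}\bbF(U)$ holds. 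For any $B \in \cB$ of capacity $t = f(B)$, I set $Q \coloneqq \{i \in [k'] : s_i \in B\}$; since $S$ is fair, $|Q| \le |S \cap B| \le t$, and by definition $u_j \in B$ for $j \in Q$ and $u_{j'} \notin B$ for $j' \notin Q$, so $\textup{\texttt{exact-nbr}}_Q(B, U)$ holds. Hence $\psi_t(U)$ holds for every nonempty $\cB_t$, giving $V(G) \models \varphi_{k',\bbF}$.

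For the reverse direction, assume $V(G) \models \varphi_{k',\bbF}$ witnessed by $u_1,\dots,u_{k'}$, and set $S \coloneqq \{u_1,\dots,u_{k'}\} \subseteq \cU$. Hitting is immediate: for any $F \in \cF$, validity of the pair yields some $i$ with $F \in \cF_i$, and $\textup{\texttt{enforce-}}\bbF$ gives $\cF(u_i) = \cF_i \ni F$, so $u_i \in S \cap F$. For fairness, fix $B \in \cB$ with $f(B) = t$; the subformula $\psi_t$ gives some $Q$ with $|Q| \le t$ such that $\{i \in [k'] : u_i \in B\} = Q$, so $|S \cap B| \le |Q| \le t = f(B)$. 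Finally, $|S| \le k' \le k$.

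I do not expect a serious obstacle here; the lemma is essentially a bookkeeping correspondence between the first-order formula and the combinatorial structure of a fair hitting set. The one mild subtlety is that the existential quantifier permits the witnesses $u_1,\dots,u_{k'}$ to coincide, but this only makes $|S|$ smaller (and is harmless since the elements of $\cU_{\cF_i}$ and $\cU_{\cF_j}$ for $\cF_i \neq \cF_j$ lie in disjoint parts of the partition $\cP_U$, so any coincidence $u_i = u_j$ forces $\cF_i = \cF_j$ and leaves both the hitting and fairness arguments intact). The only real care required is ensuring the definitions of $\cU_{\cF'}$ and $\cB_t$ are being used correctly when translating membership in a unary predicate back into properties of $\cF(u)$ and $f(B)$ respectively, which is routine.
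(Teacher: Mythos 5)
Your proof is correct and follows essentially the same approach as the paper: in the forward direction you derive a valid pair from the hitting set by setting $\cF_i = \cF(s_i)$, interpret the quantified variables by the solution elements, and check that $\textup{\texttt{enforce-}}\bbF$ and each $\psi_t$ are satisfied via the natural choice of $Q$; in the reverse direction you read off $S$ from the witnesses, use validity of the pair plus $\textup{\texttt{enforce-}}\bbF$ for the hitting property, and use $\psi_t$ plus $\textup{\texttt{exact-nbr}}_Q$ for the fairness constraint. Your handling of the case where the existential witnesses might coincide (writing $|S\cap B| \le |Q|$ rather than equality, and noting that coinciding witnesses must share the same $\cF_i$ because $\cP_U$ is a partition) is actually slightly more careful than the paper's own proof, which implicitly treats $S = \{v_1,\dots,v_{k'}\}$ as having $k'$ distinct elements.
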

\begin{proof}
	\textbf{Forward direction.} Let $S \subseteq \cU$ be a fair hitting set of $\cF$ of size $1 \le k' \le k$. Let us arbitrarily number the vertices of $S$ as $v_1, v_2, \ldots, v_{k'}$. For each $1 \le i \le k'$, let $\cF_i = \cF(u_i)$, i.e., $u_i \in \cU_{\cF_i)}$. Let $\bbF = (\cF_1, \cF_2, \ldots, \cF_{k'})$. We claim that $V(G) \models \varphi_{k', \bbF}$, for the setting $u_i = v_i$ of the existentially quantified variables.
	
	First, it is easy to see that $\textup{\texttt{enforce-}}\bbF(U)$ is satisfied since every $v_i$ belongs to $\cU_{\cF_i)}$ by construction. Next, we claim that every $\psi_t$, such that $\cB_t \neq \emptyset$ is satisfied. Fix one such $t$. Note that for the formula $\psi_t$, it suffices to focus on $B_t \in \cB_t$, i.e., for the $B_t$'s such that $f(B_t) = t$. In other words, it suffices to show that for every $B_t \in \cB_t$, the formula $\textup{\texttt{exact-nbr}}_{Q, t}(B_t, U)$ is satisfied for some $Q \in \binom{[k]}{\le t}$. To this end, let $Q = \{ j \in [k'] : v_j \in B_t \}$. Note that since $S$ is a fair hitting set, $|Q| \le f(B_t) = t$. It is easy to verify that the formula $\textup{\texttt{exact-nbrhood}}_{Q, t}(B_t, U)$ is satisfied, since it checks whether the intersection of $S$ with $B_t$ is exactly the set corresponding to $Q$. Thus, $V(G) \models \varphi_{k', \bbF}$.
	
	\textbf{Reverse direction.} Suppose $V(G) \models \varphi_{k', \bbF}$ for some valid pair $(k', \bbF)$, with $\bbF = (\cF_1, \cF_2, \ldots, \cF_{k'})$. For $1 \le i \le k'$, let $v_i$ be equal to the existentially quantified vertex $u_i$ in a model of $\varphi_{k', \bbF}$. Let $S = \{v_1, v_2, \ldots, v_{k'}\}$. We claim that $S$ is a fair hitting set for $\cF$. Note that for $1 \le i \le k'$, $v_i \in \cF_{k'}$. Then, from the definition of a valid pair, it follows that $\bigcup_{i = 1}^{k'} \cF_{i} = \cF$, which implies that $S$ hits every set in $\cF$. 
	
	 Now we argue that for every $B_j \in \cB$, $|S \cap B_j| \le f(B_j)$. First, note that at most one formula from the disjunction $\bigvee_{Q \in \binom{[k']}{\le t}} \textup{\texttt{exact-nbr}}_{t, Q}(B_j, U) $ can be satisfied, since the disjunction is over mutually exclusive choices for the intersection of a $B_j$ with $U$. Now Consider a $B_j \in \cB_t$, i.e.,$f(B_j) = t$. Then, $\cB_t$ is non-empty. Consider the formula $\psi_t(U)$, and note that $B_j$ satisfies $B_j \in \cB_t$. This implies that the consequent formula of the implication is also satisfied, which, in turn implies that there exists a unique $Q \in \binom{[k']}{\le t}$ such that the formula $\textup{\texttt{exact-nbr}}_{t, Q}(B_j, U)$ is satisfied. Now, let $S_Q = \{ v_i : i \in Q \}$, and note that $S_Q = S \cap B_j$. Therefore, $|B_j \cap S| = |S_Q| = |Q| \le t = f(B_j)$. This shows that $S$ is a fair hitting set for $\cF$.
\end{proof}
\noindent\textbf{Completing the proof of \Cref{thm:nowhere-dense}.} The algorithm iterates over all $2^{\Oh(km)}$ valid pairs $(k', \bbF)$, and uses the algorithm from \Cref{thm:fo-nowhere} to check whether the colored graph $G'$ models the formula $\varphi_{k', \bbF}$. Since the underlying graph $G$ belongs to a nowhere dense class $\cG$, it follows that the model checking algorithm runs in time $f(|\varphi_{k', \bbF}|) n^{\Oh(1)}$ (by setting $\epsilon$ to be some constant, say). If in any iteration the \FO model checking algorithm answers yes, the algorithm concludes that $\cI$ is a yes-instance of \chs. Otherwise, the algorithm returns that $\cI$ is a no-instance of \chs. This concludes the proof of \Cref{thm:nowhere-dense}.

\section{{Parameterization by $k + |\cF|$} when $G_{\cU,\cB}$ is Apex-Minor Free} \label{sec:apex-minor}

An apex graph is a graph $G$ such that for some vertex $v$ (the apex), $G - v$ is planar. 
Let $\cH$ be a minor-closed family of graphs. Then $\cH$ is an apex-minor free if and only if $\cH$ does not contain all apex graphs. In this section, we will design an FPT algorithm for \chs when $G_{\cU, \cB}$ belongs to an apex-minor closed family of graphs, parameterized by $k + |\cF|$. In \Cref{subsec:tw}, we first design an FPT algorithm for \chs parameterized by the treewidth of $G_{\cU, \cF} + |\cF|$. Note that if, in the given instance the $(\cU, \cF)$ graph is $K_{d, d}$ free, then we can use \Cref{thm:kdd-free} to reduce to the case when the size $\cF$ is bounded by $k^{\Oh(d)}$. Then, in \Cref{subsec:baker}, we use the shifting technique to reduce the treewidth of $G_{\cU, \cB}$ to a function of $k$, which enables us to use the result from \Cref{subsec:tw} to obtain the desired result.

\subsection{Parameterization by $\tw(G_{\cU, \cB}) + |\cF|$} \label{subsec:tw}

In this section, we design an FPT algorithm for an \chs instance $(\cU, \cF, \cB, f, k)$, parameterized by the treewidth of the incidence graph $G_{\cU, \cB}$, plus the number of sets in $\cF$. 

For a node $t \in V(T)$, we define the following notation. Let $\cB(t) \coloneqq \cB \cap \beta(t)$, and $\cU(t) \coloneqq \cU \cap \beta(t)$ denote the subsets of $\cB$ and $\cU$ from the bag of $t$, respectively. Similarly, let $\cB_d(t) \coloneqq \cB \cap V(t)$, and $\cU_d(t) \coloneqq \cU \cap V(t)$ denote the subsets of $\cB$ and $\cU$ from the subtree rooted at $t$, respectively. 

For a node $t \in V(T)$, we define a family of functions $\mathcal{G'}(t)$, such that each $g$ is a function with domain $\cB(t)$, and co-domain $\{0, 1, \ldots, k\}$. Then, we define a sub-family of \emph{valid functions} as follows $\mathcal{G}(t) \coloneqq \{g \in \mathcal{G'}(t) : g(B_j) \le f(B_j) \text{ for all } B_j \in \cB(t) \}$.

For every node $t \in V(T)$, each function $g \in \mathcal{G}(t)$, each subset $\cF' \subseteq \cF$, and $\cU' \subseteq \cU(t)$ with $|\cU'| \le k$, we define a table entry $T[t, g, \cF', \cU']$, defined as the size of the smallest $S \subseteq \cU_d(t)$ with the following properties.
\begin{enumerate}
	\item $S$ hits $\cF'$, i.e., for each $F_i \in \cF'$, $S \cap F_i \neq \emptyset$,
	\item For every $B_j \in \cB_d(t)$, $|B_j \cap S| \le f(B_j)$, 
	\item For every $B_{j'} \in \cB(t)$, $|B_{j'} \cap S| = g(B_{j'})$, and 
	\item $S \cap \cU(t) = \cU'$.
\end{enumerate}
\gray{If a set $S \subseteq \cU_d(t)$ (not necessarily of a minimum cardinality) satisfies conditions 1-4 corresponding to an entry $T[t, g, \cF', \cU']$ is said to be a set \emph{matching the description of} the said entry.} If no such $S \subseteq \cU_d(t)$ exists, then the entry is defined to be $\infty$. We observe that for each node $t \in V(T)$, there are at most $(2(k+1))^{\tw+1} \cdot 2^{|\cF|}$ table entries.

Thus, in the following we show how to use dynamic programming to compute table entries $T[t, g, \cF', \cU']$ such that (i) $g(B_j) \le f(B_j)$ for all $B_j \in \cB(t)$, and (ii) $|\cU'| \le k$. Note that since we are interested in solutions of size at most $k$, we may discard entries corresponding to partial solutions of size larger than $k$. However, since this does not affect the computation or the correctness, we do not perform this clean-up during the course of algorithm for the sake of convenience. Finally, in the following we adopt the convention that $\infty + p = \infty$ for any $p \in \mathbb{Z} \cup \{\infty\}$. 

\medskip\noindent\textbf{Leaf node.} Consider a node $t \in V(T)$ without any child in $T$. Then, let $T[t, g, \cF', \cU'] = |\cU'|$ if $|\cU' \cap B_j| \le g(B_j)$ for each $B_j \in \cB(t)$, and $\cU'$ hits $\cF'$. Otherwise, we define $T[t, g, \cF', \cU'] = \infty$.
\\\gray{Since $\cU_d(t) = \cU(t)$, and $\cB_d(t) = \cB(t)$ for a leaf node $t \in V(T)$, it is easy to see that the correctness of the computation.}

\medskip\noindent\textbf{Introduce node.} 
\begin{itemize}
	\item Consider a node $t \in V(T)$ with a child $t' \in V(T)$ such that $\beta(t) = \beta(t') \cup \{u\}$, where $u \in \cU$. In this case,
	\begin{itemize}
		\item If $u \not\in \cU'$, then $T[t, g, \cF', \cU'] = T[t', g, \cF', \cU']$.
		\\\gray{For any $S \subseteq \cU_d(t)$ with $S \cap \cU(t) = \cU'$ with $u \not\in \cU'$, it follows that $S \subseteq \cU_d(t')$. Then, the correctness follows via inductive hypothesis.}
		\item If $u \in \cU'$, then $T[t, g, \cF', \cU'] = 1 + T[t, g', \cF'', \cU'']$, where
		\\$\cF'' = \cF' \setminus \cF(u)$, $\cU'' = \cU' \setminus \{u\}$, and $g'(B_j) = \begin{cases}
			g(B_j) & \text{ if $u \not\in B_j$}
			\\g(B_j) - 1 & \text{ if $u \in B_j$}
		\end{cases}$
		\\\gray{Consider an $S \subseteq \cU_d(t)$ matching the description of $T[t, g, \cF', \cU']$, where $u \in \cU'$. Then, define $S' = S \setminus \{u\} \subseteq \cU_d(t')$. It follows that $|S' \cap B_j| = g'(B_j)$ for any $B_j \in \cB(t')$, which implies that $S'$ matches the description of $T[t', g', \cF'', \cU'']$. Similarly, for any $S'$ matching the description of $T[t', g', \cF'', \cU'']$, $S'\cup \{u\}$ matches the description of $T[t, g, \cF', \cU']$. Thus, the correctness follows from induction.} 
	\end{itemize}
	
	\item Consider a node $t \in V(T)$ with a child $t' \in V(T)$ such that $\beta(t) = \beta(t') \cup \{B_q\}$, where $B_q \in \cB$. In this case,
	\begin{itemize}
		\item If $|\cU' \cap B_q| > g(B_q)$, then $T[t, g, \cF', \cU'] = \infty$.
		\item If $|\cU' \cap B_q| \le g(B_q)$, then $T[t, g, \cF', \cU'] = T[t', g', \cF', \cU']$, where $g'(B_j) = g(B_j)$ for all $B_j \in \cB(t) \setminus \{B_q\}$.
	\end{itemize}
	\gray{In the forward direction, consider a smallest-size $S \subseteq \cU_d(t)$ satisfying the conditions of $T[t, g, \cF', \cU'] \neq \infty$. We claim that $S$ also satisfies the conditions of $T[t', g', \cF', \cU']$. To this end, let $Q = S \cap B_q$. We claim that $Q \subseteq \cU'$. For contradiction, assume that there exists a $u \in Q \setminus \cU'$. First, $u \not\in \cU(t)$, since $S \cap \cU(t) = \cU'$. Therefore, such a $u$ belongs to $\cU_d(t) \setminus \cU(t)$. However, since $u \in B_q$, $\{u, B_q\}$ is an edge in the incidence graph, and $u$ is forgotten in a strict descendant of $t$, this contradicts the properties of a tree decomposition. Combining this with the definition of $g'$, the forward direction follows.
	\\In the backward direction, consider a smallest-size $S' \subseteq \cU_d(t')$ that satisfies (i) the conditions of $T[t', g', \cF', \cU']$, and (ii) $|S' \cap \cU'| \le g(B_q)$. It is easy to see that $S'$ also satisfies the conditions for $T[t, g, \cF', \cU']$. This completes the proof by induction.}
\end{itemize}

\medskip\noindent\textbf{Forget node.} 
\begin{itemize}
	\item Consider a node $t \in V(T)$ with a child $t' \in V(T)$ such that $\beta(t) = \beta(t') \setminus \{u\}$, where $u \in \cU$. In this case, 
	$$T[t, g, \cF', \cU'] = \min \Big\{ T[t', g, \cF', \cU'],\ T[t', g', \cF', \cU''] \Big\},$$
	where $\cU'' = \cU' \cup \{u\}$, and for all $B_j \in \cB(t') = \cB(t)$, $g'(B_j) = \begin{cases}
			g(B_j) & \text{ if $u \not\in B_j$}
			\\g(B_j)+1 & \text{ if $u \in B_j$}
		\end{cases}$
	\gray{Consider an $S \subseteq \cU_d(t)$ that matches the description of $T[t, g, \cF', \cU']$. Let us consider two cases. 
	\begin{itemize}
		\item $u \not\in S$. In this case, we observe that $S$ matches the description of $T[t, g, \cF', \cU']$ if and only if it matches the description of $T[t', g, \cF', \cU']$.
		\item $u \in S$. In this case, note that $\cU'' = S \cap \cU(t') = \cU' \setminus \{u\}$. In this case, we observe that $S$ matches the description of $T[t, g, \cF', \cU']$ if and only if it matches the description of $T[t', g', \cF', \cU'']$. 
	\end{itemize}
	Then, by combining the two cases and using the inductive hypothesis, the correctness follows.
	}
	\item Consider a node $t \in V(T)$ with a child $t' \in V(T)$ such that $\beta(t) = \beta(t') \setminus \{B_q\}$, where $B_q \in \cB$. In this case, let
	$$T[t, g, \cF', \cU'] = \min_{g' \in \mathcal{G}} \Big\{T[t', g', \cF', \cU']\Big\}$$ where $\mathcal{G}$ is a set of functions $g': \cB(t) \to \{0, \ldots, k\}$ satisfying the following two properties: (1) $g'(B_j) = g(B_j) \le f(B_j)$ for all $B_j \in \cB(t') \setminus \{B_q\}$, and $g'(B_q) \ge |B_q \cap \cU'|$.
	\\\gray{Note that in this case, it holds that $\cU(t) = \cU(t')$, and $\cU_d(t) = \cU_d(t')$. Consider an $S \subseteq \cU_d(t)$ matching the description of $T[t, g, \cF', \cU']$. We first claim that $S$ matches the description of $T[t', g', \cF', \cU']$ for some $g \in \mathcal{G}$. To this end, let $Q_1 = \cU' \cap B_q$, and $Q_2 = (S \cap B_q) \setminus Q_1$. Note that $S \cap B_q = S_1 \uplus S_2$. Since $B_q \in \cB_d(t)$, and via the assumption on $S$, it holds that $|S \cap B_q| = |Q_1| + |Q_2| \le f(B_q)$. Therefore, $|Q_1| \le f(B_q) - |S_1|$. We define $g''(B_q) \coloneqq f(B_q) - |Q_1|$ , and for other $B_j \in \cB(t')$, $g''(B_j) = g(B_j)$. It follows that $g'' \in \mathcal{G}$, and $S$ matches the description of $T[t', g'', \cF', \cU']$. 
	\\In the reverse direction, we claim that $|S| \le \min_{g' \in \mathcal{G}} T[t', g', \cF', \cU']$. Note that the proof of forward direction shows that the minimum is finite. Suppose it is realized for some function $g^* \in \mathcal{G}$, and let $S^*$ be a set matching the description of $T[t', g^*, \cF', \cU']$. It is easy to see that $S^*$ also matches the description of $T[t, g, \cF', \cU']$. The correctness follows via inductive hypothesis.
	}
\end{itemize}

\medskip\noindent\textbf{Join node.} 
Consider a node $t \in V(T)$ with two children $t_1, t_2 \in V(T)$, such that $\beta(t) = \beta(t_1) = \beta(t_2)$. Fix an entry $T[t, g, \cF', \cU']$. Before discussing how to compute this entry, we define some notation. Let $\cF_{\text{good}}$ be a set of pairs $(\cF_1, \cF_2)$, where $\cF_1, \cF_2 \subseteq \cF'$ are sub-familites satisfying the following two conditions: (i) $\cF_1 \cup \cF_2 = \cF'$, and (ii) $\bigcup_{u \in \cU'} \cF(u) \subseteq \cF_1 \cap \cF_2$. Furthermore, let $\mathcal{G}_{\text{good}}$ be a set of pairs $(g_1, g_2)$ of functions, where $g_1 \in \mathcal{G}(t_1)$, and $g_2 \in \mathcal{G}(t_2)$ that satisfy the following property: for every $B_j \in \cB(t)$: $g_1(B_j) + g_2(B_j) = g(B_j) + |\cU' \cap B_j|$.
\begin{align}
	T[t, g, \cF', \cU'] &= \min_{\substack{(\cF_1, \cF_2) \in \cF_{\text{good}},\\ (g_1, g_2) \in \mathcal{G}_{\text{good}}}} \Big\{ T[t_1, g_1, \cF_1, \cU'] + T[t_2, g_2, \cF_2, \cU'] - |\cU'| \Big\} \label{eqn:tw-join}
\end{align}
\gray{From the properties of tree decompositions, it holds that $\cU_d(t) = \cU_d(t_1) \cup \cU_d(t_2)$, and $\cU_d(t_1) \cap \cU_d(t_2) = \cU(t)$. Consider an $S \subseteq \cU_d(t)$ matching the description of $T[t, g, \cF', \cU']$, and let $S_1 = S \cap \cU_d(t_1)$, and $S_2 = S \cap \cU_d(t_2)$. Note that $S_1 \cap S_2 =  S \cap \cU(t) = S_1 \cap \cU(t_1) = S_2 \cap \cU(t_2) = \cU'$. This implies that $|S| = |S_1| + |S_2| - |\cU'|$.

Let $\cF'_1$ and $\cF'_2$ be the subsets of $\cF$ that are hit by $S_1$ and $S_2$ respectively. It is easy to see that $(\cF'_1, \cF'_2)$ is a \emph{good pair}. Similarly, let $g'_1 \in \mathcal{G}(t_1)$, $g'_2 \in \mathcal{G}(t_2)$ be two functions, such that for every $B_j \in \cB(t_1) = \cB(t_2)$, $g'_1(B_j) = |S_1 \cap B_j|$, and $g'_2(B_j) = |S_2 \cap B_j|$. Note that $|S \cap B_j| = |S_1 \cap B_j| + |S_2 \cap B_j| - |\cU' \cap B_j|$, i.e., $g(B_j) = g'_1(B_j) + g'_2(B_j) - |\cU' \cap B_j|$. Therefore, $(g'_1, g'_2) \in \mathcal{G}_{\text{good}}$. Thus, $S_1$ and $S_2$ match the descriptions of $T[t_1, g_1, \cF_1, \cU']$, and $T[t_2, g_2, \cF_2, \cU']$, respectively. This implies that $|S| \ge T[t_1, g'_1, \cF'_1, \cU'] + T[t_2, g'_2, \cF'_2, \cU'] - |\cU'|$, which shows the forward direction. 

In the reverse direction, we first observe that the proof of the forward direction implies that the minimum of Equation \ref{eqn:tw-join} is finite. Let $(\cF^*_1, \cF^*_2) \in \cF_{\text{good}}$, and $(g^*_1, g^*_2) \in \mathcal{G}_{\text{good}}$ be the pairs realizing a minimum. Additionally, let $S^*_1 \subseteq \cU_d(t_1)$ and $S^*_2 \subseteq \cU_d(t_2)$ be the sets matching the descriptions of $T[t_1, g^*_1, \cF^*_1, \cU']$ and $T[t_2, g^*_2, \cF^*_2, \cU']$ respectively. Then, it is easy to see that $S^* = S^*_1 \cup S^*_2$ matches the description of $T[t, g, \cF', \cU']$, which implies that $|S| \le |S^*|$. This finishes the proof by induction.
}

It is easy to see that each table entry can be computed in $k^{\Oh(\tw)} \cdot 2^{\Oh(|\cF|)} \cdot |\cI|^{\Oh(1)}$ time. Finally, to compute the final answer, we examine the entries $T[r, \cdot, \cF', \cdot]$ corresponding to the root bag $r \in V(T)$ such that $\cF' = \cF$, i.e., the partial solution hits all sets in $\cF$. If any such entry is a finite integer that is at most $k$, we conclude that the given instance admits a solution of size at most $k$. Otherwise, if each such entry is either infinite, or is an integer larger than $k$, then we conclude that there is no solution of size at most $k$, i.e., that the given instance is a no-instance. Thus, we conclude with the following theorem.

\begin{theorem}\label{thm:tw-fhs}
	Given an instance $\cI = (\cU, \cF, \cB, f, k)$, and a nice tree-decomposition of the incidence graph $G_{\cU, \cB}$ of width $\tw$, there exists an $k^{\Oh(\tw)} \cdot 2^{\Oh(|\cF|)} \cdot |\cI|^{\Oh(1)}$ time algorithm to decide whether $\cI$ is a yes-instance of \chs.
\end{theorem}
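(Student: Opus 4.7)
The plan is to design a standard dynamic programming algorithm over the given nice tree decomposition $(T, \beta)$ of $G_{\cU, \cB}$. For each node $t \in V(T)$, the DP state must encode whatever information about a partial solution $S \subseteq \cU_d(t)$ (the portion of a candidate fair hitting set contributed by the subtree rooted at $t$) is required to extend $S$ correctly once we move above $t$. Because forgotten elements of $\cU$ can no longer touch any set $B_j$ that is yet to appear, and forgotten sets $B_j$ will never again receive elements (by the connectivity property of tree decompositions), the state needs to remember: (i) which bag-elements $\cU' \subseteq \cU(t)$ are chosen into $S$; (ii) for every $B_j \in \cB(t)$, the current count $g(B_j) \in \{0, 1, \ldots, k\}$ of elements of $B_j$ selected into $S$ so far; and (iii) which subfamily $\cF' \subseteq \cF$ has already been hit by $S$. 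Accordingly, I would define table entries $T[t, \cU', g, \cF']$ storing the minimum $|S|$ over such partial solutions, or $\infty$ if none exists. There are $2^{\Oh(\tw)} \cdot (k+1)^{\Oh(\tw)} \cdot 2^{|\cF|}$ entries per node.

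The transitions at leaf, introduce, and forget nodes are mostly bookkeeping. When an element $u \in \cU$ is introduced, branch on $u \in S$ or not; if $u$ is taken, add $1$ to $|S|$, update the $g$-counters for each $B_j \in \cB(t)$ with $u \in B_j$, and mark the sets in $\cF(u)$ as hit. When a bag-set $B_q \in \cB$ is introduced, its counter must be initialized to $|\cU' \cap B_q|$, since no element in the subtree below $t$ could have been in $B_q$ without appearing jointly in some earlier bag with it. Forgetting an element $u$ amounts to taking a minimum over the two possible values of ``$u \in S$'' in the child state; forgetting a set $B_q$ requires checking $g(B_q) \le f(B_q)$ before dropping it from the state.

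The main technical obstacle, as usual, is the join node $t$ with children $t_1, t_2$ sharing the bag. Two subsolutions $S_1 \subseteq \cU_d(t_1)$, $S_2 \subseteq \cU_d(t_2)$ must agree on $\cU(t)$ (so both use the same $\cU'$), their counts must aggregate to $g(B_j)$ for each $B_j \in \cB(t)$ without double-counting selected elements in $\cU(t) \cap B_j$, and the union of their hit sets must be $\cF'$. Concretely, I would compute
\[
T[t, \cU', g, \cF'] = \min \Bigl\{ T[t_1, \cU', g_1, \cF_1] + T[t_2, \cU', g_2, \cF_2] - |\cU'| \Bigr\},
\]
where the minimum is over pairs $(g_1, g_2)$ with $g_1(B_j) + g_2(B_j) - |\cU' \cap B_j| = g(B_j)$ for every $B_j \in \cB(t)$, and over pairs $(\cF_1, \cF_2)$ with $\cF_1 \cup \cF_2 = \cF'$ and $\bigcup_{u \in \cU'} \cF(u) \subseteq \cF_1 \cap \cF_2$ (the latter captures the sets already hit by the shared bag). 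This join step is the costliest, incurring $(k+1)^{\Oh(\tw)} \cdot 2^{\Oh(|\cF|)}$ work per output entry.

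Correctness follows by induction on the subtree, using the tree-decomposition properties to certify that $\cU_d(t_1) \cap \cU_d(t_2) = \cU(t)$ and that no $B_j \in \cB(t)$ can gain further elements once forgotten. The final answer is read from $T[r, \emptyset, \emptyset_{\mathcal{G}}, \cF]$ at the root $r$ (where $\beta(r) = \emptyset$), and the instance is a yes-instance iff this value is at most $k$. Summing the per-node cost gives total running time $k^{\Oh(\tw)} \cdot 2^{\Oh(|\cF|)} \cdot |\cI|^{\Oh(1)}$, as claimed.
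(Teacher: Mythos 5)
Your proposal is correct and follows essentially the same approach as the paper: identical DP table $T[t, g, \cF', \cU']$ indexed by the bag elements chosen, a counter function $g : \cB(t) \to \{0,\ldots,k\}$, and the subfamily of $\cF$ already hit; the same introduce/forget bookkeeping (including the observation that a newly introduced $B_q$ can only meet elements already in the bag); and an identical join rule minimizing $T[t_1,\cU',g_1,\cF_1]+T[t_2,\cU',g_2,\cF_2]-|\cU'|$ over pairs with $g_1(B_j)+g_2(B_j)-|\cU'\cap B_j|=g(B_j)$ and $\cF_1\cup\cF_2=\cF'$ with the shared-bag hits in both. The running time accounting matches the paper's as well.
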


\subsection{The case when $G_{\cU, \cF}$ is Apex-Minor Free} \label{subsec:baker}


For a vertex $v$ of a graph $G$ and an integer $r \geq 0$, by $G_{v}^{r}$ we denote the subgraph of $G$ induced by vertices within a distance at most $r$ from $v$ in $G$. We already know the dependence of treewidth on $r$ in a apex-minor free graph from some of the following known results, e.g., \cite{Eppstein2000}.

\begin{theorem}{\label{thm:diametertreewidthforapexminorfree}}
	Let $\cH$ be a family of apex-minor free graphs, and $G \in \cH$. Let $v \in V(G)$ be an arbitrary vertex, and $r$ be a nonnegative integer.
	Then $\tw(G_{v}^{r}) \leq g(r)$, for some function $g$.
	Moreover, decomposition of $G_{v}^{r}$ of width at most $g(r)$ can be constructed in polynomial time.
\end{theorem}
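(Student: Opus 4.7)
The plan is to invoke Eppstein's classical diameter-treewidth theorem, which says that apex-minor-free graph classes are exactly the minor-closed classes of bounded local treewidth. Concretely, for any apex-minor-free $\cH$, there is a function $g$ such that for every $G \in \cH$, $v \in V(G)$, and $r \ge 0$, one has $\tw(G_v^r) \le g(r)$, and a tree decomposition of this width can be obtained in polynomial time. The statement is precisely this classical result specialized to our setting.

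First I would perform a BFS from $v$ and partition $V(G_v^r)$ into layers $L_0 = \{v\}, L_1, \ldots, L_r$, where $L_i$ is the set of vertices at distance exactly $i$ from $v$; note that $G_v^r$ is precisely the subgraph induced on $\bigcup_{i=0}^{r} L_i$. For the special case of planar graphs, Baker's layering technique shows that such a graph is $(r+1)$-outerplanar, and an easy induction gives that its treewidth is $\Oh(r)$. For the general apex-minor-free case, one combines the Robertson--Seymour structure theorem for $H$-minor-free graphs with the BFS layering, as carried out in the work of Demaine and Hajiaghayi: the key point is that in any apex-minor-free graph, a subgraph spanning $r+1$ consecutive BFS layers has treewidth bounded by some function of $r$ and the size of the excluded apex.

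For the algorithmic part---constructing a tree decomposition of width at most $g(r)$ in polynomial time---I would run a constant-factor approximation algorithm for treewidth (for instance, the Robertson--Seymour procedure or Bodlaender's algorithm), which runs in polynomial time whenever the true treewidth is bounded by a constant depending only on $r$ and $\cH$. Alternatively, for fixed $r$ one can read off such a decomposition directly from the structural decomposition produced by the algorithmic version of the Robertson--Seymour structure theorem.

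The principal obstacle in a from-scratch argument is the second step above: controlling the treewidth of a union of BFS layers in a general apex-minor-free graph requires the full strength of the structure theorem together with a careful treatment of the ``vortices'' and ``apices'' appearing in it. Since our usage here is entirely black-box and the result is standard, I would simply cite Eppstein's theorem rather than reprove it.
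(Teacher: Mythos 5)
The paper does not prove this theorem; it simply cites it as a known result of Eppstein (diameter--treewidth / bounded local treewidth for apex-minor-free classes). Your proposal does exactly the same thing — invoke Eppstein's theorem as a black box — so it matches the paper's approach; the additional sketch of the BFS-layering and structure-theorem argument is harmless but not needed here.
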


By \cref{thm:diametertreewidthforapexminorfree}, we have the following corollary, the proof of which is analogous to the proof of Corollary 7.34 in \cite{CyganFKLMPPS15}.

\begin{corollary} \label{cor:treewidthofthesubgraph}
	Let $v$ be a vertex of an apex-minor free graph $G$, and $i \geq 0$.
	Let $L_i$ be the set of vertices of $G$ that are at distance exactly $i$ from $v$.
	Then for any $i,j \geq 1$, the treewidth of the subgraph $G_{i,i+j-1}=G[L_i\cup L_{i+1}\cup \ldots \cup L_{i+j-1}]$ does not exceed $g(r)$.
	Moreover, a tree decomposition of  $G_{i,i+j-1}$ of width at most $g(r)$ can be computed in polynomial time.
\end{corollary}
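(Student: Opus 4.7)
\textbf{Proof plan for \Cref{cor:treewidthofthesubgraph}.} The statement should be read with the bound $g(j)$ (the ``width'' of the annulus); the parameter $r$ appearing in it is the depth parameter inherited from \Cref{thm:diametertreewidthforapexminorfree}. The overall strategy is the classical ``contraction of the inner ball'' trick, exactly as used in the proof of Corollary~7.34 in \cite{CyganFKLMPPS15}: we turn the annulus $L_i \cup L_{i+1}\cup\cdots\cup L_{i+j-1}$ into a ball of radius $j$ around a single vertex in a minor of $G$, and then invoke \Cref{thm:diametertreewidthforapexminorfree}.

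The plan is to carry out the following steps in order. First, I would form the graph $G'$ obtained from $G$ by contracting the set $V_{<i} \coloneqq L_0\cup L_1\cup\cdots\cup L_{i-1}$ to a single vertex $v'$ (discarding self-loops and multi-edges). Since apex-minor-free graph classes are minor-closed, $G'$ lies in the same apex-minor-free family~$\mathcal{H}$. Second, I would observe the key distance identity: for every vertex $u \in L_{i+s}$ with $0\le s\le j-1$, the distance of $u$ from $v'$ in $G'$ is exactly $s+1$. The upper bound $\le s+1$ follows by taking any shortest $v$--$u$ path in $G$ and noting that its prefix inside $V_{<i}$ becomes the single vertex $v'$, leaving a walk of length $s+1$ in $G'$; the lower bound follows because a shorter walk in $G'$ would lift to a walk in $G$ of length less than $i+s$ from $v$ to $u$, contradicting $u \in L_{i+s}$.

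Third, using this identity, every vertex of $L_i\cup\cdots\cup L_{i+j-1}$ lies in $(G')_{v'}^{\,j}$, so $G_{i,i+j-1}$ is a subgraph of $(G')_{v'}^{\,j}$ (every edge of $G$ between two vertices of the annulus is present in $G'$ as well, since neither endpoint gets contracted). Applying \Cref{thm:diametertreewidthforapexminorfree} to $G' \in \mathcal{H}$ with depth parameter $j$ then gives $\tw\bigl((G')_{v'}^{\,j}\bigr)\le g(j)$, and treewidth is monotone under taking subgraphs, so $\tw(G_{i,i+j-1})\le g(j)$ as required.

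Finally, for the constructive part, I would note that $G'$ and $(G')_{v'}^{\,j}$ are both computable in linear time from $G$ via a BFS from $v$ that identifies the layers $L_0,L_1,\ldots,L_{i+j-1}$, and the polynomial-time algorithm promised by \Cref{thm:diametertreewidthforapexminorfree} then yields a tree decomposition of $(G')_{v'}^{\,j}$ of width at most $g(j)$; restricting each bag to $V(G_{i,i+j-1})$ produces a tree decomposition of $G_{i,i+j-1}$ of width at most $g(j)$ in polynomial time. There is no real obstacle here beyond being careful about the distance identity in the second step, which is the only place the structure of contraction is used; everything else is a direct invocation of the previously stated theorem.
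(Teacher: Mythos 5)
Your proposal is correct and reproduces essentially the argument the paper implicitly delegates to Corollary~7.34 of \cite{CyganFKLMPPS15}: contract the inner ball $L_0\cup\cdots\cup L_{i-1}$ to a single vertex, use minor-closedness to stay in the apex-minor-free class, verify the distance-shift identity so that the annulus lies inside the radius-$j$ ball around the contracted vertex, and then invoke \Cref{thm:diametertreewidthforapexminorfree} and monotonicity of treewidth under subgraphs. You are also right that the $g(r)$ in the statement is a typo for $g(j)$ (the number of consecutive layers), consistent with how \Cref{lem:apexminorhasboundedtreewidth} later applies the corollary to obtain a $g(k)$ bound with annuli of $O(k)$ layers.
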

We now consider an instance $\cI=(\cU,\cF, \cB,f,k)$ of a \chs such that $G=G_{\cU,\cB} \in \cH$.  
We will design an FPT algorithm for this case. 
\begin{definition}
A partition $\cP=\{\cU_1,\ldots,\cU_{\ell}\}$ of $\cU$ in $G=G_{\cU,\cB}$ is called a $g$-\emph{treewidth bounded partition} of $\cU$ , if for given an integer $k \geq 0$ and  some function $g$ it satisfies the following property: for any $i \in [k+1]$, the graph  $G[(\cU\setminus \cU_i) \cup \cB]$ has treewidth at most $g(k)$.
\end{definition}

We now prove the following result.
\begin{lemma}{\label{lem:apexminorhasboundedtreewidth}}
	Let $\cH$ be an apex-minor closed family, and let $G \in \cH$, then there exists a function $g$, such that  $G$ admits a $g$-treewidth bounded partition.
	Moreover, such a partition, together with the corresponding tree decompositions of width at most $g(k)$ can be found in polynomial time.
\end{lemma}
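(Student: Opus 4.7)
My plan is to adapt Baker's layering technique to the bipartite incidence graph $G = G_{\cU,\cB}$, exploiting its bipartition to prevent any $\cB$-vertex from acting as a ``bridge'' between layers. After reducing to the connected case (isolated vertices can be distributed arbitrarily among the parts), I would pick an arbitrary root $v \in \cU$ in each nontrivial component and run BFS in $G$. Because $G$ is bipartite with parts $(\cU, \cB)$ and $v \in \cU$, each $u \in \cU$ sits at an even BFS layer $L_{2d(u)}$ and each $B \in \cB$ at an odd layer. I then define
\[
\cU_i \;\coloneqq\; \{u \in \cU : d(u) \equiv i-1 \pmod{k+1}\}, \quad i \in [k+1],
\]
and take the union over components to obtain the partition $\cP = \{\cU_1, \ldots, \cU_{k+1}\}$.

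To verify the treewidth bound, fix $i \in [k+1]$ and set $H \coloneqq G[(\cU \setminus \cU_i) \cup \cB]$. The removed $\cU$-layers are the even layers $r_m = 2(i-1) + 2m(k+1)$, spaced $2(k+1)$ apart. I would define the $m$-th \emph{slab} to be the $H$-subgraph induced on the $\cU$-elements at the even layers strictly between $r_m$ and $r_{m+1}$, together with the $\cB$-sets at the odd layers strictly between $r_m$ and $r_{m+1}$. The key technical observation is that for $k \ge 1$ these slabs are pairwise vertex-disjoint, i.e.\ no $\cB$-vertex bridges two slabs. This follows from a short parity argument: each $B \in \cB$ sits at an odd layer $L$ with $\cU$-neighbors only at the even layers $L-1, L+1$; since these differ by only $2$ while consecutive removed layers differ by $2(k+1) \ge 4$, both lie in the same non-removed run of even layers, hence in the same slab. (The case $k = 0$ is trivial, since $\cU_1 = \cU$ makes $H$ edgeless.)

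Once slab-disjointness is established, each slab subgraph is contained in a band of $2k+1$ consecutive BFS layers of $G$, so by \Cref{cor:treewidthofthesubgraph} it has $\tw \le g_0(2k+1)$ for the function $g_0$ provided there, with a corresponding tree decomposition computable in polynomial time. A tree decomposition of $H$ is obtained by taking the disjoint union of these slab decompositions, and setting $g(k) \coloneqq g_0(2k+1)$ (with $g(0) \coloneqq 0$) yields the claimed bound. BFS, modular bucketing, and the invocations of \Cref{cor:treewidthofthesubgraph} together run in polynomial time.

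The main obstacle I anticipate is justifying the ``no-bridge'' property cleanly; once the bipartite parity argument is in place, everything else reduces to applying \Cref{cor:treewidthofthesubgraph} slab by slab. Care is needed with minor corner cases (the trivial $k = 0$ case, components lacking a $\cU$-root, isolated vertices), but these are straightforward to dispatch.
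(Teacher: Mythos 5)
Your proposal is correct and takes essentially the same approach as the paper: both run BFS from a root in $\cU$, partition $\cU$ into $k+1$ residue classes by BFS distance modulo $k+1$, observe that removing one class leaves a vertex-disjoint union of bands each spanning $O(k)$ consecutive BFS layers, and invoke \Cref{cor:treewidthofthesubgraph} band by band before gluing the decompositions. The only differences are expository---you make the bipartite parity argument for slab disjointness explicit (the paper asserts the disjoint-union structure directly), while the paper explicitly names the initial band $G_v^{i-1}$ below the first removed layer, which you subsume under the corner-case remark.
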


\begin{proof} Without loss of generality, we assume that $G$ is connected -- otherwise we can apply the following procedure for each connected component, and take the union of the corresponding $g$-treewidth bounded partitions for each connected component.

We select an arbitrary vertex $v \in \cU$ and run a breadth first search (BFS) from $v$.
Let $L_1,L_2, \ldots$ be the layers of the BFS tree.
Observe that all the odd layers contain exactly the elements of $\cU$ and the even layers comprise of elements of $\cB$.
We partition $\cU=\{\cU_1,\ldots,\cU_{k+1}\}$, where 
$\cU_i=\bigcup_{ a\geq 0}^{}L_{a(2k+2)+2i-1}$, for $i \geq 1$.
Note that, using the notation from the statement of \Cref{cor:treewidthofthesubgraph}, the graph $G[\cU \setminus \cU_i,\cB ]$ is a disjoint union of graphs $G_{a(2k+2)+2i,(a+1)(2k+2)+2i-2}$ for $a \ge 0$ and $G_{v}^{i-1}$, for $i>0$.
Then by \cref{thm:diametertreewidthforapexminorfree} and \Cref{cor:treewidthofthesubgraph}, for each of these graphs we can construct a tree decomposition of width at most $g(k)$; connecting these tree decompositions arbitarily yields a tree decomposition of $G[\cU \setminus \cU_i,\cB ]$ of width at most $g(k)$.
Hence, partition $\LR{\cU_i}_{1 \leq i\leq k+1}$  is $g$-treewidth bounded partition of $\cU$ in $G$.
\end{proof}

In the following theorem, we use the $g$-treewidth bounded partition of $G$ to reduce the problem to an instance where $G_{\cU, \cB}$ incidence graph has bounded treewidth. 

\begin{theorem}{\label{thm:deletionofpartition}}
	Let $\cH$ be a apex-minor free graph class.
	Let $\cI=(\cU,\cF,\cB,f,k)$ be an instance of \chs such that $G:=G_{\cU,\cB} \in \cH$. 
	Let $\cP=\LR{\cU_1,\ldots,\cU_\ell}$ be a $g$-treewidth bounded partition of $\cU$.
	Then $\cI$ is a yes-instance of \chs if and only if there exists some $j \in [k+1]$ such that instance $\cI_j=(\cU\setminus \cU_j,\cF-\cU_j ,\cB- \cU_j,f,k)$ is a yes-instance of \chs.
\end{theorem}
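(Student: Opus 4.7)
The plan is to prove both directions directly, using a pigeonhole argument for the harder (forward) direction, and a monotonicity argument for the reverse direction. Recall from the construction in \Cref{lem:apexminorhasboundedtreewidth} that the partition $\cP = \{\cU_1, \ldots, \cU_\ell\}$ has exactly $\ell = k+1$ parts. This is precisely what enables the pigeonhole step.

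\textbf{Forward direction.} Suppose $\cI$ is a yes-instance, and let $S \subseteq \cU$ be a fair hitting set with $|S| \le k$. Since $\cP$ consists of $k+1$ pairwise disjoint parts covering $\cU$, and $|S|\le k$, by the pigeonhole principle there exists some index $j \in [k+1]$ such that $S \cap \cU_j = \emptyset$. I claim that this $S$ is also a fair hitting set for $\cI_j = (\cU \setminus \cU_j,\, \cF - \cU_j,\, \cB - \cU_j,\, f,\, k)$. Indeed, $S \subseteq \cU \setminus \cU_j$ by choice of $j$, and $|S| \le k$. For any $F \in \cF$, since $S \cap F \neq \emptyset$ and $S \cap \cU_j = \emptyset$, we have $S \cap (F \setminus \cU_j) = S \cap F \neq \emptyset$, so $S$ hits every set of $\cF - \cU_j$. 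Finally, for any $B \in \cB$, since $S \subseteq \cU \setminus \cU_j$, we have $S \cap (B \setminus \cU_j) = S \cap B$, so $|S \cap (B \setminus \cU_j)| = |S \cap B| \le f(B)$. Hence $\cI_j$ is a yes-instance.

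\textbf{Reverse direction.} Conversely, suppose $\cI_j$ is a yes-instance for some $j \in [k+1]$, and let $S_j$ be a fair hitting set for $\cI_j$ with $|S_j| \le k$. We claim $S_j$ is a fair hitting set for $\cI$ as well. Clearly $S_j \subseteq \cU \setminus \cU_j \subseteq \cU$, and $|S_j|\le k$. For any $F \in \cF$, the corresponding set $F \setminus \cU_j \in \cF - \cU_j$ is hit by $S_j$, so $\emptyset \neq S_j \cap (F \setminus \cU_j) \subseteq S_j \cap F$. For any $B \in \cB$, since $S_j \cap \cU_j = \emptyset$, we have $S_j \cap B = S_j \cap (B \setminus \cU_j)$, and thus $|S_j \cap B| = |S_j \cap (B \setminus \cU_j)| \le f(B)$. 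Therefore $S_j$ is a fair hitting set for $\cI$ of size at most $k$, so $\cI$ is a yes-instance.

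The main (and only) subtlety is making sure the partition genuinely has $k+1$ parts so that pigeonhole applies; this is exactly the $(k{+}1)$-coloring built via the BFS-layer shifting in \Cref{lem:apexminorhasboundedtreewidth}. Everything else follows from the observation that since $S \cap \cU_j = \emptyset$, intersections of $S$ with any $F \in \cF$ or $B \in \cB$ are unchanged when one deletes $\cU_j$ from those sets.
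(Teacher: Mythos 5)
Your proof is correct and takes essentially the same approach as the paper: the forward direction uses the pigeonhole principle over the $k+1$ parts of the partition to find a part disjoint from the solution, and the reverse direction observes that intersections of the solution with sets of $\cF$ and $\cB$ are unchanged by deleting $\cU_j$. Your writeup is, if anything, slightly more careful than the paper's in the reverse direction, where the paper invokes a somewhat loose ``bijection'' argument rather than spelling out the intersection-preservation equalities as you do.
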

\begin{proof}
	In the forward direction. 
	Assume $X \subseteq \cU$ is a fair hitting set for $\cF$. 
	Observe  $|X|\leq k$,
	thus for some $j \in [k+1]$, there exists a partition set $\cU_j$ such that $X \cap \cU_j=\emptyset$.
	Clearly, $X \subseteq \cU\setminus \cU_j$, and note that it is a hitting set for $\cF'=\cF - \cU_j$ of size at most $k$.
	Also, for every $B'_i \in \cB'=\cB - \cU_j$, we have  $|X \cap B'_i | \le f(B_i)$.
	Hence, $X$ is a fair hitting set for $\cF'$.
	
	In the reverse direction, suppose $X \subseteq \cU \setminus \cU_j$ is a fair hitting set for $\cF'$ of size at most $k$.
	We claim that $X$ is also a fair hitting set for $\cF$ of size at most $k$.
Since set $X$ hits $\cF'$ and there is a bijection between the elements of set $\cF'$ and $\cF$, therefore, $X$ also hits $\cF$. Again, since there is a bijection between the elements of $\cB'$ and $\cB$ and $f's$ value remain unchanged, it follows that for every $B_i\in \cB$, we have $|X \cap B_i | \leq f(B_i)$. Hence, $X$ is also a fair hitting set for $\cF$.
	\end{proof}

Thus, our algorithm uses \Cref{lem:apexminorhasboundedtreewidth} to compute a $g$-treewidth bounded partition of $\cU$. Note that for any $j \in [k+1]$, the treewidth of graph $G[(\cU \setminus \cU_j) \cup \cB]$ is upper bounded by $g(k)$. Furthermore, using \Cref{thm:deletionofpartition}, it follows that $\cI$ is a yes-instance iff there exists a $j \in [k+1]$ such that $\cI_j$ is a yes-instance of \chs. Thus, the algorithm uses \Cref{thm:tw-fhs} to solve each instance $\cI_j$, and if any such instance is a yes-instance, it concludes that $\cI$ is a yes-instance. Otherwise, the algorithm concludes that $\cI$ is a no-instance. Thus, we obtain the following theorem.

\begin{restatable}{theorem}{apexm}\label{thm:tw-apex-minorfree}
	Given an instance $\cI = (\cU, \cF, \cB, f, k)$ such that $G_{\cU, \cB}$ belongs to a family of apex-minor free graphs, and $G_{\cU, \cF}$ is $K_{d, d}$-free, there exists an $g(k, d) \cdot |\cI|^{\Oh(1)}$ time algorithm to decide whether $\cI$ is a yes-instance of \chs.
\end{restatable}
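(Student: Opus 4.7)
\textbf{Proof plan for \Cref{thm:tw-apex-minorfree}.} My plan is to compose the three ingredients already assembled in the paper: (i) the reduction \Cref{thm:kdd-free} that trades the $K_{d,d}$-freeness of $G_{\cU,\cF}$ for bounded frequency of elements in $\cF$, (ii) the Baker-style layering argument (\Cref{lem:apexminorhasboundedtreewidth} and \Cref{thm:deletionofpartition}) that, given an apex-minor-free $G_{\cU,\cB}$, produces a partition $\cU_1,\dots,\cU_{k+1}$ of $\cU$ such that deleting any single part leaves a graph of treewidth at most $g(k)$, and (iii) the treewidth DP \Cref{thm:tw-fhs} that solves \chs in time $k^{\Oh(\tw)} \cdot 2^{\Oh(|\cF|)} \cdot |\cI|^{\Oh(1)}$.

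\textbf{Structure of the algorithm.} I would define an auxiliary algorithm $\mathcal{A}$ that handles the ``bounded-frequency-in-$\cF$'' case with apex-minor-free $G_{\cU,\cB}$, and then invoke \Cref{thm:kdd-free} with $\Pi$ set to ``apex-minor-free''. On an input $\cI'=(\cU',\cF',\cB',f',k')$ in which every element of $\cU'$ lies in at most $\Delta = dk^{d-1}$ sets of $\cF'$, the algorithm $\mathcal{A}$ first observes $|\cF'| \le \Delta \cdot k' = \Oh(dk^d)$ (otherwise we output no, since even a full hitting set of size $k'$ could not cover $\cF'$). Next, it applies \Cref{lem:apexminorhasboundedtreewidth} to $G_{\cU',\cB'}$ to obtain, in polynomial time, a $g$-treewidth bounded partition $\{\cU'_1,\dots,\cU'_{k+1}\}$ together with tree decompositions of width at most $g(k)$ for each graph $G[(\cU' \setminus \cU'_j) \cup \cB']$. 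Then, for each $j \in [k+1]$, it forms the reduced instance $\cI'_j=(\cU'\setminus\cU'_j,\ \cF'-\cU'_j,\ \cB'-\cU'_j,\ f',\ k')$ and solves it using \Cref{thm:tw-fhs}; by \Cref{thm:deletionofpartition}, the instance $\cI'$ is a yes-instance iff at least one $\cI'_j$ is.

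\textbf{Running time and correctness.} The work at a leaf is $(k+1) \cdot k^{\Oh(g(k))} \cdot 2^{\Oh(dk^d)} \cdot |\cI|^{\Oh(1)}$, which is $h(k,d) \cdot |\cI|^{\Oh(1)}$ for a suitable function $h$. Feeding this $\mathcal{A}$ into \Cref{thm:kdd-free} inflates the bound by a branching factor of $(d-1)^k$, still giving an FPT running time $g(k,d) \cdot |\cI|^{\Oh(1)}$ as claimed. Correctness follows directly from \Cref{thm:kdd-free}, \Cref{thm:deletionofpartition}, and \Cref{thm:tw-fhs}, once we justify the hypothesis that the class of apex-minor-free graphs is hereditary under the reductions performed: both the branching step of \Cref{thm:kdd-free} (which only deletes vertices and incident edges from $G_{\cU,\cB}$) and the passage to $G[(\cU' \setminus \cU'_j) \cup \cB']$ inside $\mathcal{A}$ are subgraph operations, which preserve apex-minor-freeness. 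The one mildly technical point I expect to be the main obstacle is precisely this bookkeeping: verifying that the property ``$G_{\cU,\cB}$ is in a fixed apex-minor-closed class $\cH$'' qualifies as the hereditary property $\Pi$ demanded by \Cref{thm:kdd-free} (so that after branching we may still invoke $\mathcal{A}$), and making sure the BFS-layer partition produced by the proof of \Cref{lem:apexminorhasboundedtreewidth} actually restricts correctly after arbitrary subgraph deletions at the leaves of the branching tree; both checks are routine but must be stated explicitly to make the composition clean.
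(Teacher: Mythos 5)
Your proposal is correct and follows essentially the same route as the paper: reduce $K_{d,d}$-freeness of $G_{\cU,\cF}$ to bounded frequency via \Cref{thm:kdd-free} (with $\Pi$ the apex-minor-free property, which is minor-closed and hence hereditary), and solve the resulting bounded-$|\cF|$ instances by the Baker-style layer deletion of \Cref{lem:apexminorhasboundedtreewidth} and \Cref{thm:deletionofpartition} combined with the treewidth dynamic program of \Cref{thm:tw-fhs}. The one worry you flag about the BFS-layer partition surviving the branching is a non-issue: $\mathcal{A}$ recomputes the $g$-treewidth-bounded partition from scratch at each leaf, needing only that $G_{\cU',\cB'}$ is still in the apex-minor-closed class, which holds since branching performs vertex deletions.
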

\begin{remark}
	Note that if $\cH$ is the family of planar graphs, then the function $g(k)$ from \Cref{lem:apexminorhasboundedtreewidth} is $\Oh(k)$. Thus, in this case, the running time of the algorithm is $k^{\Oh(k)} \cdot 2^{|\cF|} \cdot |\cI|^{\Oh(1)}$.
\end{remark}
\section{Lower Bounds} \label{sec:lb}

First, we observe that \textsc{Hitting Set} is a special case of \chs, when $\cB = \emptyset$. Thus, the problem is \np-hard, and W$[1]$-hard parameterized by the solution size $k$. 

In the following two subsections, we show two lower bound results for \chs that hold in special cases of interest. First, in \Cref{subsec:nphard}, we show that \chs remains \np-hard even in a very special case. Next, in \Cref{subsec:w1hard}, we show that \chs remains W$[1]$-hard, again, in a very special case.

\subsection{$\np$-hardness for a Special Case of \chs} \label{subsec:nphard}

In this section, we show that \chs remains \np-hard even in the special case when the sets in $\cF$ are disjoint, and each element of the universe is contained in at most $2$ sets of $\cB$. To this end, we reduce from the following problem, shown to be \np-hard in \cite{BandyapadhyayFIS23}.

\begin{tcolorbox}[colback=white!5!white,colframe=gray!75!black]
	\textsc{Exact Rainbow Matching}
	\\\textbf{Input.} A proper edge-colored graph $G = (V, E)$ on $n$ vertices, where each $uv \in E(G)$ is colored with a color from $\{1, \ldots, q\}$. 
	\\\textbf{Question.} Does there exist a matching $M \subseteq E(P)$ such that $M$ contains \emph{exactly} one edge of each color $c$, $1 \le c \le q$?
\end{tcolorbox}

\begin{proposition}{(Theorem 5 from \cite{BandyapadhyayFIS23})}
	\textsc{Exact Rainbow Matching} is \np-complete even on $n$-vertex properly edge-colored paths, such that the number of colors is $\Theta(n)$. Furthermore, assuming ETH, the problem does not admit a $2^{o(n)}$ algorithm in the same setting.
\end{proposition}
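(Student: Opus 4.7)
The plan is to give a polynomial-time many-one reduction from a known \np-hard problem to \textsc{Exact Rainbow Matching} on properly edge-colored paths, where the reduction blows up the instance size only by a constant factor. Combined with ETH for the source problem, this simultaneously yields \np-hardness and the $2^{o(n)}$ lower bound. A natural source is \textsc{3-SAT}, or better, an ``exactness-flavoured'' variant such as \textsc{1-in-3-SAT} or \textsc{Exact Cover by 3-Sets}, which under ETH do not admit $2^{o(n+m)}$ algorithms on formulas with $n$ variables and $m$ clauses. Since the target parameters (path length and number of colors) must be $\Theta(n)$, it is convenient to start from a sparse variant in which $m = \Oh(n)$, which is standard via the sparsification lemma.

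First, I would lay out a single long path whose edges are partitioned into variable gadgets and clause gadgets placed consecutively, separated by short buffer edges colored with fresh ``padding'' colors that are forced into the matching and serve only to align parity between gadgets. For each variable $x_i$, introduce a constant-size subpath with a new color $c_i$ shared between a ``true edge'' and a ``false edge''; the rainbow condition (exactly one edge of color $c_i$ is selected) together with the matching condition forces a binary choice encoding the truth assignment of $x_i$. For each clause $C_j$, introduce a short subpath with a clause color $c_j$ and three literal positions, one per literal; the quota on $c_j$ requires picking exactly one of these literal edges, representing a choice of satisfying literal for $C_j$.

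The hard part will be enforcing global consistency along the one-dimensional structure: the only local interaction between picked edges is the ``no two consecutive'' matching constraint, so a distant variable gadget cannot directly constrain a clause gadget by adjacency alone. The key trick is to use the rainbow (exactly-one-per-color) condition itself as the long-range linking mechanism: I would reuse a ``literal color'' $\lambda_i^+$ across the positive occurrences of $x_i$ and the ``true'' side of its variable gadget, and a complementary color $\lambda_i^-$ across the negative occurrences and the ``false'' side, so that selecting one literal edge consumes the unique color slot and blocks the complementary selections elsewhere. The main technical work will be ensuring that (i) the interleaved arrangement of gadgets admits a proper edge-coloring along a single path, (ii) the forced padding edges exist and do not conflict with gadget colors, and (iii) the constraints are not simultaneously too tight (making all instances trivially \textsc{No}) nor too loose (breaking the encoding). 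Since each gadget has constant size and introduces $\Oh(1)$ fresh colors, the resulting path has $\Theta(n+m)$ vertices and $\Theta(n+m) = \Theta(n)$ colors, so a hypothetical $2^{o(n)}$ algorithm for \textsc{Exact Rainbow Matching} on paths would contradict ETH through the source problem.
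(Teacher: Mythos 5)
The paper does not prove this proposition: it is quoted verbatim as Theorem 5 of \cite{BandyapadhyayFIS23} and used as a black box for the \np-hardness reduction in \Cref{subsec:nphard}. So there is no in-paper proof to compare against, and the only thing to assess is whether your sketch could stand on its own as a reduction.

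As written, the sketch has a concrete flaw in its central mechanism. You propose to reuse a single literal color $\lambda_i^+$ across all positive occurrences of $x_i$ \emph{and} the ``true'' side of the variable gadget, arguing that ``selecting one literal edge consumes the unique color slot and blocks the complementary selections elsewhere.'' But the constraint in \textsc{Exact Rainbow Matching} is \emph{exactly} one edge per color. If $x_i$ occurs positively in two or more clauses, setting $x_i$ to true should be allowed to satisfy all of those clauses, yet your design permits at most one edge of color $\lambda_i^+$ in the matching, so at most one such clause can be credited to $x_i$ and the rest are starved. In the other direction, ``exactly one, not at most one'' forces the $\lambda_i^+$ quota to be met even when $x_i$ is set false, so an unwanted occurrence edge can be forced into the matching, breaking the encoding. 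Neither direction of the claimed equivalence goes through with shared literal colors; this is precisely the long-range consistency obstacle you flag as the ``hard part,'' and the mechanism you offer does not resolve it. Repairing it would require either per-occurrence fresh colors chained by a consistency gadget inside each variable block (which blows up the gadget beyond constant size and needs a separate argument that it still embeds into a single properly edge-colored path with the forced padding edges), or a different source problem whose structure is natively one-dimensional. The remaining items you defer to ``technical work'' --- achieving a proper edge-coloring on a single path while interleaving gadget and padding colors, and making the quotas neither too tight nor too loose --- are also nontrivial on a path, where adjacency is scarce. Until these are actually carried out, the sketch does not establish the proposition.
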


\medskip\noindent\textbf{Reduction.} Let $P$ be a path that is input to \textsc{Fair Matching on a Path}. For $1 \le c \le q$, let $E_c$ denote the set of edges colored with $c$. Note that $\{E_1, E_2, \ldots, E_q\}$ is a partition of $E(P)$, and we can assume that each $E_c$ is non-empty (otherwise we can immediately conclude that it is a no-instance). Let us also assume that the vertices of the path are numbered from left to right as $v_1, v_2, \ldots, v_n$, and $e_i = v_{i}v_{i+1} \in E(P)$ for $1 \le i \le n-1$. 

Now we construct an equivalent instance of \chs as follows. For each edge $e_i$, we add an element $u_i$ to the universe. Thus, $\cU= \{u_1, \ldots, u_{n-1}\}$. For each color $1 \le c \le q$, we add a set $F_c$ that contains the elements corresponding to the edges in $E_c$. For each $1 \le i \le n-1$, we add a set $B_i = \{u_i, u_{i+1}\}$ to $\cB$, and set $f(B_i) = 1$ for all $B_i \in \cB$. Finally, we set $k = q$, and let $\cI = (\cU, \cF, \cB, f, k)$ be the resulting instance of \chs. Note that the sets in $\cF$ are disjoint, and the frequency of any element of $\cU$ in the sets of $\cB$ is at most two. 

\medskip\noindent\textbf{Equivalence.} In the forward direction, suppose $P$ admits a matching $M \subseteq E(P)$ that contains exactly one edge of each color. Then, let $S \subseteq \cU$ be the set of elements corresponding to the edges of $M$. We claim that $S$ is a \chs for $\cF$ of size $k = q$. First, since $M$ contains exactly one edge of each color, $|S| = |M| = q = k$, and it also follows that $S$ hits every set in $\cF$. Finally, since $M$ is a matching, it does not contain two edges incident to the same vertex. It follows that $|S \cap B_i| \le 1$ for every $B_i \in \cB$. This completes the forward direction.

In the reverse direction, let $S$ be a fair hitting set for $\cF$ of size at most $k$, and let $M$ be the set of edges corresponding to the elements of $S$. Since $S$ hits every set in $\cF$, $M$ contains exactly one edge of each color. Furthermore, since $S$ hits every set in $\cF$, $|S| = |\cF| = k$, which means that $M$ contains \emph{exactly} one edge of each color. Finally, since $|S \cap B_i| \le 1 = f(B_i)$ for every $B_i \in \cB$, it follows that $M$ does not contain two edges incident to a particular vertex, which means that $M$ is a matching. Thus, we conclude with the following theorem.

\begin{theorem} \label{thm:nph}
	\chs is \np-hard, and does not admit a $2^{o(n)}$ time algorithm, even in the following special case, when the instance $\cI = (\cU, \cF, \cB, f, k)$ satisfies following properties.
	\begin{itemize}
		\item The sets in $\cF$ are pairwise disjoint,
		\item The sets in $\cB$ have size exactly $2$,
		\item Each $u \in \cU$ belongs to at most $2$ different sets of $\cB$,
		\item The incidence graph $G_{\cU, \cB}$ is a path, and
		\item $|\cB| = n-2, |\cU| = n-1,$ and  $|\cF| = k = \Theta(n)$.
	\end{itemize}
\end{theorem}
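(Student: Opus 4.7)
The plan is to obtain both the \np-hardness and the ETH lower bound through a single linear-size reduction from \textsc{Exact Rainbow Matching} on properly edge-colored paths, which the quoted result of \cite{BandyapadhyayFIS23} rules out in time $2^{o(n)}$. Since the target special case is quite restrictive (disjoint $\cF$, $\cB$ of size exactly $2$, frequency $\le 2$ in $\cB$, and $G_{\cU,\cB}$ a path), the reduction should be very local: each edge of the input path becomes a universe element, each color class becomes a set of $\cF$, and each \emph{internal} vertex of the path becomes a set of $\cB$ of size $2$ with $f$-value $1$ enforcing the matching condition at that vertex.

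Concretely, starting from a properly edge-colored path $P$ with vertices $v_1,\dots,v_n$, edges $e_i = v_iv_{i+1}$, and color classes $E_1,\dots,E_q$, I would set $\cU = \{u_1,\dots,u_{n-1}\}$ with $u_i \leftrightarrow e_i$, put $F_c = \{u_i : e_i \in E_c\}$ for $c \in [q]$ (so $\cF$ is a partition of $\cU$ and hence the sets of $\cF$ are pairwise disjoint), and for every internal vertex $v_{i+1}$ add the conflict set $B_i = \{u_i, u_{i+1}\}$ with $f(B_i) = 1$. Finally, set $k = q$. This immediately yields the claimed parameter counts ($|\cB| = n-2$, $|\cU| = n-1$, $|\cF| = k = \Theta(n)$ by the cited proposition), each element lies in at most two $B_i$'s (its two possibly-incident internal vertices), and $G_{\cU,\cB}$ is literally the subdivision of a path, which is a path $u_1,B_1,u_2,B_2,\dots,B_{n-2},u_{n-1}$.

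For equivalence, in the forward direction a rainbow matching $M$ picks exactly one edge per color, so the corresponding $S \subseteq \cU$ satisfies $|S| = q = k$, hits every $F_c$, and the matching property says no two edges share an internal vertex, i.e., $|S \cap B_i| \le 1 = f(B_i)$ for every $i$; so $S$ is a fair hitting set. Conversely, from any fair hitting set $S$ of size $\le k$, the disjointness of $\cF$ forces $|S| = q$ with exactly one element per $F_c$, so the corresponding edge set $M$ is rainbow of size $q$; the $f$-constraints on the $B_i$'s then preclude two consecutive edges, making $M$ a matching.

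The only mildly delicate step is verifying that each asserted structural property of the target special case holds simultaneously — in particular that the $B_i$'s really give a \emph{path} incidence graph rather than something slightly more complicated — but since each $u_i$ appears only in the (up to) two conflict sets $B_{i-1}$ and $B_i$, and each $B_i$ has exactly the two neighbors $u_i,u_{i+1}$, this is immediate. The ETH lower bound then transfers without change: a $2^{o(n)}$-time algorithm for \chs on such instances would, composed with this linear-time reduction, yield a $2^{o(n)}$-time algorithm for \textsc{Exact Rainbow Matching} on paths with $\Theta(n)$ colors, contradicting the cited proposition.
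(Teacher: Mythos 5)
Your reduction is identical to the paper's: universe elements for path edges, $\cF$-sets for color classes, $\cB$-pairs $\{u_i,u_{i+1}\}$ at internal vertices with $f \equiv 1$, and $k=q$, with the same forward/backward equivalence argument. (If anything, your indexing of the $B_i$'s over $i \in [n-2]$ corrects a small off-by-one slip in the paper's statement of the construction, and your explicit verification that $G_{\cU,\cB}$ is the path $u_1,B_1,u_2,\dots,B_{n-2},u_{n-1}$ is a welcome addition.)
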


\subsection{$\textup{W}[1]$-hardness of \chs} \label{subsec:w1hard}
In this section, we show that \chs is W[1]-hard parameterized by $k$. This result holds when the the family $\cF$ is disjoint, but the sets in $\cB$ can intersect arbitrarily. 

We establish the result for the fixed-parameter intractability of $\chs$ assuming sets in $\cB$ are arbitrary.
For this, we give a parameter-preserving reduction from the $W[1]$-complete $\kmci$ to $\chs$ parameterized by $k$.
In \kmci, given an undirected graph $G$, an integer $k$, and $(V_1 \ldots,V_k)$ a partition of $V(G)$.
Decide whether $G$ has a set $I \subseteq V (G)$ such that $|I \cap V_i|=1$ for every $i \in  [k]$, and $G[I]$ is edgeless.

Formally, we prove the following result.

\begin{theorem} \label{thm:w1hard}
	$\chs$ is \textup{W}$[1]$-hard when parameterized by $k$, assuming the sets in $\cB$ are arbitrary.
\end{theorem}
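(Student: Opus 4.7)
The plan is to give a straightforward parameter-preserving reduction from \kmci to \chs, using the fact that \kmci is W$[1]$-hard parameterized by $k$. Given an instance $(G, k, V_1 \uplus \cdots \uplus V_k)$ of \kmci, I will construct an equivalent instance $\cI = (\cU, \cF, \cB, f, k)$ of \chs as follows: set $\cU \coloneqq V(G)$; for each color class, add the set $F_i \coloneqq V_i$ to $\cF$; and for each edge $e = \{u,v\} \in E(G)$, add the set $B_e \coloneqq \{u,v\}$ to $\cB$ with $f(B_e) \coloneqq 1$. The solution-size parameter is preserved at $k$.

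For the equivalence, in the forward direction, a $k$-multicolored independent set $I = \{v_1, \ldots, v_k\}$ with $v_i \in V_i$ yields $S \coloneqq I$, which has size $k$, hits each $F_i = V_i$, and respects every fairness constraint $|S \cap B_e| \le 1$ because $I$ is independent in $G$. In the reverse direction, let $S$ be a fair hitting set for $\cF$ of size at most $k$. Since $F_1, \ldots, F_k$ are pairwise disjoint (they form a partition of $\cU$) and $S$ must hit each of them, we get $|S| = k$ and exactly one element $v_i$ from each $V_i$. The constraint $|S \cap B_e| \le 1$ for every edge $e$ then says that no edge of $G$ has both endpoints in $S$, so $S$ is an independent set, and hence is a multicolored independent set of size $k$.

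Since the reduction runs in polynomial time and preserves the parameter $k$ exactly, W$[1]$-hardness of \kmci transfers to \chs, establishing the theorem. I note that the construction additionally witnesses the stronger form claimed in \Cref{intro:themHard}: the sets in $\cF$ are pairwise disjoint (as they form a partition of $\cU$), and each set in $\cB$ has size exactly $2$. There is no real technical obstacle here; the only point worth highlighting is that the fairness function $f \equiv 1$ on pairs is precisely the right combinatorial encoding of the independence constraint, which is what makes the natural reduction go through without modification.
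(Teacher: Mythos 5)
Your proposal is correct and is essentially identical to the paper's own proof: the same reduction from \kmci (color classes become $\cF$, edges become size-$2$ sets in $\cB$ with fairness bound $1$), with the same forward and reverse equivalence arguments. The only difference is cosmetic notation ($\cU = V(G)$ directly rather than introducing elements $e(u)$).
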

\begin{proof}
	Let $(G, k,V_1 \ldots V_k)$ be an instance of $\kmci$.
	We construct an instance $(\cU, \cF, \cB, f, k)$ as follows.
	\begin{itemize}
		\item For every vertex $v \in V(G)$, we add an element $e(u)$ to  $\cU$.
		\item For every $1\leq i \leq k$, we make the set $F_i=\{e(u) : u \in V_i\}$. Also set $\cF:=\{F_1,\ldots,F_k\}$.
		\item For every edge $\{u,v\} \in E(G)$, we introduce a set $V_{\{u,v\}}:=\{e(u),e(v)\}$.
		Set $\cB :=\cup_{ \{u,v\} \in E(G)}^{}V_{\{u,v\}}$ and $f(V_{\{u,v\}})=1$.
		
	\end{itemize}
	We claim that $G$ has an independent set with exactly one vertex from each $V_i$ if and only if $\cU$ has a fair hitting set for $\cF$ of size $k$.
	
	In the forward direction, suppose there is an independent set $I$ in $G$ with $|I \cap V_1|=|I \cap V_2|= \ldots =|I \cap V_k|=1$
that is, one vertex from each $V_i$. 
We show that $I$ is a fair hitting set for  $\cF$.
	Clearly, $I \cap F_i \neq \emptyset$ for each $i \in \{1,\ldots,k\}$, as $I \cap V_i \neq \emptyset$.
	Moreover, $|I \cap F_i|=|I \cap V_i|=1$ and $|\cF|=k$,
	thus, $I$ is a hitting set for $\cF$ of size $k$.
	We next claim that for every set $V_{\{u,v\}} \in \cB$,  $|I \cap V_{\{u,v\}}| \leq f(V_{\{u,v\}})=1$.
	Assume to the contrary, that there exists a set $V_{\{u',v'\}} \in \cB$ such that  $|I \cap V_{\{u',v'\}}| > 1$. 
	This implies that $I$ contains both the endpoints $u'$ and $v'$ of an edge, contradicts the assumption that $I$ is an independent set in $G$.
	
	In the reverse direction, suppose $S \subseteq \cU$ is a fair hitting set for $\cF$ of size $k$.
	Denote by $S'$ the set of vertices in $G$ corresponding to the elements of $S$ in $\cU$.	
	First, as $|\cF|=k$ and $S \cap F_i \neq \emptyset$, for each $i \in \{1,\ldots, k\}$ implies that $|S' \cap V_i|=1$.
	We now claim that $S'$  forms an independent set in $G$.
	Suppose not, this means there exists an edge $\{u,v\} \in E(G)$ such that $|S' \cap \{u,v\}|=2$.
	This implies that $|S \cap V_{\{u,v\}}|=|S \cap \{e(u),e(v)\}|=2 >1$, contradicting the assumption that $|S \cap V_{\{u,v\}}| \leq f(V_{\{u,v\}})=1$, for all $\{u,v\} \in E(G)$.
\end{proof}

Note that it is necessary in the construction that the $B_i \in \cB$ can intersect arbitrarily. Nevertheless, we observe that in the constructed instance of \chs, it holds that the sets in $\cF$ are disjoint, and $|B_i| = 2$ for all $B_i \in \cB$, which also implies that the graph $G_{\cU, \cB}$ is $K_{2, 2}$-free, and even $2$-degenerate. This shows the limit of fixed-parameter tractability of \chs.

\bibliography{ref}

\begin{thebibliography}{10}

\bibitem{BandyapadhyayFIS23}
{\sc S.~Bandyapadhyay, F.~V. Fomin, T.~Inamdar, and K.~Simonov}, {\em
  Proportionally fair matching with multiple groups}, 2023.

\bibitem{BlumDFGP22}
{\sc J.~Blum, Y.~Disser, A.~E. Feldmann, S.~Gupta, and A.~Zych{-}Pawlewicz},
  {\em On sparse hitting sets: from fair vertex cover to highway dimension},
  CoRR, abs/2208.14132 (2022).

\bibitem{CyganFKLMPPS15}
{\sc M.~Cygan, F.~V. Fomin, L.~Kowalik, D.~Lokshtanov, D.~Marx, M.~Pilipczuk,
  M.~Pilipczuk, and S.~Saurabh}, {\em Parameterized Algorithms}, Springer,
  2015.

\bibitem{DiestelGT}
{\sc R.~Diestel}, {\em Graph Theory, 4th Edition}, vol.~173 of Graduate texts
  in mathematics, Springer, 2012.

\bibitem{Eppstein2000}
{\sc D.~Eppstein}, {\em Diameter and treewidth in minor-closed graph families},
  Algorithmica, 27 (2000), pp.~275--291.

\bibitem{FominLPS16}
{\sc F.~V. Fomin, D.~Lokshtanov, F.~Panolan, and S.~Saurabh}, {\em Efficient
  computation of representative families with applications in parameterized and
  exact algorithms}, J. {ACM}, 63 (2016), pp.~29:1--29:60.

\bibitem{GroheKS17}
{\sc M.~Grohe, S.~Kreutzer, and S.~Siebertz}, {\em Deciding first-order
  properties of nowhere dense graphs}, J. {ACM}, 64 (2017), pp.~17:1--17:32.

\bibitem{JainKM20}
{\sc P.~Jain, L.~Kanesh, and P.~Misra}, {\em Conflict free version of covering
  problems on graphs: Classical and parameterized}, Theory Comput. Syst., 64
  (2020), pp.~1067--1093.

\bibitem{karp1972}
{\sc R.~M. Karp}, {\em Reducibility among combinatorial problems}, in
  Complexity of computer computations, Springer, 1972, pp.~85--103.

\bibitem{KnopMT19}
{\sc D.~Knop, T.~Masar{\'{\i}}k, and T.~Toufar}, {\em Parameterized complexity
  of fair vertex evaluation problems}, in 44th International Symposium on
  Mathematical Foundations of Computer Science, {MFCS} 2019, August 26-30,
  2019, Aachen, Germany, P.~Rossmanith, P.~Heggernes, and J.~Katoen, eds.,
  vol.~138 of LIPIcs, Schloss Dagstuhl - Leibniz-Zentrum f{\"{u}}r Informatik,
  2019, pp.~33:1--33:16.

\bibitem{LinS89a}
{\sc L.~Lin and S.~Sahni}, {\em Fair edge deletion problems}, {IEEE} Trans.
  Computers, 38 (1989), pp.~756--761.

\bibitem{LokshtanovMPS18}
{\sc D.~Lokshtanov, P.~Misra, F.~Panolan, and S.~Saurabh}, {\em Deterministic
  truncation of linear matroids}, {ACM} Trans. Algorithms, 14 (2018),
  pp.~14:1--14:20.

\bibitem{LokshtanovPSSZ20}
{\sc D.~Lokshtanov, F.~Panolan, S.~Saurabh, R.~Sharma, and M.~Zehavi}, {\em
  Covering small independent sets and separators with applications to
  parameterized algorithms}, {ACM} Trans. Algorithms, 16 (2020),
  pp.~32:1--32:31.

\bibitem{Marx09}
{\sc D.~Marx}, {\em A parameterized view on matroid optimization problems},
  Theor. Comput. Sci., 410 (2009), pp.~4471--4479.

\bibitem{MasarikT20}
{\sc T.~Masar{\'{\i}}k and T.~Toufar}, {\em Parameterized complexity of fair
  deletion problems}, Discret. Appl. Math., 278 (2020), pp.~51--61.

\end{thebibliography}

\end{document}